\theoremstyle{plain}
\newtheorem{theorem}{Theorem}[section]
\newtheorem{lemma}[theorem]{Lemma}
\newtheorem{corollary}[theorem]{Corollary}
\theoremstyle{definition}
\newtheorem{definition}[theorem]{Definition}
\newtheorem{assumption}[theorem]{Assumption}
\theoremstyle{remark}
\author{%
  Konstantinos Ameranis \\
  Department of Computer Science\\
  University of Chicago\\
  Chicago, IL 60637 \\
  \texttt{ameranis@uchicago.edu} \\
  \AND
  Antares Chen \\
  Department of Computer Science\\
  University of Chicago\\
  Chicago, IL 60637 \\
  \texttt{achen@uchicago.edu}\\
  \And
  Adela Frances DePavia\\
  Department of Computer Science\\
  University of Chicago\\
  Chicago, IL 60637 \\
  \texttt{@uchicago.edu}
  \And
  Lorenzo Orecchia\\
  Department of Computer Science\\
  University of Chicago\\
  Chicago, IL 60637 \\
  \texttt{orecchia@uchicago.edu}
  \And
  Erasmo Tani\\
  Department of Computer Science\\
  University of Chicago\\
  Chicago, IL 60637 \\
  \texttt{etani@uchicago.edu}
}
\title{Hypergraph Diffusions and Resolvents\\ for Norm-Based Hypergraph Laplacians}
\begin{document}
\maketitle

\begin{abstract}
The development of simple and fast hypergraph spectral methods has been hindered by the lack of numerical algorithms for simulating heat diffusions and computing fundamental objects, such as Personalized PageRank vectors, over hypergraphs.
In this paper, we overcome this challenge by designing two novel algorithmic primitives. The first is a simple, easy-to-compute {\it discrete-time heat diffusion} that enjoys the same favorable properties as the discrete-time heat diffusion over graphs. This diffusion can be directly applied to speed up existing hypergraph partitioning algorithms. 

Our second contribution is the novel application of mirror descent to \textit{compute resolvents} of non-differentiable squared norms, which we believe to be of independent interest beyond hypergraph problems. Based on this new primitive, we derive the first  nearly-linear-time algorithm that simulates the discrete-time heat diffusion to approximately compute resolvents of the hypergraph Laplacian operator, which include Personalized PageRank vectors and solutions to the hypergraph analogue of Laplacian systems. Our algorithm runs in time that is linear in the size of the hypergraph and inversely proportional to the hypergraph spectral gap $\lambda_G$, matching the complexity of analogous diffusion-based algorithms for the graph version of the problem.
\end{abstract}

\section{Introduction}\label{sec:introduction}

Spectral graph methods are a fundamental tool in many machine learning tasks, including manifold learning~\cite{belkin2001laplacian, belkin2003laplacian, donoho2003hessian,li2019survey, talmon2013diffusion,yan2006graph}, clustering~\cite{kannan2004clusterings,ng2001spectral, shi2000normalized,spielman1996spectral, von2007tutorial} and network analysis~\cite{andersen2006communities,newman2004finding}. 
Within theoretical computer science, they provide very simple and fast algorithms for approximating the graph conductance $\phi_G$ of a graph $G$, a canonical graph partitioning task and a primitive subproblem in many divide-and-conquer approaches to graph algorithms~\cite{shmoys1996, vishnoi2013lx}.

Conceptually, spectral methods extract structural information about a weighted undirected graph $G=(V,E, \vw \in \R^E_{\geq 0})$ from its combinatorial Laplacian matrix $\mL$ by
studying the behavior of the heat diffusion dynamics and the related potential
$
\vx^T \mL \vx = \sum_{ij \in E} w_{ij} (\vx_i - \vx_j)^2,
$
which measures the local variance of the vector $\vx$ over $G$.
The worst-case rate of convergence of this heat diffusion to the constant vector is $(1 - \nicefrac{\lambda_G}{2})$, where $\lambda_G$ is the spectral gap of $G.$ The celebrated Cheeger's inequality~\cite{alon1984eigenvalues} connects $\lambda_G$ to the graph conductance $\phi_G$, as $\Omega(\lambda_G) \leq \phi_G \leq O(\sqrt{\lambda_G}).$ 

These elegant mathematical relations have striking algorithmic consequences in practice, as they imply that a vector $\vx$ from which the heat diffusion converges slowly can be rounded to a low conductance cut in $G$ ~\cite{andersen2006local,spielman2013local}.
More generally, the simplest spectral algorithms~\cite{vishnoi2013lx} merely simulate discrete-time heat diffusions to approximate fundamental quantities, such as the graph conductance, the effective resistance between two vertices or a personalized PageRank vector for a given seed~\cite{andersen2006local}.
The reader is referred to Appendix~\ref{sec:graph_case} for a brief review of spectral graph theory, including the relation between the heat diffusion dynamics and the natural random walk over graphs.

\paragraph{From graphs to hypergraphs}
Increasing complexity in real-world data has led to the need to model higher order relationships~\cite{ agarwalHigherOrderLearning2006, bensonHigherorderOrganizationComplex2016, catalyurekHypergraphpartitioningbasedDecompositionParallel1999,tsourakakisScalableMotifawareGraph2017}, such as co-authorship and community membership in social networks. This engendered an increased interest in hypergraph models and in transferring the powerful spectral paradigm to the hypergraph setting. 
In contrast to graphs, hypergraphs admit many choices of natural potentials and partitioning objectives \cite{veldt2022hypergraph}. In this paper, we take as a starting point the model of \citet{li2018submodular}, who provide a comprehensive framework to study hypergraph partitioning and associated potentials by considering  symmetric submodular hyperedge cut functions.

The deployment of simple and efficient hypergraph spectral algorithms  has encountered a number of challenges. In particular, the hypergraph spectral gap $\lambda_G$ is NP-hard to compute in general, so that existing methods only yield poly-logarithmic approximations based on solving semidefinite relaxations by generic convex solvers~\cite{chan2018spectral, li2018submodular, yoshida2019cheeger}. 

In contrast, polynomial-time algorithms for computing hypergraph Personalized PageRank (PPR) vectors~\cite{liu2021strongly,takai2020hypergraph}, the hypergraph generalization of graph PPR vectors~\cite{andersen2006local}, and for solving hypergraph Laplacian systems~\cite{fujiiPolynomialtimeAlgorithmsSubmodular2021} do exist. Unfortunately, all these works only provide polynomial-time or asymptotic guarantees on the convergence of their algorithms. Prior to our work, the largest obstacle in simplifying and accelerating these methods was the lack of a simple discrete-time heat-diffusion, which only exists for graphs. This led previous research endeavors to resort to either producing numerically cumbersome discretizations of the continuous-time heat diffusion~\cite{ikedaFindingCheegerCuts2019, liu2021strongly} or using generic convex solvers~\cite{takai2020hypergraph}.

\paragraph{This paper} 
We introduce a simple, efficiently computable discrete-time heat diffusion for the general class of hypergraphs described by \citet{li2018submodular} and show that it enjoys many of the favorable properties of the commonly-used graph heat diffusion.
Further, we leverage the form of our heat diffusion and novel algorithmic insights to approximately compute hypergraph PPR vectors and solve hypergraph Laplacian systems in time comparable to that required by classical algorithms for the graph analogues of these problems.
We describe our technical contributions in detail in Section~\ref{sec:contributions}, after introducing the relevant mathematical background. 

\paragraph{Other related work on hypergraph diffusions} A number of recent works pursue local hypergraph partitioning by flow-based algorithms~\cite{fountoulakisLocalHyperFlowDiffusion2021,ibrahimLocalHypergraphClustering2020,veldt2020minimizing}. Their approach requires solving local variants of the hypergraph $s$-$t$ maximum flow problem~\cite{lawlerComputingMaximalPolymatroidal1982} and is unrelated to the hypergraph spectral approach, which is based on the quadratic potentials in Equation~\ref{eq:Li-Milenkovic-potential}.
More relatedly, Macgregor and Sun~\cite{macgregorFindingBipartiteComponents2021} propose a continuous-time process to approximate a hypergraph analogue of the maximum cut problem. %
In contrast with our work, their paper does not provide an efficient way to simulate their dynamics and only yields polynomial-time algorithms.

\section{Background: spectral hypergraph theory}

For mathematical notation, definitions and background, see Section~\ref{sec:appendix-preliminaries} of the supplementary material. A weighted hypergraph $G=(V,E,\vw)$ is a collection of vertices $V$ and hyperedges $E \subseteq 2^V$ with non-negative hyperedge weights $\vw \in \R^{E}_{\geq 0}.$ 
The degree $\deg(i) \defeq \sum_{h \ni i} w_h$ of a vertex $i \in V$ is the sum of the weights of hyperedges incident to $i.$ We denote by $\mD \in \R^{V\times V}$ the diagonal matrix of degrees of $G$.

\paragraph{Cuts in graphs and hypergraphs} In graphs, the cost of a cut $(S, V\setminus S)$ for $S\subseteq V$ is the sum of the weights of the edges between $S$ and $V\setminus S$. However, for hypergraphs there is not a unique way of defining a similar partitioning objective, as there are multiple ways for a hyperedge with more than two vertices to be cut by a partition $(S,V\setminus S)$: in some applications one may want the cost of the partition to depend on how each individual hyperedge is cut. Formally, there are many ways of assigning a cut function $\delta_h$ to each hyperedge $h \in E$ to define the cost $\delta_h(h \cap S)$ incurred when the cut $(S, V \setminus S)$ crosses the hyperedge $h$.

\paragraph{Symmetric submodular cut functions} \citet{li2018submodular} consider a class of hyperedge cut functions called \emph{symmetric submodular hyperedge cut functions} $\delta_h:2^h \to \R$. These are functions which satisfy three key properties. Firstly, each $\delta_h$ is \emph{submodular}. Second, $\delta_h$ is symmetric, meaning that for any $A \subseteq h$, $\delta_h(A) = \delta_h(h \setminus A)$. Finally $\delta_h(h) = \delta_h(\emptyset) = 0$. This model encompasses many cut functions that are prominent in applications, including cardinality-based cut functions~\cite{veldt2020minimizing} and the mutual information cut function~\cite{gretton2003kernel,xiong2005optimizing}.

\paragraph{(Hyper)graph potentials}
In graphs, we generalize the cost of a cut $\delta_{i,j} = \mathbbm{1}\{ij\text{ is cut}\}$ to a function on $\R^V$ as $\overline{\delta}_{ij} = |\vx_i - \vx_j|$. The \textit{global graph potential} is defined to be the weighted sum of the squares of these edge costs, and is a core quantity studied in spectral graph theory:
\[
    U(\vx) \defeq \sum_{e\in E} w_e |\vx_i - \vx_j|^2 = \vx^T \mL \vx.
\]
Note in fact that when $\vx = \vone_S$ is the zero-one indicator of a set $S$, then $U(\vx)$ exactly equals the cost of the cut $(S, V\setminus S)$.

On hypergraphs, for any submodular hyperedge cut function $\delta_h$ its \textit{Lov\'asz extension} $\overline{\delta}_h:\R^n\rightarrow \R$ allows one to move from penalizing cuts to penalizing vectors $\vx\in \R^V$. In particular, for symmetric submodular hyperedge cut functions, \citet{li2018submodular} study the corresponding hypergraph potential:
\begin{equation}\label{eq:Li-Milenkovic-potential}
    \cU_{\textrm{submod}}(\vx) = \frac{1}{2}\sum_{h\in E} w_h\overline{\delta}_h(\vx)^2.
\end{equation}
Potentials of this form enjoy several convenient properties, like being amenable to efficient optimization algorithms~\cite{baiAlgorithmsOptimizingRatio2016,eneRandomCoordinateDescent2015}, and satisfying a general hypergraph analogue of Cheeger inequality for graphs~\cite{yoshida2019cheeger}.

\paragraph{Example: the standard hyperedge cut function}A canonical example of a cut function in this class is the function given by:
\[
    \delta^{\infty}_h(S \cap h) \defeq \min\{1, |S \cap h|, |(V\setminus S) \cap h|\}
\]
which gives a cost of one unit to every hyperedge which is cut, regardless of which vertices fall on which side of the cut. \citet{louis2015hypergraph} and \citet{chan2018spectral} show that the corresponding standard potential is
\begin{equation}\label{eq:infinity_energy_functional}
    \cU_{\infty}(\vx) \defeq \frac{1}{2} \sum_{h\in E} w_h \min_{u\in \R}\norm{\vxh - u \1_h}_\infty^2 = \frac{1}{4}\sum_{h\in E} w_h \max_{i,j\in h}(\vx_i-\vx_j)^2,
\end{equation}
where $\vx_h$ is the restriction of $\vx \in \R^V$ to $\R^h.$ They also establish a Cheeger's inequality and show that this choice of hyperedge cut function captures vertex-based graph partitioning problems~\cite{louis2016approximation}. 

\section{Our contributions}\label{sec:contributions}

In this paper, we directly tackle the computational and technical challenges in developing hypergraph spectral methods highlighted in the introduction. We establish connections between key hypergraph primitives--such as heat diffusions, personalized PageRank vectors, and hypergraph Laplacian systems--and hypergraph potentials. We then develop novel fast algorithms for a large class of hypergraph optimization problems, involving potentials of the form:
\begin{equation}\label{eq:our-setting}
    \cU(\vx) \defeq \frac{1}{2}\sum_{h\in E}w_h \min_{u\in \R}\norm{\vxh - u \1_h}_h^2,
\end{equation}
where each $\|\cdot\|_h$ is a norm over $\R^h.$ In particular, this class contains all quadratic potentials associated with the symmetric submodular cut functions of the form in Equation~\eqref{eq:Li-Milenkovic-potential} (see Lemma~\ref{lem:lovasz-is-semi-norm}).
Throughout the paper, we let $\lambda_G$ be the Poincaré constant of $G$:
\begin{equation}\label{eq:definition-lambda-G}
    \lambda_G \defeq \min_{\vx \perp \mD \1}  {\cU(\vx)\over\;\; {1\over 2}\norm{\vx}_{\mD}^2},
\end{equation}
which is a direct generalization (up to constants) of the notion of spectral gap employed by previous works.
Within this setup, we make the following main technical contributions:
\begin{itemize}[leftmargin=*]
\item  in Section~\ref{sec:heat_diffusion}, we introduce a simple, easy-to-compute {\it heat diffusion over hypergraphs}. We establish fast convergence of this heat diffusion in both continuous- and discrete-time. We show that the discrete-time  diffusion has the same favorable properties of the discrete-time heat diffusion for graphs, which immediately allows its application in the context of local hypergraph partitioning. %
In Section~\ref{sec:experiments}, we provide experiments showing that our hypergraph heat diffusion outperforms graph-based heat diffusions in semi-supervised manifold learning tasks.
\item in Section~\ref{sec:resolvent}, we introduce the {\it hypergraph resolvent problem}, which subsumes both the computation of hypergraph PPR vectors~\cite{takai2020hypergraph} and the solution of hypergraph Laplacian systems~\cite{fujiiPolynomialtimeAlgorithmsSubmodular2021}. 
We present the first nearly-linear-time algorithm to approximately solve this task by specializing a novel optimization primitive, which is our next contribution in Section~\ref{sec:optimization}.
Our algorithm iteratively simulates the discrete-time heat diffusion over the graph. Analogously to its graph counterparts,  its runs in linear time in the size of the hypergraph, with a $\lambda_G^{-1}$ dependence on the spectral gap, which is necessary for the heat diffusion to mix. In Section~\ref{sec:empirical-comparison} in the Supplementary Material, we present an empirical comparison between our algorithm and previous fast heuristics, showing the superiority of our method.
\item in Section~\ref{sec:optimization}, we develop a {\it novel first-order optimization algorithm} for problems involving the minimization of sums of non-differentiable squared norms, such as the general dissipative potentials of Equation~\ref{eq:our-setting}. We believe this result to be of independent interest beyond the study of hypergraph algorithms.

\end{itemize}

\paragraph{Paper organization} All references to sections with alphabetic indexes refer to the Supplementary Material. All proofs of claimed statements are included in the Supplementary Material. A short guide to our notation is included in Section~\ref{sec:appendix-preliminaries}.

\section{Hypergraph potential and hypergraph Laplacian}\label{sec:preliminaries}
In this section, we establish key properties of the hypergraph potentials in Equation~\ref{eq:our-setting}. For the rest of this section we assume familiarity with properties of submodular and convex functions: for further background on these topics, see Section~\ref{sec:appendix-preliminaries} of the supplementary material.

We consider hypergraph potentials $\cU(\vx)$ of the form of Equation~\ref{eq:our-setting}, which are necessarily convex, as they are the sum of squared semi-norms. The following lemma, which is proved in Section~\ref{sec:preliminaries-proof} of the supplementary material, implies that the class described by Equation~\ref{eq:our-setting} also includes all potentials arising from symmetric submodular cut functions in the sense of \citet{li2018submodular}:
\begin{lemma}\label{lem:lovasz-is-semi-norm}
For any symmetric submodular cut function $\delta_h: 2^h \to \R$, the Lov\'asz extension $\bar{\delta_h}: \R^h \to \R$ takes the form
$$
\bar{\delta}_h(\vx) = \min_u \|\vx_h - u \vone_h\|_h,
$$
for some norm $\|\cdot \|_h.$ Hence, the potential $\frac{1}{2}\sum_{h \in E} w_h \bar{\delta}^2_h(\vx)$ takes the form of Equation~\ref{eq:our-setting}.
\end{lemma}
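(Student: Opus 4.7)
The plan is to check that $\bar\delta_h$ satisfies the axioms of a seminorm, establish its invariance under shifts along $\vone_h$, and then lift it to a norm by adding a linear functional that breaks this translation symmetry. The key computation at the end recovers $\bar\delta_h(\vx)$ as $\min_u \|\vx - u\vone_h\|_h$ by taking $u$ to be the coordinate-average of $\vx$.

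First I would record three classical properties of the Lov\'asz extension applied to $\delta_h$: positive homogeneity $\bar\delta_h(\lambda\vx) = \lambda\bar\delta_h(\vx)$ for $\lambda\ge 0$ (scaling preserves sort order), convexity (equivalent to submodularity), and non-negativity. The last follows from combining submodularity with $\delta_h(h)=\delta_h(\emptyset)=0$, which forces $\delta_h(S) \ge 0$ for all $S$, so every term in the Lov\'asz formula $\bar\delta_h(\vx)=\sum_{k}(\vx_{\pi(k)}-\vx_{\pi(k+1)})\delta_h(S_k)$ is non-negative. To upgrade positive homogeneity to absolute homogeneity, I would use symmetry: negating $\vx$ reverses the sort order, which replaces each prefix set $S_k$ by its complement $h\setminus S_{|h|-k}$, and the condition $\delta_h(S)=\delta_h(h\setminus S)$ absorbs the sign flip to yield $\bar\delta_h(-\vx)=\bar\delta_h(\vx)$. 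Combined with convexity, absolute homogeneity gives the triangle inequality, so $\bar\delta_h$ is a seminorm. For translation invariance, a uniform shift does not change the sort order, so the Lov\'asz formula immediately gives $\bar\delta_h(\vx+c\vone_h)=\bar\delta_h(\vx)+c(\delta_h(h)-\delta_h(\emptyset))=\bar\delta_h(\vx)$.

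To finish, I would define
\[
\|\vx\|_h \;:=\; \bar\delta_h(\vx) \;+\; |\vone_h^{\top} \vx|/|h|,
\]
which inherits subadditivity and absolute homogeneity from $\bar\delta_h$ and the absolute value; in the non-degenerate case where $\ker\bar\delta_h = \mathrm{span}(\vone_h)$ one has $\|\vx\|_h = 0 \iff \vx = 0$, so $\|\cdot\|_h$ is a norm. Applying translation invariance,
\[
\min_u \|\vx - u\vone_h\|_h \;=\; \min_u\bigl[\bar\delta_h(\vx) + |\mathrm{avg}(\vx) - u|\bigr] \;=\; \bar\delta_h(\vx),
\]
with the minimum attained at $u=\mathrm{avg}(\vx) := \vone_h^{\top}\vx/|h|$. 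The only subtle point, and the main obstacle if one wishes to avoid an implicit non-degeneracy assumption on $\delta_h$, is verifying that $\|\cdot\|_h$ is a true norm rather than merely a seminorm; this reduces to identifying $\ker\bar\delta_h$ with $\mathrm{span}(\vone_h)$, a property that holds whenever $\delta_h(S)>0$ for all $\emptyset \subsetneq S \subsetneq h$ and can otherwise be handled by quotienting out the null space in the standard way.
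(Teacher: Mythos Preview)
Your proposal is correct and follows essentially the same route as the paper's proof: both define $\|\vx\|_h := \bar\delta_h(\vx) + c\,|\vone_h^\top \vx|$ (you take $c=1/|h|$, the paper takes $c=1$, which is immaterial), verify the norm axioms using positive homogeneity, evenness from symmetry, and convexity, and then recover $\bar\delta_h(\vx)=\min_u\|\vx-u\vone_h\|_h$ via translation invariance. The only minor divergence is in handling positive definiteness: the paper reduces WLOG to the case $\delta_h(\{v\})>0$ for all singletons (removing vertices with $\delta_h(\{v\})=0$), whereas you state the slightly stronger sufficient condition $\delta_h(S)>0$ for all nontrivial $S$ and defer the degenerate case to quotienting---both are acceptable resolutions of the same issue.
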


A key set of interest used to define diffusions over hypergraphs is the \textit{subdifferential} of the energy functional $\cU(\vx)$, which we denote $\cL(\vx) \defeq \partial \cU(\vx)$. 
By standard properties of subgradients, we can characterize $\cL(\vx)$ explicitly as %
\begin{equation}\label{eq:laplacian}
    \cL(\vx) = \left\{\sum_{h\in E} w_h \left(\min_{u\in \R}\norm{\vxh - u\1_h}_h\right) \vy_h) \right\},
\textrm{ where } 
\vy_h \in \argmax_{\substack{\vy \perp \vone\\ \norm{\vy}_{h,*} \leq 1}} \ip{\vy, \vx_h}.
\end{equation}
For the standard potential $\cU_\infty$, we give an example of the computation of $\cL(\vx)$ in Figure~\ref{fig:single-edge-flow}.

The set-valued operator $\cL(\cdot)$ is the nonlinear analogue
of the graph Laplacian operator $\mL$. Indeed, when a hyperedge $h=\{i,j\}$ has cardinality $2$ and its associated norm is $\ell_2$ or $\ell_\infty$, its contribution to $\cL(\vx)$ is $\frac{w_h}{2}(\vx_i - \vx_j)(\ve_i - \ve_j)$. In particular, 
in the case of a two-uniform hypergraph with $\norm{\cdot}_h = \norm{.}_2$ for all $h$, we obtain
$\cL(\vx) = \frac{1}{2}\mL\vx$ for the combinatorial graph Laplacian $\mL$.

\paragraph{Norms bounds}
We make the following normalization assumption. Notice that this causes no loss in generality as any additional scaling can be incorporated into the hypergraph weight $w_{h}.$ 
\begin{assumption}\label{assumption}
     $\norm{\vx}_h \leq \norm{\vx}_2$ for every $h \in E, \vx \in \R^V.$
\end{assumption}

With this assumption, we can show the following bounds relating the squared semi-norm $\cU(\vx)$ with the squared degree norm. The proof appears in Section~\ref{sec:preliminaries-proof} of the Supplementary Material.
\begin{lemma}\label{lemma:eigenvalue-type-bounds}
For every connected hypergraph $G$, $\lambda_G > 0$. For any hypergraph and any choice of $\{\norm{\cdot}_h\}$ norms, for all $\vx \perp_{\mD} \1$ the potential energy functional $\cU$  satisfies:
\begin{equation}\label{eq:eigenvalue-type-bounds}
    \lambda_G \leq \frac{\cU(\vx)}{{1\over 2}\norm{\vx}^2_{\mD}} \leq 1.
\end{equation}
\end{lemma}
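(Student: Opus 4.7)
The plan is to dispatch the two inequalities in \eqref{eq:eigenvalue-type-bounds} separately, and then argue strict positivity of $\lambda_G$ under connectivity. The left-hand inequality $\lambda_G \leq \cU(\vx)/(\tfrac{1}{2}\norm{\vx}_{\mD}^2)$ holds by the very definition of $\lambda_G$ in \eqref{eq:definition-lambda-G} as the infimum of this Rayleigh-type quotient over $\vx \perp_\mD \1$, so there is nothing to prove there. For the right-hand inequality, I would plug $u = 0$ into each inner minimization in Equation~\ref{eq:our-setting} and then apply Assumption~\ref{assumption} hyperedge-by-hyperedge to dominate each $\norm{\vx_h}_h^2$ by $\norm{\vx_h}_2^2 = \sum_{i \in h} x_i^2$. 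Swapping the order of summation between vertices and hyperedges converts $\sum_{h \ni i} w_h$ into $\deg(i)$ and yields $\cU(\vx) \leq \tfrac{1}{2}\sum_i \deg(i)\, x_i^2 = \tfrac{1}{2}\norm{\vx}_{\mD}^2$. Notably, this bound does not actually require $\vx \perp_\mD \1$.

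The substantive content is showing $\lambda_G > 0$ when $G$ is connected, for which I would run the standard compactness-plus-zero-set argument. The set $S = \{\vx \perp_\mD \1 : \norm{\vx}_{\mD} = 1\}$ is compact in finite dimensions, and $\cU$ is continuous as a finite sum of squared norms, so the infimum defining $\lambda_G$ is attained at some $\vx^\star \in S$. It then suffices to show $\cU(\vx^\star) > 0$. Suppose for contradiction $\cU(\vx^\star) = 0$. Each summand is nonnegative, so for every hyperedge $h$ we have $\min_u \norm{\vx^\star_h - u \1_h}_h = 0$; the inner map $u \mapsto \norm{\vx^\star_h - u \1_h}_h$ is convex and coercive in $u$, so this minimum is attained at some $u_h^\star$, and since $\norm{\cdot}_h$ is a genuine norm on $\R^h$, the equality $\norm{\vx^\star_h - u_h^\star \1_h}_h = 0$ forces $\vx^\star_h = u_h^\star \1_h$. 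In other words, $\vx^\star$ is constant on every hyperedge.

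Hypergraph connectivity now propagates these local constants across shared vertices of successive hyperedges, forcing $\vx^\star \in \mathrm{span}(\1)$. Writing $\vx^\star = c\1$ and using $\vx^\star \perp_\mD \1$ yields $c\sum_i \deg(i) = 0$, so $c = 0$, contradicting $\norm{\vx^\star}_{\mD} = 1$. The only parts requiring care are this connectivity-based propagation and the fact that $\norm{\cdot}_h$ is assumed to be a norm (not merely a seminorm) on $\R^h$, so that vanishing of the inner norm forces pointwise equality rather than equality only modulo a nontrivial kernel; both assumptions are in force, so no further subtlety arises.
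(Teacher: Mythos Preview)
Your proof is correct and follows essentially the same route as the paper: the lower bound is by definition, the upper bound comes from plugging $u=0$, applying Assumption~\ref{assumption}, and swapping the order of summation, and strict positivity under connectivity is argued by showing that $\cU(\vx)=0$ forces $\vx$ to be constant on every hyperedge (since each $\norm{\cdot}_h$ is a genuine norm). The only cosmetic difference is that the paper phrases the final contradiction by exhibiting a cut $S=\{i:\vx_i>0\}$ that no hyperedge crosses, whereas you propagate the local constants directly along overlapping hyperedges; these are two equivalent ways of packaging the same observation, and your version has the minor advantage of being explicit about compactness and attainment of the infimum.
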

In graphs, the spectral gap of the Laplacian matrix controls the behavior of heat diffusions. In the next section, we will use Lemma~\ref{lemma:eigenvalue-type-bounds} to show that $\lambda_G$ controls nonlinear processes on hypergraphs in an analogous way.

\section{Heat diffusion on hypergraphs}\label{sec:heat_diffusion}

The \textit{heat equation} is a fundamental partial differential equation:
\[
    \dot{u} = \Delta u  \quad u:\R^n \times [0,\infty)\rightarrow \R,
\]
where $\dot{u}$ denotes the partial-derivative of $u$ with respect to time and $\Delta$ denotes the Laplace operator.
The heat diffusion on a graph is an analogous ordinary differential equation, where the Laplace operator is replaced by a suitable normalization of the combinatorial Laplacian $\mL \in \R^{V\times V}$:
\begin{equation}\label{eq:graph-heat-diffusion}
\dot{\vx}(t) = - \mD^{-1}\mL \vx(t)  \quad \vx:[0,\infty)\rightarrow \R^V.
\end{equation}
Equivalently, this diffusion performs an  infinitesimal averaging of the value of $\vx_i$ at each vertex $i$ with the values of neighbors of $i$. For the graph case, a common alternative definition is given by considering the change of basis $\vp(t) = \mD \vx(t).$ Then, $\vp(t)$ can be interpreted as the marginals of the continuous-time random walk over the graph. %
Crucially, the heat diffusion in Equation~\ref{eq:graph-heat-diffusion} arises as the gradient flow of the graph potential $\cU(\vx) = \nicefrac{1}{2} \cdot \vx^T \mL \vx$ with respect to the norm $\mD.$

The difficulty in considering the corresponding heat diffusion on hypergraphs lies in the fact that the  potential $\cU(\vx)$ is now no longer differentiable, but rather only \textit{sub}differentiable. Heat diffusion across hypergraphs is thus described by the following subgradient flow:
\[
    \dot{\vx} \in -\mD^{-1}\cL(\vx).
\]
Moreover, because of the non-linearity of $\cL$, this evolution cannot be related to the marginals of a Markov chain over the hypergraph.
While \citet{ikedaFindingCheegerCuts2019} have shown the existence and uniqueness of the solution to this subgradient inclusion for general convex potentials, we provide a slightly stronger result by fully characterizing such unique solution as an ordinary differential equation.

To do so, we generalize the gradient-flow characterization of the graph heat diffusion to hypergraphs and cast the heat diffusion as the gradient flow of the potential $\cU(\vx)$ under the degree norm $\| \cdot \|_{\mD}$. We can now rely on well-established results on the existence and uniqueness of gradient flows.

\begin{theorem}[Unique solution to the subgradient flow]\label{thm:characterizing_subgradient_flow}
    For every $\vx_0\in\R^n$, there exists a unique solution $\vx(t) \in C([0,\infty);\R^n)$ with $\dot{\vx}(t)\in L^{\infty}(0,\infty;\R^n)$ such that
    \[
        \begin{cases}
            \vx(0)=  \vx_0,\\
            \dot{\vx}(t) \in -\mD^{-1}\cL(\vx(t)) \text{ almost everywhere for } t\geq 0.
        \end{cases}
    \]
    Moreover, at \textit{all} times $t > 0$ the right-derivative exists and can be exactly characterized:
    \[
        \lim_{h\rightarrow 0}\frac{1}{h}(\vx(t+h)-\vx(t)) = -\cLD(\vx(t)),
    \]
    where we define the operator
    \[
        \cLD(\vx) \defeq \argmin_{y \in \cL(\vx)}\norm{\vy}_{\mD^{-1}}.
    \]
\end{theorem}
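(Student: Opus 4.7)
The plan is to recognise the inclusion $\dot{\vx}\in -\mD^{-1}\cL(\vx)$ as the subgradient flow of the convex functional $\cU$ in the Hilbert space $H\defeq(\R^n,\langle\cdot,\cdot\rangle_{\mD})$ endowed with the weighted inner product $\langle \vx,\vy\rangle_{\mD}\defeq \vx^\top\mD\vy$, and then invoke the classical Brezis theory of gradient flows for proper convex lower semi-continuous functionals on Hilbert spaces (e.g.\ Brezis, \emph{Op\'erateurs Maximaux Monotones}).

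The first substantive step is to identify the subdifferential of $\cU$ in $H$. Denoting this subdifferential by $\partial_{\mD}\cU$, a direct manipulation of the defining inequality $\cU(\vy)\geq\cU(\vx)+\langle\vg,\vy-\vx\rangle_{\mD}$ shows that $\vg\in\partial_{\mD}\cU(\vx)$ iff $\mD\vg\in\partial\cU(\vx)=\cL(\vx)$, so $\partial_{\mD}\cU(\vx)=\mD^{-1}\cL(\vx)$ and the ODE rewrites as $\dot{\vx}(t)\in -\partial_{\mD}\cU(\vx(t))$. Next, I would verify the hypotheses of the abstract theorem: $\cU$, being a nonnegative sum of squared semi-norms, is convex; Assumption~\ref{assumption} makes each $\|\cdot\|_h$ $1$-Lipschitz, so $\cU$ is continuous and finite on all of $\R^n$. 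Hence $\cU$ is proper, convex, and lower semi-continuous with $D(\cU)=\R^n$, which in turn forces $D(\partial_{\mD}\cU)=\R^n$, so every $\vx_0$ is an admissible initial condition.

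The Brezis theorem then yields, for every $\vx_0$, a unique absolutely continuous curve $\vx\in C([0,\infty);H)$ satisfying $\dot{\vx}(t)\in -\partial_{\mD}\cU(\vx(t))$ almost everywhere, together with the regularising conclusion that at \emph{every} $t>0$ the right-derivative $\vx'_+(t)$ exists and equals the element of $-\partial_{\mD}\cU(\vx(t))$ of smallest $\|\cdot\|_{\mD}$-norm; the $L^{\infty}$ bound on $\dot{\vx}$ follows from the standard monotonicity of $t\mapsto \|\vx'_+(t)\|_{\mD}$. The final step is to rephrase this minimum-norm selection in the notation of the theorem: using the isometry $\|\mD^{-1}\vy\|_{\mD}^2=\vy^\top\mD^{-1}\vy=\|\vy\|_{\mD^{-1}}^2$, minimising the $\mD$-norm over $\mD^{-1}\cL(\vx)$ is equivalent to minimising the $\mD^{-1}$-norm over $\cL(\vx)$, recovering (up to the expected action of $\mD^{-1}$) the selection defined by $\cLD$. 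I expect the main technical point to be verifying the subdifferential identity $\partial_{\mD}\cU=\mD^{-1}\cL$ and checking that finiteness of $\cU$ on all of $\R^n$ places every initial condition in the regime of the Brezis theory that delivers both the pointwise right-derivative characterisation and the $L^{\infty}$ bound, rather than only the $L^{2}_{\mathrm{loc}}$ a.e.\ information available for general maximal monotone flows.
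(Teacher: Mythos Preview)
Your proposal is correct and follows essentially the same route as the paper: both arguments recast the inclusion as the subgradient flow of $\cU$ in the Hilbert space $(\R^n,\langle\cdot,\cdot\rangle_{\mD})$, verify that $\partial_{\mD}\cU=\mD^{-1}\cL$, invoke the Brezis existence/uniqueness/regularity theorem for convex lower-semicontinuous functionals, and then translate the minimal-norm selection via the identity $\|\mD^{-1}\vy\|_{\mD}=\|\vy\|_{\mD^{-1}}$. Your treatment is in fact slightly more careful than the paper's in two respects: you explicitly justify why every $\vx_0\in\R^n$ is admissible (finiteness of $\cU$ on all of $\R^n$), and you correctly flag that the minimal-norm element of $\mD^{-1}\cL(\vx)$ is $\mD^{-1}\cLD(\vx)$ rather than $\cLD(\vx)$ itself, a scaling that the paper's proof glosses over.
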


In words, the unique solution at $\vx$ tends to follow the subgradient element $\cLD(\vx)$ of minimum $\norm{\cdot}_{\mD^{-1}}$-norm. We note that the $\argmin$ is unique, since $\norm{\cdot}_{\mD^{-1}}$ is a strongly-convex function. One should think of $\cLD(\vx)$ as the direction of steepest descent for the function $\cU$ at $\vx$ with respect to the $\norm{\cdot}_{\mD}$-norm \cite{brezis1972operateurs, evans2010partial}. This fact is computationally significant as it will yield a principled way to select elements of the subdifferential in the discrete-time diffusion below.

\paragraph{Properties of hypergraph heat diffusion}

Like the graph heat diffusion, 
the hypergraph heat diffusion preserves the mean of $\vx(t)$ with respect to the degree measure and has the constant vector as a fixed point. More strongly, for a connected hypergraph, the heat diffusion started at $\vx$  converges to $\pi(\vx) \defeq \nicefrac{\ip{\vx,\vone}_{\mD}}{\|\vone\|^2_{\mD}} \cdot \vone,$ the $\mD$-projection of $\vx$ onto $\vone.$  
Just as in the graph case, the worst-case convergence is controlled by the Poincar\'e constant $\lambda_G$. The following theorem is proved in Section~\ref{appendix:missing-proofs}.

\begin{theorem}\label{thm:cts-time-diffusion-convergence}
For any $T \in [0,\infty)$ and any initial point $\vx_0\in \R^n$, suppose $\vx(t)$ is continuous on $[0,\infty)$ and satisfies $\dot{\vx}(t) \in -\mD^{-1}\cL(\vx(t))$  a.e. on $[0,T]$. Then, the following convergence properties hold: 
\begin{align*}
\mathrm{instantaneous}&:  \quad \quad
 \frac{d}{dt}\left(\frac{1}{2}\norm{\vx(t) -\pi(\vx_0)}\right) = \; -2\cU(\vx(t)) \quad \text{ a.e. on } [0,T],\\
 \mathrm{aggregate}&: \qquad \qquad 
  \frac{1}{2}\norm{\vx(t) -\pi(\vx_0)}^2_{\mD} \; \;\, \leq\; \frac{1}{2}\norm{\vx_0 -\pi(\vx_0)}^2_{\mD} \cdot \exp\left(-2\lambda_G \cdot t\right) \  \forall t \in [0,T].
\end{align*}
\end{theorem}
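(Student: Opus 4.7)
The plan is to set $\phi(t) \defeq \tfrac{1}{2}\|\vx(t) - \pi(\vx_0)\|_{\mD}^2$, compute its derivative, and then invoke Lemma~\ref{lemma:eigenvalue-type-bounds} together with Gr\"onwall's inequality. First, since $\dot{\vx} \in L^\infty(0,\infty;\R^n)$, the trajectory $\vx(t)$ is Lipschitz, so $\phi$ is absolutely continuous on $[0,T]$ and differentiable almost everywhere, with $\phi'(t) = \langle \vx(t)-\pi(\vx_0),\dot{\vx}(t)\rangle_{\mD}$. At almost every $t$ there exists $\vg(t) \in \cL(\vx(t))$ with $\dot{\vx}(t) = -\mD^{-1}\vg(t)$, so that $\phi'(t) = -\langle \vx(t)-\pi(\vx_0),\vg(t)\rangle$.

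Next, I would exploit two properties of the set-valued operator $\cL$. The first is \emph{2-homogeneity} of $\cU$: since $\cU(\vx) = \tfrac{1}{2}\sum_h w_h \bigl(\min_u\|\vx_h-u\vone_h\|_h\bigr)^2$ is a sum of squared seminorms and therefore positively homogeneous of degree $2$, Euler's identity for convex 2-homogeneous functions gives $\langle \vg,\vx(t)\rangle = 2\cU(\vx(t))$ for every $\vg \in \partial \cU(\vx(t)) = \cL(\vx(t))$. The second is \emph{constant-invariance}: from the explicit characterization of $\cL$ in Equation~\eqref{eq:laplacian}, each $\vy_h$ is chosen orthogonal to $\vone$, so extending $\vy_h$ by zero to $\R^V$ yields an element orthogonal to $\vone$; hence $\langle \vg,\vone\rangle = 0$ for every $\vg \in \cL(\vx)$. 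Because $\pi(\vx_0)$ is a scalar multiple of $\vone$, this gives $\langle \vg(t),\pi(\vx_0)\rangle = 0$, so that
\[
  \phi'(t) \;=\; -\langle \vx(t),\vg(t)\rangle \;=\; -2\,\cU(\vx(t)),
\]
which is the instantaneous claim.

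For the aggregate bound, the constant-invariance property also implies that $\pi(\vx(t))$ is preserved along the flow: indeed
\[
  \frac{d}{dt}\langle \vx(t),\vone\rangle_{\mD}
  \;=\; \langle \dot{\vx}(t),\vone\rangle_{\mD}
  \;=\; -\langle \vg(t),\vone\rangle \;=\; 0
\]
almost everywhere, so $\pi(\vx(t))=\pi(\vx_0)$ for all $t \in [0,T]$. Consequently $\vx(t)-\pi(\vx_0)$ is $\mD$-orthogonal to $\vone$, and Lemma~\ref{lemma:eigenvalue-type-bounds} yields $\cU(\vx(t)) \geq \lambda_G \cdot \tfrac{1}{2}\|\vx(t)-\pi(\vx_0)\|_{\mD}^2 = \lambda_G\,\phi(t)$. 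Combining with the instantaneous identity gives the differential inequality $\phi'(t) \leq -2\lambda_G\,\phi(t)$ a.e.\ on $[0,T]$, and integrating via Gr\"onwall (using that $\phi$ is absolutely continuous) yields $\phi(t) \leq \phi(0)\exp(-2\lambda_G t)$.

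The main technical obstacle is the justification of the pointwise chain rule for $\phi$ in the presence of the set-valued right-hand side. I would handle this by appealing to the Lipschitz regularity of $\vx(t)$ coming from $\dot{\vx} \in L^\infty$, which makes $\phi$ absolutely continuous and permits the a.e.\ computation of $\phi'$ and its integration against the Gr\"onwall-type inequality; the choice of element $\vg(t) \in \cL(\vx(t))$ is well-defined a.e.\ via Theorem~\ref{thm:characterizing_subgradient_flow}, so no ambiguity arises in the homogeneity argument.
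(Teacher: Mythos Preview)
Your proposal is correct and mirrors the paper's proof essentially step for step: both compute the derivative of $\tfrac{1}{2}\|\vx(t)-\pi(\vx_0)\|_{\mD}^2$ via the identities $\langle\vg,\vx\rangle = 2\cU(\vx)$ and $\langle\vg,\vone\rangle = 0$ for $\vg \in \cL(\vx)$ (the paper packages these as Properties~\ref{prop:inner-prod-with-cL} and~\ref{prop:mean-preservation}), then invoke the definition of $\lambda_G$ and Gr\"onwall's inequality. The only minor omission is that when you apply Lemma~\ref{lemma:eigenvalue-type-bounds} you should first use the shift-invariance $\cU(\vx(t)) = \cU(\vx(t)-\pi(\vx_0))$, since it is $\vx(t)-\pi(\vx_0)$, not $\vx(t)$ itself, that is $\mD$-orthogonal to $\vone$; the paper makes this step explicit.
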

For connected hypergraphs, the previous theorem implies that the heat diffusion converges to the degree-weighted average for each vertex, with variance that decreases geometrically with rate $\lambda_G.$ %
We establish more properties of the continuous-time heat diffusion, including a maximum principle that is characteristic of diffusions, in Section~\ref{sec:properties} in the Supplementary Material.

\subsection{Discrete-time evolution}

\begin{figure*}[t]
\begin{center}
\tikzset{every picture/.style={line width=0.75pt}} %

\begin{tikzpicture}[x=0.75pt,y=0.75pt,yscale=-1,xscale=1]
\draw  [color={rgb, 255:red, 0; green, 98; blue, 213 }  ,draw opacity=1 ][fill={rgb, 255:red, 242; green, 247; blue, 255 }  ,fill opacity=1 ] (395.23,94.7) .. controls (412.23,77.7) and (492.23,76.7) .. (505.23,95.7) .. controls (518.23,114.7) and (517.23,172.7) .. (505.23,191.7) .. controls (493.23,210.7) and (407.23,207.7) .. (394.23,194.7) .. controls (381.23,181.7) and (378.23,111.7) .. (395.23,94.7) -- cycle ;
\draw  [color={rgb, 255:red, 0; green, 98; blue, 213 }  ,draw opacity=1 ][fill={rgb, 255:red, 242; green, 247; blue, 255 }  ,fill opacity=1 ] (230.23,95.7) .. controls (247.23,78.7) and (327.23,77.7) .. (340.23,96.7) .. controls (353.23,115.7) and (352.23,173.7) .. (340.23,192.7) .. controls (328.23,211.7) and (242.23,208.7) .. (229.23,195.7) .. controls (216.23,182.7) and (213.23,112.7) .. (230.23,95.7) -- cycle ;
\draw  [color={rgb, 255:red, 0; green, 98; blue, 213 }  ,draw opacity=1 ][fill={rgb, 255:red, 242; green, 247; blue, 255 }  ,fill opacity=1 ] (56.23,95.7) .. controls (73.23,78.7) and (153.23,77.7) .. (166.23,96.7) .. controls (179.23,115.7) and (178.23,173.7) .. (166.23,192.7) .. controls (154.23,211.7) and (68.23,208.7) .. (55.23,195.7) .. controls (42.23,182.7) and (39.23,112.7) .. (56.23,95.7) -- cycle ;
\draw  [color={rgb, 255:red, 0; green, 0; blue, 0 }  ,draw opacity=1 ][fill={rgb, 255:red, 229; green, 240; blue, 255 }  ,fill opacity=1 ][line width=0.75]  (60.69,105.81) .. controls (60.69,102.89) and (63.05,100.53) .. (65.97,100.53) .. controls (68.89,100.53) and (71.25,102.89) .. (71.25,105.81) .. controls (71.25,108.73) and (68.89,111.09) .. (65.97,111.09) .. controls (63.05,111.09) and (60.69,108.73) .. (60.69,105.81) -- cycle ;
\draw  [color={rgb, 255:red, 0; green, 0; blue, 0 }  ,draw opacity=1 ][fill={rgb, 255:red, 229; green, 240; blue, 255 }  ,fill opacity=1 ][line width=0.75]  (148.81,106.36) .. controls (148.81,103.44) and (151.18,101.08) .. (154.1,101.08) .. controls (157.01,101.08) and (159.38,103.44) .. (159.38,106.36) .. controls (159.38,109.28) and (157.01,111.64) .. (154.1,111.64) .. controls (151.18,111.64) and (148.81,109.28) .. (148.81,106.36) -- cycle ;
\draw  [color={rgb, 255:red, 0; green, 0; blue, 0 }  ,draw opacity=1 ][fill={rgb, 255:red, 0; green, 112; blue, 245 }  ,fill opacity=1 ][line width=0.75]  (60.78,183.96) .. controls (60.78,181.04) and (63.15,178.67) .. (66.06,178.67) .. controls (68.98,178.67) and (71.35,181.04) .. (71.35,183.96) .. controls (71.35,186.87) and (68.98,189.24) .. (66.06,189.24) .. controls (63.15,189.24) and (60.78,186.87) .. (60.78,183.96) -- cycle ;
\draw  [color={rgb, 255:red, 0; green, 0; blue, 0 }  ,draw opacity=1 ][fill={rgb, 255:red, 229; green, 240; blue, 255 }  ,fill opacity=1 ][line width=0.75]  (234.69,104.81) .. controls (234.69,101.89) and (237.05,99.53) .. (239.97,99.53) .. controls (242.89,99.53) and (245.25,101.89) .. (245.25,104.81) .. controls (245.25,107.73) and (242.89,110.09) .. (239.97,110.09) .. controls (237.05,110.09) and (234.69,107.73) .. (234.69,104.81) -- cycle ;
\draw  [color={rgb, 255:red, 0; green, 0; blue, 0 }  ,draw opacity=1 ][fill={rgb, 255:red, 229; green, 240; blue, 255 }  ,fill opacity=1 ][line width=0.75]  (322.81,105.36) .. controls (322.81,102.44) and (325.18,100.08) .. (328.1,100.08) .. controls (331.01,100.08) and (333.38,102.44) .. (333.38,105.36) .. controls (333.38,108.28) and (331.01,110.64) .. (328.1,110.64) .. controls (325.18,110.64) and (322.81,108.28) .. (322.81,105.36) -- cycle ;
\draw  [color={rgb, 255:red, 0; green, 0; blue, 0 }  ,draw opacity=1 ][fill={rgb, 255:red, 0; green, 52; blue, 114 }  ,fill opacity=1 ][line width=0.75]  (323.45,182.88) .. controls (323.45,179.96) and (325.82,177.6) .. (328.74,177.6) .. controls (331.65,177.6) and (334.02,179.96) .. (334.02,182.88) .. controls (334.02,185.8) and (331.65,188.16) .. (328.74,188.16) .. controls (325.82,188.16) and (323.45,185.8) .. (323.45,182.88) -- cycle ;
\draw  [color={rgb, 255:red, 0; green, 0; blue, 0 }  ,draw opacity=1 ][fill={rgb, 255:red, 0; green, 112; blue, 245 }  ,fill opacity=1 ][line width=0.75]  (234.78,181.96) .. controls (234.78,179.04) and (237.15,176.67) .. (240.06,176.67) .. controls (242.98,176.67) and (245.35,179.04) .. (245.35,181.96) .. controls (245.35,184.87) and (242.98,187.24) .. (240.06,187.24) .. controls (237.15,187.24) and (234.78,184.87) .. (234.78,181.96) -- cycle ;
\draw  [color={rgb, 255:red, 0; green, 0; blue, 0 }  ,draw opacity=1 ][fill={rgb, 255:red, 103; green, 175; blue, 255 }  ,fill opacity=1 ][line width=0.75]  (399.69,103.81) .. controls (399.69,100.89) and (402.05,98.53) .. (404.97,98.53) .. controls (407.89,98.53) and (410.25,100.89) .. (410.25,103.81) .. controls (410.25,106.73) and (407.89,109.09) .. (404.97,109.09) .. controls (402.05,109.09) and (399.69,106.73) .. (399.69,103.81) -- cycle ;
\draw  [color={rgb, 255:red, 0; green, 0; blue, 0 }  ,draw opacity=1 ][fill={rgb, 255:red, 103; green, 175; blue, 255 }  ,fill opacity=1 ][line width=0.75]  (487.81,104.36) .. controls (487.81,101.44) and (490.18,99.08) .. (493.1,99.08) .. controls (496.01,99.08) and (498.38,101.44) .. (498.38,104.36) .. controls (498.38,107.28) and (496.01,109.64) .. (493.1,109.64) .. controls (490.18,109.64) and (487.81,107.28) .. (487.81,104.36) -- cycle ;
\draw  [color={rgb, 255:red, 0; green, 0; blue, 0 }  ,draw opacity=1 ][fill={rgb, 255:red, 0; green, 98; blue, 221 }  ,fill opacity=1 ][line width=0.75]  (488.45,181.88) .. controls (488.45,178.96) and (490.82,176.6) .. (493.74,176.6) .. controls (496.65,176.6) and (499.02,178.96) .. (499.02,181.88) .. controls (499.02,184.8) and (496.65,187.16) .. (493.74,187.16) .. controls (490.82,187.16) and (488.45,184.8) .. (488.45,181.88) -- cycle ;
\draw  [color={rgb, 255:red, 0; green, 0; blue, 0 }  ,draw opacity=1 ][fill={rgb, 255:red, 0; green, 112; blue, 245 }  ,fill opacity=1 ][line width=0.75]  (399.78,181.96) .. controls (399.78,179.04) and (402.15,176.67) .. (405.06,176.67) .. controls (407.98,176.67) and (410.35,179.04) .. (410.35,181.96) .. controls (410.35,184.87) and (407.98,187.24) .. (405.06,187.24) .. controls (402.15,187.24) and (399.78,184.87) .. (399.78,181.96) -- cycle ;
\draw  [dash pattern={on 4.5pt off 4.5pt}]  (246.98,111.31) -- (328.74,182.88) ;
\draw [shift={(245.47,109.99)}, rotate = 41.2] [color={rgb, 255:red, 0; green, 0; blue, 0 }  ][line width=0.75]    (10.93,-3.29) .. controls (6.95,-1.4) and (3.31,-0.3) .. (0,0) .. controls (3.31,0.3) and (6.95,1.4) .. (10.93,3.29)   ;
\draw  [dash pattern={on 4.5pt off 4.5pt}]  (328.11,112.64) -- (328.74,182.88) ;
\draw [shift={(328.1,110.64)}, rotate = 89.49] [color={rgb, 255:red, 0; green, 0; blue, 0 }  ][line width=0.75]    (10.93,-3.29) .. controls (6.95,-1.4) and (3.31,-0.3) .. (0,0) .. controls (3.31,0.3) and (6.95,1.4) .. (10.93,3.29)   ;
\draw  [color={rgb, 255:red, 0; green, 0; blue, 0 }  ,draw opacity=1 ][fill={rgb, 255:red, 0; green, 52; blue, 114 }  ,fill opacity=1 ][line width=0.75]  (149.45,183.88) .. controls (149.45,180.96) and (151.82,178.6) .. (154.74,178.6) .. controls (157.65,178.6) and (160.02,180.96) .. (160.02,183.88) .. controls (160.02,186.8) and (157.65,189.16) .. (154.74,189.16) .. controls (151.82,189.16) and (149.45,186.8) .. (149.45,183.88) -- cycle ;

\draw (99,61.4) node [anchor=north west][inner sep=0.75pt]    {$w_{h}$};
\draw (247.97,95) node [anchor=north west][inner sep=0.75pt]  [font=\small]  {$-\frac{3w_{h}}{8}$};
\draw (289,95) node [anchor=north west][inner sep=0.75pt]  [font=\small]  {$-\frac{3w_{h}}{8}$};
\draw (67.97,110) node [anchor=north west][inner sep=0.75pt]    {$-1$};
\draw (125.97,110) node [anchor=north west][inner sep=0.75pt]    {$-1$};
\draw (75,170) node [anchor=north west][inner sep=0.75pt]    {$1$};
\draw (135,170) node [anchor=north west][inner sep=0.75pt]    {$2$};
\draw (14,70.4) node [anchor=north west][inner sep=0.75pt]    {$( a)$};
\draw (192,70.4) node [anchor=north west][inner sep=0.75pt]    {$( b)$};
\draw (415,100) node [anchor=north west][inner sep=0.75pt]  [font=\small]  {$-\frac{5}{8}$};
\draw (361,70.4) node [anchor=north west][inner sep=0.75pt]    {$( c)$};
\draw (289,175) node [anchor=north west][inner sep=0.75pt]  [font=\small]  {$+\frac{3w_{h}}{4}$};
\draw (249.97,175) node [anchor=north west][inner sep=0.75pt]    {$0$};
\draw (472,168) node [anchor=north west][inner sep=0.75pt]  [font=\small]  {$\frac{5}{4}$};
\draw (235,190) node [anchor=north west][inner sep=0.75pt]  [font=\tiny] [align=left] {(No change)};
\draw (465,100) node [anchor=north west][inner sep=0.75pt]  [font=\small]  {$-\frac{5}{8}$};
\draw (412.97,170) node [anchor=north west][inner sep=0.75pt]    {$1$};

\end{tikzpicture}
\vskip -0.2in
    \caption{
    $(a)$ A hypergraph consisting of a single hyperedge $h$, with weight $w_h$, and heat distribution $\vx_t = (-1,-1,1,2)$. 
    $(b)$ In Equation~\ref{eq:laplacian}, we have $\norm{\vxh - u\1_h}_h= 3/2$ and\;$\vy_h = (-1/2+\alpha, -\alpha, 0 , 1/2)$ for $\alpha \in (0,1/2)$, giving $\cL(\vx_t) =\{w_h(-3/4+3\alpha/2, -3\alpha/2, 0 , 3/4)\mid \alpha\in(0,1/2)\}$.
    The vector $ \cLD(\vx_t) = (-3w_h/8,-3w_h/8,0,+3w_h/4)$ is the minimum $\norm{\cdot}_{\mD^{-1}}$-norm vector in this set. This can be interpreted as a flow on $h$, where the value of $\cLD(\vx_t)(i)$ is the amount of flow flowing into vertex $i$ through the hyperedge. $(c)$ The discrete-time heat diffusion then computes $ \vx_{t+1} = \vx_t - {1\over w_h} \cL^{\mD}(\vx_t)$ giving $\vx_{t+1} = (-5/8,-5/8,1,5/4)$.}
    \label{fig:single-edge-flow}
\end{center}
\vspace{-0.5cm}
\end{figure*}

In practice, continuous-time heat diffusions are computationally cumbersome, as witnessed by the ongoing difficulty~\cite{moler1978nineteen, moler2003nineteen} in approximating the action of the matrix exponential. However, this is not an issue for graphs, as the continuous-time heat diffusion can be replaced for all algorithmic purposes with the discrete-time heat diffusion 
\begin{equation} \label{eq:step-length-graph}
\vx_{t+1} = \vx_{t} - {1\over 2} \mD^{-1} \mL\vx_{t} = \left(\vI -{1\over 2} \mD^{-1} \mL\right) \vx_t
\end{equation}
While the latter does not strictly approximate the continuous-time process $\vx(t)$, it shares with it the properties of Theorem~\ref{thm:cts-time-diffusion-convergence}, which are crucial in applications. %

For the main contribution of this section, we show that the forward-Euler discretization of $\vx(t)$, with the {\it same constant step length}\footnote{Step lengths in Equations~\ref{eq:step-length-graph} and ~\ref{eq:0-s-discrete-update-rule} differ because $\cL = \frac{1}{2} \mL$ for graphs.} as in the graph case, satisfies the two properties above, overcoming the challenge of defining a discrete-time evolution for non-differentiable hypergraph potentials, which was previously raised in \cite{chan2018spectral}.
We now describe our discrete-time dynamics and prove the required properties. Fixing a starting point $\vx_0$, we compute a sequence $\{\vx_t\}_{t\in \N}$ by repeatedly applying:
\begin{equation}\label{eq:0-s-discrete-update-rule}
    \vx_{t+1} = \vx_t - 
    \mD^{-1}\cLD(\vx_t).
\end{equation}
 We show that the convergence behavior of this discrete-time process matches that of the continuous-time guarantee in Theorem~\ref{thm:cts-time-diffusion-convergence}. Proofs are in Section~\ref{appendix:missing-proofs} of the Supplementary Material: 

\begin{theorem}\label{thm:discrete-time-diffusion}
Consider a sequence described by Equation~\eqref{eq:0-s-discrete-update-rule}, initialized at some $\vx_0$, then $\forall t\in\mathbb{N}_{\geq 0}$, $\forall t\in\mathbb{N}_{\geq 0}$. Then, the following convergence properties hold: 
\begin{align*}
\mathrm{``instantaneous"}&:  \quad \quad
 \norm{\vx_{t+1} - \pi(\vx_0)}^2_{\mD} \leq \norm{\vx_t-\pi(\vx_0)}^2_{\mD} - 2\cU(\vx_t)\\
 \mathrm{aggregate}&: \qquad  
  \norm{\vx_t - \pi(\vx_0)}_{\mD}^2 \leq (1-\lambda_G)^t \norm{\vx_0 - \pi(\vx_0)}^2_{\mD}.
\end{align*}
\end{theorem}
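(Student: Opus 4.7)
The plan is to directly expand $\norm{\vx_{t+1} - \pi(\vx_0)}^2_{\mD}$ via the update rule, identify the cross term using the $2$-homogeneity of $\cU$, and control the squared step size through a Cauchy--Schwarz argument that invokes Assumption~\ref{assumption}. The aggregate bound then follows immediately from Lemma~\ref{lemma:eigenvalue-type-bounds}.

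First, I would verify that the update preserves the projection onto $\vone$: since $\cU$ is translation-invariant along $\vone$, every $\vg \in \cL(\vx_t)$ satisfies $\ip{\vg, \vone} = 0$, and hence $\ip{\vx_{t+1}, \vone}_{\mD} = \ip{\vx_t, \vone}_{\mD}$. Thus $\pi(\vx_t) = \pi(\vx_0)$ for every $t$. Setting $\vy_t := \vx_t - \pi(\vx_0)$ and $\vg_t := \cLD(\vx_t)$, expanding the squared norm gives
\begin{equation*}
\norm{\vy_{t+1}}^2_{\mD} = \norm{\vy_t}^2_{\mD} - 2\ip{\vy_t, \vg_t} + \norm{\vg_t}^2_{\mD^{-1}}.
\end{equation*}
For the cross term, each $f_h(\vx) := \min_u \norm{\vxh - u\1_h}_h$ is a semi-norm, so $\cU$ is a sum of squared semi-norms and therefore $2$-homogeneous; Euler's identity for convex positively $2$-homogeneous functions then gives $\ip{\vg_t, \vx_t} = 2\cU(\vx_t)$ for every subgradient, and since $\vg_t \perp \vone$ this equals $\ip{\vg_t, \vy_t}$.

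The main technical step is the bound $\norm{\vg_t}^2_{\mD^{-1}} \leq 2\cU(\vx_t)$. I would use the representation of Equation~\eqref{eq:laplacian} to write $\vg_t = \sum_{h \in E} w_h f_h(\vx_t)\, \vy_h$, with each $\vy_h$ supported on $h$ and satisfying $\norm{\vy_h}_{h,*} \leq 1$. For each vertex $i$, Cauchy--Schwarz applied to the weights $w_h$ yields
\begin{equation*}
(\vg_t)_i^2 = \Bigl(\sum_{h \ni i} w_h f_h(\vx_t) (\vy_h)_i\Bigr)^2 \leq d_i \sum_{h \ni i} w_h f_h(\vx_t)^2 (\vy_h)_i^2.
\end{equation*}
Dividing by $d_i$ and summing over $i$ regroups the double sum by hyperedge, giving $\norm{\vg_t}^2_{\mD^{-1}} \leq \sum_h w_h f_h(\vx_t)^2 \norm{\vy_h}_2^2$. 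By Assumption~\ref{assumption}, $\norm{\cdot}_h \leq \norm{\cdot}_2$ and hence, by duality, $\norm{\vy_h}_2 \leq \norm{\vy_h}_{h,*} \leq 1$; the right-hand side then equals $2\cU(\vx_t)$ exactly. Combining this with the Euler identity above proves the instantaneous inequality $\norm{\vy_{t+1}}^2_{\mD} \leq \norm{\vy_t}^2_{\mD} - 2\cU(\vx_t)$.

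The aggregate bound follows at once. Since $\vy_t \perp_{\mD} \vone$ by construction, Lemma~\ref{lemma:eigenvalue-type-bounds} gives $2\cU(\vx_t) \geq \lambda_G \norm{\vy_t}^2_{\mD}$, so the instantaneous inequality yields $\norm{\vy_{t+1}}^2_{\mD} \leq (1-\lambda_G)\norm{\vy_t}^2_{\mD}$, and iterating produces the claimed geometric rate. The main subtlety will be justifying the subgradient representation of Equation~\eqref{eq:laplacian} and the application of Euler's identity despite $\cU$ being non-differentiable; both follow from standard subdifferential calculus for sums of squared positively homogeneous convex functions but are worth explicitly noting.
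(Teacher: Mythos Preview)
Your proposal is correct and follows the same overall skeleton as the paper: expand $\norm{\vx_{t+1}-\pi(\vx_0)}^2_{\mD}$, identify the cross term as $2\cU(\vx_t)$, bound the squared step by $2\cU(\vx_t)$, and then invoke Lemma~\ref{lemma:eigenvalue-type-bounds} for the geometric rate. The paper's cross-term identity (its Property~\ref{prop:inner-prod-with-cL}) is exactly your Euler-identity observation.

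The one genuine point of difference is how you obtain $\norm{\vg_t}^2_{\mD^{-1}} \leq 2\cU(\vx_t)$. The paper proves this as a separate corollary (Corollary~\ref{cor:bounded-subgradient-norm}) via a Fenchel-duality argument that uses only the abstract upper bound $\cU(\vx) \leq \tfrac{1}{2}\norm{\vx}^2_{\mD}$; this form is later reused verbatim in the analysis of Algorithm~\ref{alg:opt}. You instead argue it directly from the explicit subgradient representation in Equation~\eqref{eq:laplacian} by Cauchy--Schwarz over the hyperedges incident to each vertex, together with the duality consequence $\norm{\vy_h}_2 \leq \norm{\vy_h}_{h,*}$ of Assumption~\ref{assumption}. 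Your route is more elementary and makes the role of Assumption~\ref{assumption} very concrete; the paper's route is more abstract but portable to any squared norm satisfying an upper Poincar\'e inequality.
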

In Section~\ref{sec:local} in the Supplementary Material, we show how this theorem can be directly applied to improve the result of \citet{ikedaFindingCheegerCuts2019} on local hypergraph partitioning. %

\section{Resolvents on hypergraphs}\label{sec:resolvent}
Non-linear resolvent operators are a fundamental object of study in dynamical systems and semigroup theory~\cite{bauschkeConvexAnalysisMonotone2011, brezis1972operateurs, evans2010partial}.
For a parameter $\lambda \geq 0,$ the resolvent operator $\cR_{\lambda}(\vs)$ of the hypergraph  potential $\cU$ maps a seed vector $\vs \in \R^V$ to the set of minimizers of the following optimization problem:
\begin{equation}\label{eq:resolvent}
\min
_{\vx \in \R^V} 
\cU(\vx) + \frac{\lambda}{2} \|\vx\|^2_{\mD} - \ip{\vs,\vx}.
\end{equation}
Letting $\cL$ be the subgradient of $\cU$, the optimality conditions of~\eqref{eq:resolvent} yield the following characterization:
\[
    \vx \in \cR_{\lambda}(\vs) \iff \vs - \lambda \cdot \mD \vx \in \cL(\vx).
\]

For the case $\lambda = 0$ and $\vs \perp \vone$,  this is the hypergraph analogue of the graph Laplacian system $\mL \vx = \vs$, whose solutions yield the voltages of electrical flows over graphs~\cite{vishnoi2013lx}.
The hypergraph version was previously studied by \citet{fujiiPolynomialtimeAlgorithmsSubmodular2021}, who gave polynomial-time algorithms for approximating the solution $\vx.$
When $\lambda > 0,$ the resolvent $\cR_{\lambda}(\vs)$ contains a unique element, as the objective in Problem 8 becomes strongly convex. The following lemma shows that this element equals, up to scaling, the hypergraph PPR vector studied by \citet{takai2020hypergraph} and \citet{liu2021strongly}  in the context of local hypergraph partitioning. 
\begin{lemma}\label{lem:hypergraph-pagerank-fixed-point} For teleportation parameter $\alpha \in (0,1)$ and seed vector $\vs \in \R^V$,
define the hypergraph PPR vector $\vp_{\alpha}(\vs)$ as the unique solution to the equation 
$
\vp_{\alpha}(\vs) + \frac{1-\alpha}{2\alpha} \cL(\mD^{-1} \vp_{\alpha}(\vs)) \ni \vs.
$
Then, $\mD^{-1} \vp_{\alpha}(\vs)$ is the unique element in $ \cR_{\nicefrac{2\alpha}{1-\alpha}}((\nicefrac{2\alpha}{1-\alpha})\vs).
$
\end{lemma}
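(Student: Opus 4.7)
The plan is to unwind both sides into their subgradient-inclusion form and check algebraically that they encode the same condition. Concretely, set $\lambda \defeq \frac{2\alpha}{1-\alpha}$, which is strictly positive for $\alpha \in (0,1)$, and define the substitution $\vx \defeq \mD^{-1} \vp_\alpha(\vs)$. I would then verify that $\vx$ satisfies the optimality condition for $\cR_\lambda(\lambda \vs)$ stated just before the lemma, namely
\[
  \lambda \vs - \lambda \mD \vx \in \cL(\vx).
\]

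First I would simplify the left-hand side by substituting $\mD \vx = \vp_\alpha(\vs)$:
\[
  \lambda \vs - \lambda \mD \vx \;=\; \lambda\bigl(\vs - \vp_\alpha(\vs)\bigr).
\]
Next, I would rewrite the defining equation for the hypergraph PPR vector as
\[
  \vs - \vp_\alpha(\vs) \;\in\; \tfrac{1-\alpha}{2\alpha}\, \cL(\mD^{-1}\vp_\alpha(\vs)) \;=\; \lambda^{-1} \cL(\vx),
\]
and then multiply both sides by $\lambda$ (noting that $\cL(\vx)$ is a cone-scalable set under positive scalars since it is the subdifferential of a positively homogeneous-of-degree-two potential, but all we actually need is that scaling a subgradient inclusion by a positive constant preserves it). This yields exactly $\lambda(\vs - \vp_\alpha(\vs)) \in \cL(\vx)$, matching the resolvent optimality condition.

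For uniqueness, I would invoke the strong convexity of the resolvent objective~\eqref{eq:resolvent}: since $\lambda > 0$, the term $\tfrac{\lambda}{2}\|\vx\|_{\mD}^2$ makes the objective $\lambda$-strongly convex in the $\|\cdot\|_{\mD}$ norm (as $\cU$ is convex by being a sum of squared semi-norms, as observed right before Lemma~\ref{lem:lovasz-is-semi-norm}), so $\cR_\lambda(\lambda \vs)$ is a singleton. Uniqueness of $\vp_\alpha(\vs)$ then follows from the bijection $\vp \leftrightarrow \mD^{-1}\vp$.

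There is no real obstacle here — the statement is essentially a change-of-variables identity between two equivalent fixed-point formulations of the same monotone inclusion. The only mild subtlety is to be careful that the characterization $\vx \in \cR_\lambda(\vs') \iff \vs' - \lambda \mD \vx \in \cL(\vx)$ was stated in the text for $\vs$ as the seed; I would just rename variables consistently so as not to conflate the resolvent seed $\lambda \vs$ with the PPR seed $\vs$.
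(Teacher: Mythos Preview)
Your proposal is correct and follows essentially the same approach as the paper: set $\vx = \mD^{-1}\vp_\alpha(\vs)$, rewrite the PPR defining inclusion, and multiply through by $\frac{2\alpha}{1-\alpha}$ to obtain the resolvent optimality condition $\lambda\vs - \lambda\mD\vx \in \cL(\vx)$. Your added discussion of uniqueness via strong convexity is a nice touch that the paper's proof omits (relying instead on the remark in the main text), and your parenthetical about cone-scalability is unnecessary---all that is used is the trivial fact that $a \in cB$ iff $c^{-1}a \in B$ for a nonzero scalar $c$.
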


In the graph case, the resolvent $\cR_{\lambda}(\vs)$ is given by $(\lambda \mD + \mL)^{-1} \vs$. The following geometric-series expansion shows that this resolvent can be written as a conic combination of  discrete-time heat diffusion started at $\mD^{-1} \vs:$
\begin{equation}\label{eq:expansion}
(\lambda \mD + \mL)^{-1} \vs 
= \frac{1}{\lambda+2} \cdot  \sum_{k=0}^\infty \left(\frac{2}{\lambda+2}\right)^k \cdot \left(\mI - \frac{\mD^{-1} \mL}{2}\right)^k \mD^{-1} \vs
\end{equation}

Gradient descent methods for approximating the graph resolvent can be shown to be equivalent to computing a truncation of this series, where $O(\nicefrac{1}{\lambda_G} \cdot \log(\nicefrac{1}{\epsilon}))$ terms suffice to obtain a multiplicative $\epsilon$-approximation to the optimum of Problem~\ref{eq:resolvent}~\cite{vishnoi2013lx}.
Unfortunately, such series expansion crucially depends on the linearity of  the graph Laplacian $\mL$ and breaks down for non-linear operators.
At the same time, standard applications of first-order methods~\cite{nemirovskij1983problem, nesterov} fail to give algorithms for approximately solving Problem~\ref{eq:resolvent}, because the non-differentiable quadratic objective is both non-smooth and non-Lipschitz.
In Theorem~\ref{thm:optimization} i the next section, we analyze Algorithm~\ref{alg:opt}, a specialized first-order methods for this challenge. By setting $\mR = \mD$ as a choice of prox-generating function in Algorithm 1~\ref{alg:opt},
we obtain our main result for the computation of hypergraph resolvents.

\begin{theorem}\label{thm:main1} For a connected hypergraph with potential $\cU(\vx)$, 
Algorithm~\ref{alg:opt} with prox-generating function $\nicefrac{1}{2} \cdot \|\cdot\|^2_{\mD}$ computes a multiplicative $\epsilon$-approximation to the optimum of Problem~\ref{eq:resolvent} in $O\left(\frac{1}{(\lambda_G+\lambda) \epsilon^2}\right)$ iterations. Each iteration requires linear time in the size of the hypergraph and a single computation of the discrete-time heat diffusion step of Equation~\ref{eq:0-s-discrete-update-rule}.
\end{theorem}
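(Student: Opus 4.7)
The plan is to derive Theorem \ref{thm:main1} as a direct specialization of the general optimization guarantee of Theorem \ref{thm:optimization} for Algorithm \ref{alg:opt}, by instantiating the prox-generating function as $R(\vx) = \frac{1}{2}\norm{\vx}^2_{\mD}$ and tracking three quantities: the form of the prox step, the strong-convexity modulus of the resolvent objective relative to $R$, and the per-iteration cost.

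First I would verify that each mirror-descent update reduces to a hypergraph heat-diffusion step plus a cheap linear correction. The Bregman divergence induced by $R$ is $D_R(\vx,\vy)=\frac{1}{2}\norm{\vx-\vy}^2_{\mD}$, so the prox step from iterate $\vx_t$ with subgradient $\vg_t$ of the resolvent objective $F(\vx)\defeq \cU(\vx)+\frac{\lambda}{2}\norm{\vx}^2_{\mD}-\ip{\vs,\vx}$ is simply $\vx_{t+1}=\vx_t-\eta\mD^{-1}\vg_t$. Taking $\vg_t=\cLD(\vx_t)+\lambda\mD\vx_t-\vs$---the minimum-$\norm{\cdot}_{\mD^{-1}}$-norm element of $\partial F(\vx_t)$, whose uniqueness is guaranteed by the same argument used in Theorem \ref{thm:characterizing_subgradient_flow}---decomposes the update as the heat-diffusion step of Equation \ref{eq:0-s-discrete-update-rule} applied to $\vx_t$, followed by a linear shift. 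Each piece runs in linear time: the heat step aggregates per-hyperedge contributions via the single-edge flows depicted in Figure \ref{fig:single-edge-flow}, while the linear shift is a vertex-indexed axpy.

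Next I would identify the strong-convexity constant. Lemma \ref{lemma:eigenvalue-type-bounds} gives $\cU(\vx)\geq \lambda_G\cdot\frac{1}{2}\norm{\vx-\pi(\vx)}^2_{\mD}$, so $\cU$ is $\lambda_G$-strongly convex in $\norm{\cdot}_{\mD}$ modulo the kernel direction $\vone$, while the regularizer $\frac{\lambda}{2}\norm{\vx}^2_{\mD}$ contributes an additional $\lambda$ units of strong convexity across the entire space. Summing, $F$ is $(\lambda_G+\lambda)$-strongly convex relative to $R$, which is exactly the parameter plugged into the general bound of Theorem \ref{thm:optimization} to produce the iteration count $O(1/((\lambda_G+\lambda)\epsilon^2))$ for multiplicative $\epsilon$-accuracy.

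The main obstacle I anticipate is making the strong-convexity bound rigorous when $\lambda=0$, since then the $\lambda_G$ contribution only holds on $\vone^{\perp}_{\mD}$. I would address this by arguing that the iterates remain in a predictable affine subspace under the update rule: the heat-diffusion component of the update, $-\mD^{-1}\cLD(\vx_t)$, is $\mD$-orthogonal to $\vone$ because every element of $\partial\cU$ lies in $\vone^{\perp}$, so $\ip{\vx_t,\vone}_{\mD}$ evolves only through the linear $\vs$-term in a way that can be computed in closed form. Shifting variables by the resulting time-dependent mean reduces the analysis to the subspace on which $\cU$ is genuinely $\lambda_G$-strongly convex, where the proof of Theorem \ref{thm:optimization} applies verbatim and yields the stated rate; for $\lambda>0$ no such reduction is needed because the regularizer provides strong convexity over all of $\R^V$ directly.
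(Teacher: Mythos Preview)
Your overall strategy---instantiate Algorithm~\ref{alg:opt} with $\mR=\mD$, identify the per-iteration update as a heat-diffusion step plus a cheap linear correction, and handle the $\vone$ direction separately---matches the paper's. The paper carries out the last point by decomposing the optimization upfront as $\vx=\vx_\perp+\alpha\vone$ and solving for $\alpha$ in closed form, which is equivalent in content to your mean-tracking reduction.

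There is, however, a genuine gap in the second step. You invoke Lemma~\ref{lemma:eigenvalue-type-bounds} to conclude that $\cU$ is ``$\lambda_G$-strongly convex in $\norm{\cdot}_{\mD}$ modulo the kernel,'' and then sum moduli to get $(\lambda_G+\lambda)$-strong convexity of the objective. This inference is incorrect: the lower bound $\cU(\vx)\geq\frac{\lambda_G}{2}\norm{\vx}^2_{\mD}$ on $\vone^{\perp_{\mD}}$ is a Poincar\'e-type norm comparison, not a strong-convexity bound, and for non-Euclidean squared norms these are inequivalent (the paper stresses this point immediately after Theorem~\ref{thm:optimization}). Theorem~\ref{thm:optimization} does not take a strong-convexity modulus as input; it requires that $F$ be a \emph{squared norm} satisfying $\ell_{\mR}\leq F(\vx)/(\tfrac{1}{2}\norm{\vx}^2_{\mR})\leq u_{\mR}$. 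The missing ingredient is Lemma~\ref{lemma:energy-is-seminorm}, which establishes that $\cU(\cdot)+\tfrac{\lambda}{2}\norm{\cdot}^2_{\mD}$ is a squared norm on $\R^V\perp_{\mD}\vone$; once that is in place, Lemma~\ref{lemma:eigenvalue-type-bounds} gives $\ell_{\mR}=\lambda_G+\lambda$ and $u_{\mR}=1+\lambda$, and only then does Theorem~\ref{thm:optimization} yield the $O\bigl(u_{\mR}/(\ell_{\mR}\epsilon^2)\bigr)$ iteration bound. You also omit the upper constant $u_{\mR}$ entirely, which is needed both to set the step size $\eta$ in Algorithm~\ref{alg:opt} and to state the iteration count.
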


Our algorithm can be interpreted as a non-linear analogue of  the expansion of Equation~\ref{eq:expansion}. The main step in each iteration are  i) an addition of external charge $\eta \mD^{-1}s$ in Line 5 and ii) a discrete-time heat-diffusion computation in Line 6.  When $\cL$ is linear, it is possible to switch the order of these steps and move all external charge steps to the beginning of the algorithm, yielding an expansion equivalent to that in Equation~\ref{eq:expansion}.

Compared to the graph case, our running time suffers from a polynomial rather than a logarithmic dependence on $\epsilon$, which is a consequence of the non-smoothness of our potential $\cU$. Just as in the case of graphs, the convergence of our algorithm is inversely proportional to $\lambda_G$, slowing down when the spectral gap is small, i.e., when the  heat diffusion mixes slowly. 
It is an interesting open question to obtain hypergraph analogues of the fast Laplacian solver of~\citet{spielmanNearlylinearTimeAlgorithms2004}, which runs in nearly-linear-time in the size of the graph with no dependence on $\lambda_G$ or any other graph parameters. 

In Section~\ref{sec:empirical-comparison}, we evaluate our algorithm against existing methods, with a particular focus on the PPR vector heuristics of~\citet{takai2020hypergraph}. There, we also propose and evaluate our own heuristic algorithm, based on a different choice of $\mR$, which we conjecture removes the poor dependence on $\lambda_G$, at the cost of a slightly worse dependence on the size of the hypergraph.

\section{First-order methods for non-differentiable squared norms}\label{sec:optimization}

\begin{algorithm}[tb]
   \caption{with prox-generating function $\nicefrac{1}{2}\cdot \|\cdot \|_{\mR}^2$ for $\min F(\vx) - \ip{\vs,\vx}$ in Theorem~\ref{thm:optimization}}\label{alg:opt}
\begin{algorithmic}[1]
   \STATE {\bfseries Input:} Oracle access arbitrary $\vz \in \partial F(\vx)$ on input $\vx$; \: $T \in \N$; \: $\epsilon \in (0,1).$
   \STATE $\vx_0 = 0$  \hfill $\triangleright$ Initialization
  \STATE $\eta =  \nicefrac{\epsilon}{2u_{\mR}}$ \hfill $\triangleright$ {Step-size}
   \FOR{$t=0, ... , T-1$}
   \STATE $\hat{\vx}_{t} \gets \vx_t + \eta\mR^{-1} \vs$
   \STATE $\vx_{t+1} \gets \hat{\vx}_{t} - \eta\mR^{-1}\vz_{t}$ with $\vz_t \in \partial F(\hat{\vx}_{t})$
   \ENDFOR
   \STATE $\bar{\vx}_T = \frac{1}{T} \sum_{t=0}^{T-1}\hat{\vx}_t$
   \RETURN $\vx^{\textrm{out}}_T = (1 - \nicefrac{\epsilon}{2}) \cdot \bar{\vx}_T$
\end{algorithmic}
\end{algorithm}

In this section, we present Algorithm~\ref{alg:opt} for the solution of optimization problems of the form
\begin{equation}\label{eq:optimization-objective}
 \textrm{OPT}_{F,\vs} = \min_{\vx \in \R^n} F(\vx) - \ip{\vs,\vx},
\end{equation}
where $\vs \in \R^n$ and $F: \R^n \to \R_{\geq 0}$ has the form $F(\cdot) = \nicefrac{1}{2} \|\cdot\|^2$, for a (potentially non-smooth)  norm $\| \cdot\|.$ 
Notice that $\textrm{OPT}_{F,\vs} = - F^*(\vs)$ is always non-positive, as the Fenchel dual $F^*$ is also a squared norm.
The application to resolvent computation in Problem~\ref{eq:resolvent} will follow from the following lemma, which is proved in Section~\ref{appendix:proof-of-optimization-convergence}.

\begin{lemma}\label{lemma:energy-is-seminorm}
Given a connected hypergraph $G$ and $\lambda \geq 0,$ the function $\cU(\cdot) + \nicefrac{\lambda}{2} \norm{\cdot}_{\mD}^2$ is a squared norm over $\R^V \perp_{\mD} \vone$.
\end{lemma}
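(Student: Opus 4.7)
The plan is to exhibit an explicit norm $\|\cdot\|_{\mathrm{tot}}$ on $\R^V \perp_{\mD} \vone$ such that $\cU(\vx) + \tfrac{\lambda}{2}\|\vx\|_{\mD}^2 = \tfrac{1}{2}\|\vx\|_{\mathrm{tot}}^2$. By the definition of $\cU$ in Equation~\ref{eq:our-setting}, this quantity equals $\tfrac{1}{2}\bigl(\sum_{h \in E} w_h\, q_h(\vx)^2 + \lambda\,\|\vx\|_{\mD}^2\bigr)$, where $q_h(\vx) \defeq \min_{u\in\R}\|\vx_h - u\vone_h\|_h$, so I would define
$$
\|\vx\|_{\mathrm{tot}} \defeq \sqrt{\sum_{h \in E} w_h\, q_h(\vx)^2 + \lambda\,\|\vx\|_{\mD}^2}
$$
and verify that this is a norm on $\R^V \perp_{\mD} \vone$.

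The first step is to note that each $q_h$ is a seminorm on $\R^V$: it is the pullback along the restriction map $\vx \mapsto \vx_h$ of the quotient norm induced by $\|\cdot\|_h$ on $\R^h / \mathrm{span}(\vone_h)$, so absolute homogeneity and the triangle inequality descend directly from those of $\|\cdot\|_h$. Consequently $\sqrt{w_h}\,q_h$ and $\sqrt{\lambda}\,\|\cdot\|_{\mD}$ are seminorms as well.

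The key step is then the general fact that whenever $p_1,\dots,p_k$ are seminorms on $\R^V$, the function $P(\vx) \defeq \sqrt{\sum_i p_i(\vx)^2}$ is also a seminorm. Absolute homogeneity is immediate from $p_i(\alpha \vx) = |\alpha|\,p_i(\vx)$, and the triangle inequality follows from combining the triangle inequality of each $p_i$ with Minkowski's inequality in $\R^k$:
$$
P(\vx+\vy) \leq \sqrt{\sum_i (p_i(\vx)+p_i(\vy))^2} \leq \sqrt{\sum_i p_i(\vx)^2} + \sqrt{\sum_i p_i(\vy)^2} = P(\vx) + P(\vy).
$$
Applying this to the family $\{\sqrt{w_h}\,q_h\}_{h \in E} \cup \{\sqrt{\lambda}\,\|\cdot\|_{\mD}\}$ shows that $\|\cdot\|_{\mathrm{tot}}$ is a seminorm on $\R^V$.

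Finally, I would verify positive-definiteness on $\R^V \perp_{\mD} \vone$. If $\lambda > 0$, it is immediate, since $\|\vx\|_{\mD} > 0$ for every nonzero $\vx$. If $\lambda = 0$, any nonzero $\vx \perp_{\mD} \vone$ satisfies $\|\vx\|_{\mathrm{tot}}^2 = 2\cU(\vx) \geq \lambda_G\,\|\vx\|_{\mD}^2 > 0$ by Lemma~\ref{lemma:eigenvalue-type-bounds} together with $\lambda_G > 0$ for connected $G$. The only subtle ingredient is the Minkowski step, but it is a one-line estimate once the correct seminorms are isolated.
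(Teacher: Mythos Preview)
Your proposal is correct and follows essentially the same approach as the paper: recognize each $q_h(\vx)=\min_u\|\vx_h-u\vone_h\|_h$ as a seminorm, combine the weighted seminorms (together with $\sqrt{\lambda}\,\|\cdot\|_{\mD}$) via the $\ell_2$-aggregation/Minkowski step to get a seminorm, and invoke Lemma~\ref{lemma:eigenvalue-type-bounds} with connectedness for positive-definiteness on $\R^V\perp_{\mD}\vone$. Your packaging via quotient norms and the general ``$\sqrt{\sum_i p_i^2}$ is a seminorm'' lemma is slightly more abstract than the paper's direct computation, but the argument is the same in substance.
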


Algorithm~\ref{alg:opt} is based on optimistic mirror descent~\cite{rakhlinOnlineLearningPredictable2013} with prox-generating function given by the squared norm $\frac{1}{2}\|\cdot \|^2_{\mR}$ for a positive-definite linear operator $\mR$. It converges to a multiplicative approximation of $\textrm{OPT}_{F,\vs}$ at a rate that only depends on the Poincar\'e constants $\ell_{\mR}, u_{\mR}$, i.e the optimal constants satisfying: 
\begin{equation}\label{eq:poincare}
\forall \, \vx  \in \R^n \setminus \{0\} \: : \:\ell_{\mR} \leq \frac{F(\vx)}{\frac{1}{2}\|\vx\|_{\mR}^2} \leq u_{\mR}.
\end{equation}

The following guarantee is proved in Section~\ref{appendix:proof-of-optimization-convergence} in the Supplementary Material. 
\begin{theorem}\label{thm:optimization}
For any squared norm $F$ satisfying Equation~\ref{eq:poincare} with respect to some $\ell_{\mR},u_{\mR} >0$, error parameter $\epsilon >0 $, and choice of $T = O\left(\frac{u_{\mR}}{\ell_{\mR}\epsilon^2}\right)$, the point $\vx^{\text{out}}_T$ produced by Algorithm~\ref{alg:opt} satisfies
\[
    F(\xout_T) - \ip{\vs,\xout_T} -  \mathrm{OPT}_{F,\vs} \leq \epsilon \cdot |\mathrm{OPT}_{F,\vs}|.
\]
\end{theorem}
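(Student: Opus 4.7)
The plan is to analyze Algorithm~\ref{alg:opt} as a specialized mirror-descent scheme with quadratic prox-generator $\tfrac{1}{2}\|\cdot\|_{\mR}^2$, exploiting the fact that the linear term $-\langle\vs,\cdot\rangle$ of $\Phi(\vx) \defeq F(\vx) - \langle\vs,\vx\rangle$ is absorbed into an optimistic lookahead $\hat{\vx}_t = \vx_t + \eta\mR^{-1}\vs$ \emph{before} the subgradient step $\vx_{t+1} = \hat{\vx}_t - \eta \mR^{-1}\vz_t$. This two-stage update is an extragradient step with respect to the known linear component of the gradient, and it is what will make cross terms cancel in the regret inequality. The trailing rescaling $\vx^{\text{out}}_T = (1-\epsilon/2)\bar{\vx}_T$ is then used to convert the resulting additive regret bound into the multiplicative approximation claimed.

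Before the recursion, I would establish three properties. Since $F$ is a squared norm it is $2$-homogeneous, so $\vs \in \partial F(\vx^*)$ together with Euler's relation $\langle \vz, \vx\rangle = 2 F(\vx)$ for $\vz \in \partial F(\vx)$ give $\langle\vs,\vx^*\rangle = 2F(\vx^*)$, and hence $\mathrm{OPT} = -F(\vx^*) = -F^*(\vs) \le 0$ with $|\mathrm{OPT}| = F(\vx^*)$. The Poincaré lower bound then controls the initialization error: $\|\vx^*\|_{\mR}^2 \le 2|\mathrm{OPT}|/\ell_{\mR}$. Finally, $2$-homogeneity combined with Fenchel--Young forces $\|\vz_t\|_{*}^2 = 2F(\hat{\vx}_t)$, where $\|\cdot\|_*$ is the dual norm to $\|\cdot\|$; and the dual norm equivalence $\|\cdot\|_{\mR^{-1}} \le \sqrt{u_{\mR}}\,\|\cdot\|_{*}$ induced by the Poincaré upper bound yields the crucial \emph{self-bounding} subgradient estimate $\|\vz_t\|_{\mR^{-1}}^2 \le 2u_{\mR}\,F(\hat{\vx}_t)$.

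Expanding $\tfrac{1}{2}\|\vx^* - \vx_{t+1}\|_{\mR}^2$ using the update rule and applying convexity at $\hat{\vx}_t$ then gives
\[
\eta\bigl[\Phi(\hat{\vx}_t) - \mathrm{OPT}\bigr] \le \tfrac{1}{2}\|\vx^* - \vx_t\|_{\mR}^2 - \tfrac{1}{2}\|\vx^* - \vx_{t+1}\|_{\mR}^2 + \tfrac{\eta^2}{2}\|\vz_t\|_{\mR^{-1}}^2 - \tfrac{\eta^2}{2}\|\vs\|_{\mR^{-1}}^2,
\]
where the mixed terms $\langle \vz_t, \mR^{-1}\vs\rangle$ cancel exactly thanks to the lookahead. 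Summing from $t=0$ to $T-1$ telescopes the Bregman terms (using $\vx_0 = 0$); substituting the self-bounding subgradient bound together with $\eta u_{\mR} = \epsilon/2$ and dropping the negative $\|\vs\|_{\mR^{-1}}^2$ contribution then gives
\[
(1-\epsilon/2)\sum_t F(\hat{\vx}_t) - \sum_t \langle\vs,\hat{\vx}_t\rangle - T\cdot\mathrm{OPT} \le \frac{|\mathrm{OPT}|}{\eta\,\ell_{\mR}}.
\]
Dividing by $T$ and invoking Jensen's inequality for the convex $F$ at $\bar{\vx}_T = \tfrac{1}{T}\sum_t \hat{\vx}_t$ produces $(1-\epsilon/2)F(\bar{\vx}_T) - \langle\vs,\bar{\vx}_T\rangle - \mathrm{OPT} \le |\mathrm{OPT}|/(T\eta\ell_{\mR})$. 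The proof concludes by using $2$-homogeneity to factor $\Phi(\vx^{\text{out}}_T) = (1-\epsilon/2)\bigl[(1-\epsilon/2)F(\bar{\vx}_T) - \langle\vs,\bar{\vx}_T\rangle\bigr]$, so that together with $\mathrm{OPT} \le 0$ one obtains
\[
\Phi(\vx^{\text{out}}_T) - \mathrm{OPT} \le \tfrac{\epsilon}{2}|\mathrm{OPT}| + \frac{|\mathrm{OPT}|}{T\eta\ell_{\mR}};
\]
choosing $T = \Theta(u_{\mR}/(\ell_{\mR}\epsilon^2))$ then makes both summands $O(\epsilon|\mathrm{OPT}|)$.

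The main obstacle is that $F$ is neither smooth nor Lipschitz --- its subgradients $\vz$ can be arbitrarily large --- so the standard bounded-subgradient mirror-descent analysis does not apply and a plain additive regret guarantee cannot be promoted to a multiplicative one in the usual way. Two interacting ingredients overcome this: the optimistic lookahead cancels all $\langle\vz_t,\vs\rangle$ cross terms so that only $\|\vz_t\|_{\mR^{-1}}^2$ remains in the regret, which the squared-norm structure lets us fold back into $\sum_t F(\hat{\vx}_t)$ with coefficient $\eta u_{\mR} = \epsilon/2 < 1$; and the terminal $(1-\epsilon/2)$ rescaling, combined with the $2$-homogeneity of $F$, is precisely what converts the resulting additive bound on $F(\bar{\vx}_T)$ into the multiplicative error $\epsilon|\mathrm{OPT}|$.
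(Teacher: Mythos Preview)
Your proposal is correct and follows essentially the same approach as the paper: the same optimistic one-step regret inequality (the paper's Lemma~\ref{lem:optimistic-md}), the same self-bounding subgradient estimate $\tfrac{1}{2}\|\vz_t\|_{\mR^{-1}}^2 \le u_{\mR} F(\hat{\vx}_t)$ (the paper's Corollary~\ref{cor:bounded-subgradient-norm}), and the same telescoping--Jensen--homogeneity wrap-up with the identical choice $T \ge 4u_{\mR}/(\ell_{\mR}\epsilon^2)$. The only cosmetic difference is that you derive the self-bounding bound via the dual-norm inequality $\|\cdot\|_{\mR^{-1}} \le \sqrt{u_{\mR}}\,\|\cdot\|_*$ together with $\|\vz_t\|_*^2 = 2F(\hat{\vx}_t)$, whereas the paper obtains it by a short Fenchel-dual computation; both routes yield the same inequality.
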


To emphasize the novelty of this result, we remark again that, to the best of our knowledge, the condition in Equation~\ref{eq:poincare} cannot be exploited by any existing first-order methods for this problem.
It is only equivalent to $\ell_{\mR}$-strong-convexity and
$u_{\mR}$ strong convexity when $F$ itself arises from a Mahalanobis norm $\|\cdot\|_{\mA}$, which is not the case in our application. 

\section{Experiments}\label{sec:experiments}

\emph{Manifold learning} problems include identification and recovery of some lower-dimensional parametrization of datapoints embedded in some higher-dimensional space. 
Many approaches utilize spectral methods on graphs 
 that are constructed from the geometric embedding of datapoints \cite{belkin2001laplacian, belkin2003laplacian, yan2006graph, talmon2013diffusion}. 
 Many spectral graph methods focus on partitioning datapoints, in either an unsupervised (e.g. using eigenvectors of the Laplacian) or a supervised manner (e.g. propagating a small set of known labels using the graph heat diffusion). In this section, we consider semi-supervised community detection tasks in the presence of manifold structure. In these problems, communities with low-dimensional parametrization are present within the data and a small number of true labels are known. Experiments evaluating our resolvent algorithm of Section~\ref{sec:resolvent} are presented in Section~\ref{sec:empirical-comparison}.
 
 Basic graph-based methods may suffer when data is sampled at heterogeneous densities or when communities only become linearly separable in some augmented feature space. Algorithmic extensions have been proposed to deal with each of these specific instances \cite{li2019survey, you2016community, donoho2003hessian, li2016low}, but in practice correctly employing these extensions requires identifying and appropriately responding to the failure mode present in the revelant dataset. We present numerical experiments demonstrating that, unlike diffusions on graphs, diffusions on hypergraphs can overcome each of these distinct challenges without needing problem-specific algorithmic extensions. Our results are illustrated in Figure~\ref{fig:manifold_learning}. Here we provide an overview, and provide full detail in Section~\ref{ssec:manifold-learning-details}.

We consider three separate problem instances, described in Figure~\ref{fig:manifold_learning}. In each case, data are sampled noisily from each community, and a small number of true labels are revealed. We construct the $k$-nearest neighbor graph and hypergraph. We then consider the discrete-time diffusion resulting from the energy functional induced by taking $\norm{\cdot}_h = \norm{\cdot}_\infty$ (see Equation~\ref{eq:infinity_energy_functional}). These diffusions are initialized using a vector $\vx_0$ constructed from a small number of randomly-sampled true labels. We then run the diffusion for some number of iterations $t$, and use vector $\vx_t$ to estimate true labels.

As illustrated in Figure~\ref{fig:manifold_learning}, diffusions along the nearest-neighbor hypergraph better-capture community structure compared to graph diffusions. The top row of Figure~\ref{fig:manifold_learning} compares area-under-the-ROC-curve (AUC) values achieved by each method. AUC values measure tradeoff between false-positive and true-positive rates, with AUC values close to 1.0 signaling that most sweep-cut vectors identify the communities nearly perfectly. Comparing the AUC distributions highlights the hypergraph's ability to overcome the three very different kinds of challenging structure present in the problem instances. In contrast, the sample estimated labels demonstrate how the generic graph method experiences different failure modes in each case.
\begin{figure*}[t]
\begin{center}
\includegraphics[width=1.0\textwidth]{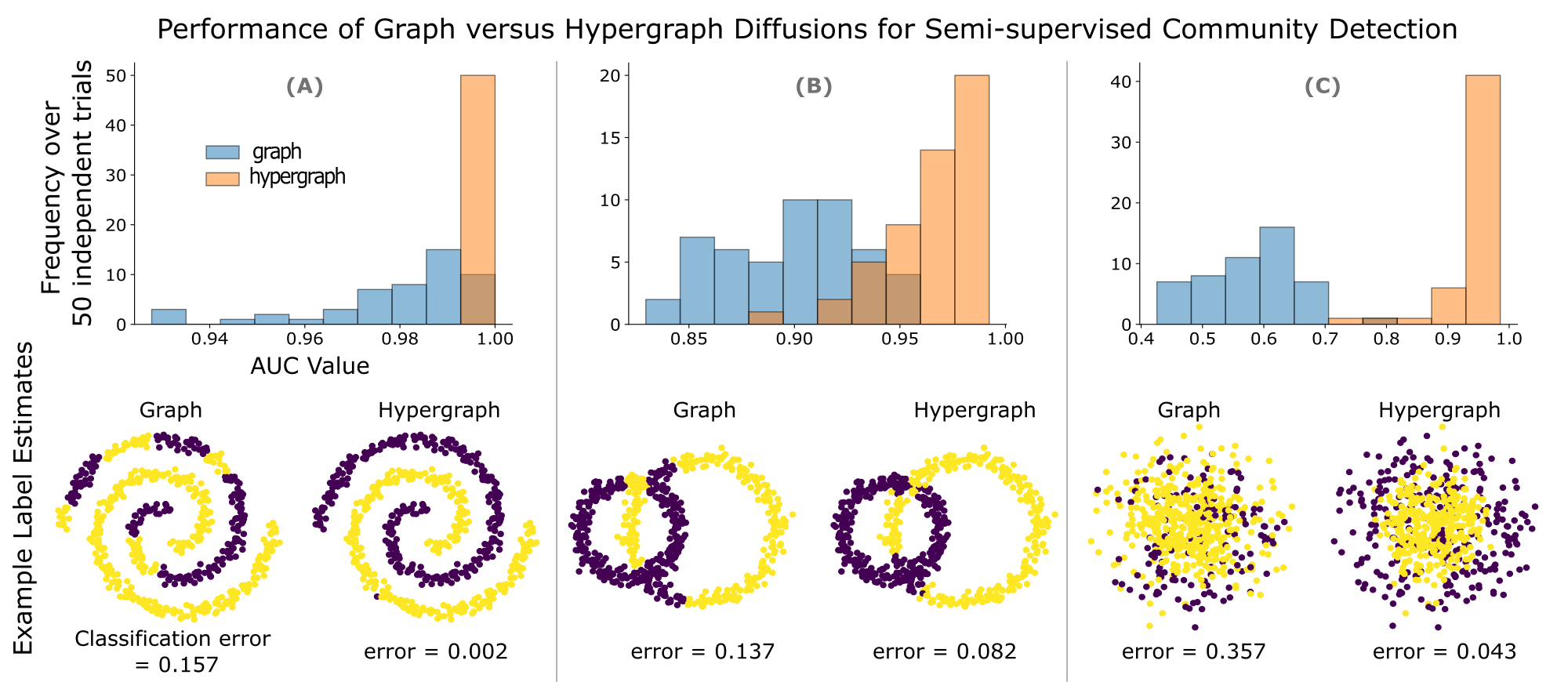}
 \caption{
 Comparing hypergraph versus graph diffusions on semi-supervised community detection problems with manifold structure. We consider three separate problems in which ground-truth communities are (A) interlocking spirals in $\R^2$, (B) overlapping rings of unequal diameter in $\R^2$, and (C) concentric hyperspheres in $\R^5$. Experiment procedure is detailed in Section~\ref{ssec:manifold-learning-details}. 
 }\label{fig:manifold_learning}
\end{center}
\end{figure*}

\section{Acknowledgements}

 LO is supported by NSF CAREER 1943510. AC and AD are supported by NSF DGE 2140001.

\bibliography{lib/bib, orecchia, icml2023_fixed}
\bibliographystyle{plainnat}

\newpage
\appendix
\onecolumn

\section*{Supplementary Material}

\section{Notation and mathematical preliminaries}\label{sec:appendix-preliminaries}
For $T \in \mathbb{N}$, we denote by $[T]$ the set $\{1,2,\dots,T\}$.

\paragraph{Hypergraph preliminaries} A weighted hypergraph $G=(V,E,\vw)$ is a collection of vertices $V$ and hyperedges $E \subseteq 2^V$ with non-negative hyperedge weights $\vw \in \R^{E}_{\geq 0}.$ 
The degree $\deg(i) \defeq \sum_{h \ni i} w_h$ of a vertex $i \in V$ is the sum of the weights of hyperedges containing $i$. Given a set $S\subseteq V$, its \emph{volume} is the size of the degrees of its vertices: $\operatorname{vol}(S) \defeq \sum_{i\in S} \deg(i)$.  We denote by $\mD \in \R^{V\times V}$ the diagonal matrix of degrees of $G$. A hypergraph is said to be \emph{$k$-uniform} if every hyperedge has cardinality $k$. In this sense, a graph is simply a $2$-uniform hypergraph. For any $S\subseteq V$ we denote by $\partial S \subseteq E$ the \textit{boundary} of the set $S$, i.e. $h\in \partial S$ if $\exists i, j\in h$ such that $i\in S, j\in \overline{S}$.

\paragraph{Linear algebra preliminaries} We denote vectors and matrices in boldface e.g. $\vx \in \R^V$ and $\mA \in \R^{n\times n}$. Given a vector $\vx\in \R^V$ the vector $\vxh\in \R^h$ is the restriction of $\vx$ on $h$, i.e. the vector with entries $\vx(i)$ for $i\in h$.  We use $\1$ to denote the all-one vector in $\R^n$. Given two vectors $\vx,\vy \in \R^n$, we denote by $\ip{\vx,\vy}$ the standard inner product between $\vx$ and $\vy$, and we write $\vx \perp \vy$ to indicate $\ip{\vx,\vy}=0$. Similarly, we use $\ip{\vx,\vy}_{\mA}$ and $\vx \perp_{\mA} \vy$ for the same expressions with respect to the inner product given by a positive definite operator $\mA.$ We may also use $\vone_h$ to denote the all-one vector in $\R^h$. We recall that a norm $\norm{\cdot}: \R^n \to \R$ is any function satisfying the following properties:
\begin{description}
    \item[Triangle inequality] $\norm{\vx+ \vy} \leq \norm{\vx} + \norm{\vy}$ for every $\vx,\vy\in \R^n$,
    \item[Absolute homogeneity] $\norm{\alpha \vx} = |\alpha| \norm{\vx}$ for every $\vx \in \R^n$ and $\alpha \in \R$,
    \item[Positive definiteness] $\norm{\vx} \geq 0$ for all $\vx \in \R^n$ and $\norm{\vx}=0 \iff \vx = 0$.
\end{description}
For any $p \in \mathbb{N}$ the $\ell_p$-norm is the norm given by:
\[
    \norm{\vx}_p \defeq \sqrt[p]{\sum_{i\in n} |\vx(i)|^p
    }
\]
and the $\ell_\infty$-norm is given by:
\[
    \norm{\vx}_\infty \defeq \max_{i\in [n]} |\vx(i)|.
\]
Given a norm $\norm{\cdot}$ on $\R^V$, its dual norm is the norm defined as:
\[
    \norm{\vy}_* \defeq \max_{\norm{\vx}\leq 1} \ip{\vx,\vy}.
\]
For any positive-definite matrix $\mA \in \R^{n\times n}$ the norm induced by $\mA$ is $\norm{\vx}_{\mA}\defeq \sqrt{\vx^\top \mA \vx}$. 
\paragraph{Subgradients and convexity} The subdifferential of a convex function $f:\R^V\to \R$ at a point $\vx$ is the set:
\[
    \partial f(\vx) \defeq \{\vy\in \R^n \mid f(\vz) \geq f(\vx) + \ip{\vz-\vx,\vy}, \forall \vz \in \R^n\}.
\]
When $f$ is a convex function, $\partial f(\vx)$ is non-empty at every point $\vx$ in the interior of the domain of $f$. Moreover, if $f$ is convex and differentiable then $\partial f(\vx) = \{\nabla f(\vx)\}$ for every $\vx$ in the interior of the domain of $f$.
An element of the subdifferential of $f$ at $\vx$ is called a \emph{subgradient} of $f$ at $\vx$.
We will make repeated use of the following simple properties:

\begin{property}\label{fact:norm-condition} For every $\vx \in \R^V$:
    $
    \partial \left({1\over 2}\norm{\vx}^2\right) = \norm{\vx} \cdot \partial \left(\norm{\vx}\right)= \norm{\vx}\cdot \argmax_{\|\vy\|_* \leq 1} \ip{\vy,\vx}.
    $
\end{property}
\begin{property}\label{fct:doublenorm}
    If $\vz \in \partial\left({1\over 2} \norm{\vx}^2\right),$ then $\ip{\vz, \vx} = \norm{\vx}^2.$ 
\end{property}
These properties follow from standard results about subgradients we refer the reader to the book of Boyd and Vandenberghe~\cite{boyd2004convex} for more details.

\paragraph{Submodular functions} Given a finite set $h$ a function $\delta : 2^h \to \R$ is submodular if, for all $A, B \subseteq h$ we have:
\[
    \delta(A) + \delta(B) \geq \delta(A\cap B) + \delta (A \cup B).
\]
The Base polyhedron of a submodular function $\delta:2^H \to \R$ is the set:
\[
    B(\delta) \defeq \{\vy \in \R^h \mid \vy(h) = \delta(h), \vy(S) \leq\delta(S) \text{ for every }S \subseteq h\},
\]
where, for any $S \subseteq h$, $\vy(S)\defeq \sum_{i\in S} \vy(i)$. The Lov\'asz extension of $\delta$ is the function $\bar{\delta}:\R^h \to \R$ given by:
\[
    \bar{\delta}(\vx) \defeq \argmax_{\vy\in B(\delta)} \ip{\vy ,\vy}
\]
Submodular functions are a central object of study in combinatorial optimization and have found numerous applications throughout Computer Science. We refer the reader to the monograph by Bach~\cite{bachLearningSubmodularFunctions2013} for more information on subbmodular functions and their applications.

\section{Diffusions and 
energy functionals in the graph setting}\label{sec:graph_case}

In this section, we review the graph equivalent of some of the concepts described in this paper. We do not introduce any novel results here. Rather, we aim to provide guidance for the reader who may not be familiar with known results about graphs.

\paragraph{Graph matrices} Given a weighted, undirected graph $G=(V,E,w)$, with $n$ vertices and $m$ edges, its adjacency matrix is the matrix $\mA\in \R^{n\times n}$ given by:
\[
    \mA_{i,j} = \begin{cases}
        w_{ij}& \text{if }i \sim j,\\
        0 & \text{otherwise,}
    \end{cases}
\]
its degree matrix is the diagonal matrix given by $\mD_{i,i} = \deg(i)$, where $\deg(i)$ is the weighted degree of vertex $i$ in $G$. The weight matrix is the diagonal matrix $\mW \in \R^{m\times m}$, where $\mW_{e,e} = w_e$. The Laplacian matrix is the matrix $\mL \in \R^{n\times n}$ defined as $\mL \defeq \mD-\mA$. Even though the graph considered is undirected, we will assume that we have fixed a choice of edge orientation for each edge. The incidence matrix of $G$ is the matrix $\mB \in \R^{m\times n}$ having one row $\boldsymbol{\chi}_{uv}\in \R^n$ for each edge $uv$ given with:
\[
    \boldsymbol{\chi}_{uv}(i) = \begin{cases}
        -1 &\text{if }i=u,\\
        1 &\text{if }i=v,\\
        0 &\text{otherwise.}\\
    \end{cases}
\]

It is easy to see that, regardless of the choice of edge orientation, we have: $\mL = \mB^{\top}\mW\mB$.

\paragraph{Spectral gap} Given the matrices defined above, the \emph{spectral gap} of $G$ is the quantity:
\[
    \lambda (G) \defeq \min_{\vx \perp_{\mD} \one} {\vx^T \mL \vx \over \vx^T \mD\vx},
\]

i.e. the second smallest generalized eigenvalue of $\mL$ with respect to the $\norm{.}_\mD$ norm. A cornerstone of spectral graph theory is Cheeger's inequality which relates the value of $\lambda(G)$ to the conductance of $G$:
\begin{equation}\label{eq:cheeger}
    {\lambda(G) \over 2} \leq \phi(G) \leq \sqrt{2\lambda(G)}.
\end{equation}
The proof of the above result also yields fast algorithms to finding a small conductance cut of $G$.

\paragraph{Heat diffusion} A prototypical example of diffusions on graph is given by the heat equation:
\begin{equation}\label{eq:heat-equation}%
    \begin{cases} 
        \dot{\vx}(t) = -\mD^{-1}\mL \vx(t)\\
        \vx(0) = \vx_0,
    \end{cases}
\end{equation}

Equation \eqref{eq:heat-equation} is a discrete-space analogue of the heat equation on manifolds. It describes a system in which the temperature of each vertex moves towards the average temperature of its neighbors. It is a linear system of ordinary differential equations and its solution is the \emph{heat kernel}:
\begin{equation}
    \vx(t) = e^{-t\mD\mL}\vx(0).
\end{equation}

From the above, it follows that the worst-case (over the choice of $\vx_0 \perp 1$) convergence of \eqref{eq:heat-equation} is directly linked to  the spectral gap $\lambda_2(\mD^{-1}\mL) = \lambda_2(G)$.

Diffusions often arise as steepest descent flows that lead to the minimizer of some objective function. For instance, Equation~$\eqref{eq:heat-equation}$ leads to minimization of the Laplacian energy functional:
\[
    \cU_G(\vx) = {1\over 2}\vx^\top L\vx,
\]
and it follows the steepest descent flow with respect to the $\norm{\cdot}_D$-norm. 

\paragraph{Random walks} 
A fundamental stochastic process on a graph $G$ is its natural random walk, in which one starts at a random vertex $v$, and then proceeds to repeatedly move to a neighbor $u$ chosen at random with probability proportional to the weight $w_{uv}$ of the edge between $u$ and $v$ in $G$. It is easy to see that if $v$ is selected according to a probability distribution $\vp_{t}$ then the probability distribution of the next vertex $u$ is given by:
\begin{equation}
    \vp_{t+1} = \mA\mD^{-1} \vp_{t}.
\end{equation}

In particular, the above yields that if the first vertex of a random walk is chosen according to a probability distribution $\vx_0$ after $t$ steps the walker finds themselves at a random vertex chosen according to the probability distribution: $\vp_{t} = (\mA\mD^{-1})^t \vp_{0}$. This process is a reversibile Markov chain which may or may not be periodic. In order to ensure periodicity and guarantee convergence, one often focuses on the \emph{lazy random walk process} instead. Here, at each iteration the walker flips a fair coin, and with probability $1/2$ they remain where they are, while with the remaining probability they perform a step of the standard random walk. The evolution of the probability distribution under this process is given by:
\begin{equation}
    \vp_{t+1} = {1\over 2} (\mI + \mA\mD^{-1})\vp_{t}.
\end{equation}

Rewriting the above we obtain:
\begin{equation}\label{eq:lazy-rw-alg}
    \vp_{t+1} = \vp_t - {1\over 2} \mL \mD^{-1}\vp_t.
\end{equation}

Moreover, by defining $\vx_t = \mD^{-1}\vp_t$ Equation~\eqref{eq:lazy-rw-alg} gives:
\[
    \vx_{t+1} = \vx_t - {1\over 2} \mD^{-1} \mL \vx_t.
\]
which is the forward Euler discretization of the dynamic in \eqref{eq:heat-equation}. In particular, this shows that, up to a change of basis, the lazy random walk process can be interpreted as a discrete-time version of the heat diffusion process. 

\paragraph{Personalized PageRank vectors} Another central object of study is the personalized PageRank (PPR) vector. Here, one is given a starting seed distribution $\vs\in \R^V$ over the vertices of a graph $G$ as well as teleportation constant $\alpha\in (0,1)$. The PageRank process amounts to selecting a geometrically distributed number of steps $T \sim Geom(\alpha)$ and then taking $T$ steps of the lazy random walk process starting from a random vertex sampled according to $\vs$. The resulting probability vector $\vp_\alpha(\vs)$ is the personalized PageRank (PPR) vector given by:
\begin{equation}\label{eq:}
    \vp_\alpha(s) = {\alpha} \left(\sum_{t=0}^\infty (1-\alpha)^t {1\over 2^t} (I+\mA\mD^{-1})^t \right) \vs.
\end{equation}

The PPR vector $\vp_\alpha(s)$ is also the unique fixed point of the iterative process given by:
\begin{equation}\label{eq:graph-pagerank-fixed-point}
    \vp_{t+1} = \alpha \vs + (1-\alpha) {1\over 2}\left(\mI+\mA\mD^{-1}\right)\vp_t,
\end{equation}
this allows one to interpret the PPR as the unique stationary distribution of a stochastic process in which one starts at an arbitrary vertex of a graph and repeatedly does the following: with probability $\alpha$ they move to a new random vertex selected according to $\vs$ and with probability $1-\alpha$ they perform a step of the lazy random walk.

While these probabilistic interpretations help gain intuition on the behavior of the PPR vector, in the interest of analysis and generalization we prefer to adopt a variational perspective on it. To this end, we note that the PPR vector is closely linked to the solution to the following optimization problem:
\begin{equation}\label{eq:graph-variational-pagerank}
    \underset{\vx \in \R^n}{\operatorname{minimize}} \quad {a\over 2}\vx ^T \mL \vx +{b\over 2}\norm{\vx - \vx_\vs}^2_{\mD}.
\end{equation}

In particular, setting $a={1\over 2}$ and $b={2\alpha-1 / 1-\alpha}$ and $\vx_\vs = \alpha / (2\alpha-1)\mD^{-1} \vs$, then the optimal solution $\vx^*$ to \eqref{eq:graph-variational-pagerank} equals $\mD^{-1} \vp_{\alpha}(\vs)$. This correspondence between the probabilistic and the variational definitions of PageRank is generalized for the case of hypergraphs in Lemma~\ref{lem:hypergraph-pagerank-fixed-point}.

\section{Hypergraph heat diffusions}\label{sec:properties}

In this section we show properties of the subdifferential $\cL(\vx)$, properties of the trajectories $\vx(t)$ that arise from the ODE in Theorem~\ref{thm:characterizing_subgradient_flow}, and properties of the discretized sequence $\{\vx_t\}_{t=0}^T$ as in Theorem~\ref{thm:discrete-time-diffusion}. We begin with two straightforward consequences of the form of the Laplacian operator $\cL$ given in Equation~\ref{eq:laplacian}.
\begin{property} \label{prop:fixed-point}
For any $\alpha \in \R$, $\cL(\alpha \vone) = \vec{0}.$  
\end{property}
\begin{proof}
    For any norm $\norm{.}_h$,
    $
        \min_{u\in \R}\norm{\alpha\vone_h - u\1_h}_h = 0
    $
    The result thus follows from Equation~\ref{eq:laplacian}:
    \[
        \cL(\alpha\vone) = \left\{\sum_{h\in E} w_h \left(\min_{u\in \R}\norm{\alpha\vone_h - u\1_h}_h\right) \vy_h) \right\} = \vec{0}
    \]
\end{proof}
\begin{property} \label{prop:inner-prod-with-cL}
    For any $\vx\in \R^n$,
    \[
        \langle \vx, \vz\rangle = 2\cU(\vx) \quad \forall \vz\in\cL(\vx).
    \]
\end{property}
\begin{proof}
    By Equation~\ref{eq:laplacian}, $\forall \vz \in \cL(\vx)$
    \[
        \vz = \sum_{h\in E} w_h \left(\min_{u\in \R}\norm{\vxh - u\1_h}_h\right) \vy_h \quad \text{ where} \quad \vy_h \in \argmax_{\substack{\vy \perp \vone\\ \norm{\vy}_{h,*} \leq 1}} \ip{\vy, \vx_h}
    \]
    Given this characterization, for each $\vy_h$
    \[
        \langle \vx_h,\vy_h\rangle = \max_{\substack{\vy \perp \vone\\ \norm{\vy}_{h,*} \leq 1}} \langle \vx_h, \vy\rangle = \max_{\norm{\vy}_{h,*} \leq 1}\min_{u\in\R} \langle \vx_h - u\vone, \vy\rangle = \min_{u\in\R}\norm{\vx_h - u\vone}_h
    \]
    Thus
    \[
        \langle \vx, \vz\rangle = \sum_{h\in E} w_h \left(\min_{u\in \R}\norm{\vxh - u\1_h}_h\right) \langle \vx_h,\vy_h\rangle = \sum_{h\in E} w_h \left(\min_{u\in \R}\norm{\vxh - u\1_h}_h\right)^2 = 2\cU(\vx)
    \]
    by the definition of $\cU(\vx)$ in Equation~\ref{eq:our-setting}.
\end{proof}

We now consider functions $\vx(t)$, solutions to an ODE of the form in Theorem~\ref{thm:characterizing_subgradient_flow}, and establish some fundamental properties of this trajectory.
\begin{property}\label{prop:mean-preservation}
For $\vx(t)$ as in Theorem~\ref{thm:characterizing_subgradient_flow},  $\nicefrac{d}{dt}( \ip{\vone,\vx(t)}_{\mD}) = 0$ almost everywhere on  $t \in [0,\infty)$.
\end{property}
\begin{proof}
    Wherever $\vx(t)$ is differentiable, $ \nicefrac{d}{dt}\ip{\vone,\vx(t)}_{\mD} = \ip{\vone,\dot{\vx}(t)}_{\mD}$. By Theorem~\ref{thm:characterizing_subgradient_flow}, $\dot{\vx}(t) \in -\mD^{-1}\cL(\vx(t))$ a.e. on $t\in [0,\infty)$. Everywhere this holds, we thus have
    \[
        \nicefrac{d}{dt}\ip{\vone,\vx(t)}_{\mD} = -\ip{\vone,\mD^{-1}\cL(\vx(t))}_{\mD} = -\ip{\vone,\cL(\vx(t))} = 0
    \]
    as, by Equation~\ref{eq:laplacian}, $\cL(\vx) \perp \vone$ for all $\vx \in \R^V$.
\end{proof}
This property has the following immediate consequence:
\begin{property}\label{property:pi-unchanging-cts-time}
    Recall the definition $\pi(\vx) \defeq \nicefrac{\ip{\vx,\vone}_{\mD}}{\|\vone\|^2_{\mD}} \cdot \vone$. Then, for any $\vx(t)$ as in Theorem~\ref{thm:characterizing_subgradient_flow}, $\pi(\vx(t)) = \pi(\vx_0) \ \forall t\in [0,\infty)$.
\end{property}

\paragraph{Maximum principle} The standard theory of diffusion processes~\cite{} typically requires that a diffusion also obey a maximum principle, which asks that $\cL_i(\vx)$ be non-negative whenever vertex $i$ is a local maximum of $\vx$ , i.e., all vertices $j$ that share a hyperedge with $i$ must have $\vx_j \leq \vx_i.$
Equivalently, this asks that the diffusion process infinitesimally decreases the $\vx$-value at local maximimum of $\vx.$ The hypergraph heat diffusion satisfies this property under some further assumptions on the norms $\|\cdot \|_h$. Namely, we'll show that a maximum principle holds whenever the norms $\norm{\cdot}_h$ are \textit{monotonic}:

\begin{definition}
    A norm $\norm{\cdot}$ is \textit{monotonic} if for any $\vx, \vy\in \R^n$, $\abs{\vx_i} \leq \abs{\vy}_i \ \forall i\in [n]$ implies $\norm{\vx}\leq \norm{\vy}$
\end{definition}

Subgradients of monotone norms have the following property, which will be key in establishing our maximum principle.
\begin{property}\label{prop:subgrad-of-monotone-norms}
    Given $\norm{\cdot}$ monotonic and $\vx$ such that $\max_i \vx_i \geq 0$ and $\min_i \vx_i \leq 0$, then for all $\vy\in\partial\norm{\vx}$,
    \[
        \vy_{i_\text{max}} \geq 0 \quad \text{ for }\quad i_\text{max} \in \argmax_{i\in [n]} (\vx_i) \qquad\text{and}\qquad \vy_{i_\text{min}} \leq 0 \quad \text{ for }\quad i_\text{min} \in \argmin_{i\in [n]} (\vx_i)
    \]
\end{property}
\begin{proof}
    We begin by considering $i_\text{max}$. Consider a vector $\vz$ such that $\vz_i = \vx_i$ for all $i\neq i_\text{max}$, and
    \[
        \vz_i \leq \vx_i \quad \text{ and }\quad |\vz_i| \leq |\vx_i| \quad \text{ for } i=i_\text{max}.
    \]
    Observe that it is always possible to construct such a $\vz$ given $\max_i \vx_i \geq 0$. By the definition of the subgradient, for all $\vy\in\partial\norm{\vx}$,
    \[
        \norm{\vz} \geq \norm{\vx} + \langle \vz-\vx,\vy\rangle
    \]
    Because $\norm{\cdot}$ is monotone, and since by construction $|\vz_i| \leq |\vx_i|$ for all $i$, $\norm{\vz} \leq \norm{\vx}$. Thus for all $\vy\in\partial\norm{\vx}$ the above implies
    \[
        0 \geq \langle \vz-\vx,\vy\rangle = (\vz_{i_\text{max}}-\vx_{i_\text{max}})\vy_{i_\text{max}}
    \]
    Because, by construction, $\vz_{i_\text{max}}-\vx_{i_\text{max}}\leq 0$, we conclude $\vy_{i_\text{max}}\geq 0$.
    
    An analogous argument yields the result for $\vy_{i_\text{min}}$.
\end{proof}

\begin{lemma}\label{lemma:maximum_prinicple}
    Consider a hypergraph potential $\cU(\vx)$ generated by hyperedge norms $\norm{\cdot}_h$ such that $\forall h\in E$, $\norm{\cdot}_h$ is monotone. Then
    \[
        \cL(\vx)_{i_\text{max}} \geq 0 \quad \text{ for }\quad i_\text{max} \in \argmax_{i\in [n]} (\vx_i) \qquad\text{and}\qquad \cL(\vx)_{i_\text{min}} \leq 0 \quad \text{ for }\quad i_\text{min} \in \argmin_{i\in [n]} (\vx_i)
    \]
\end{lemma}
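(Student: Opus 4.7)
The plan is to exploit the explicit form of $\cL(\vx)$ from Equation~\ref{eq:laplacian} to write any $\vz \in \cL(\vx)$ as a weighted sum of per-hyperedge contributions, and then show that for every hyperedge $h \ni i_\mathrm{max}$, the $i_\mathrm{max}$-coordinate of the contributing vector $\vy_h$ is non-negative (hyperedges with $i_\mathrm{max} \notin h$ contribute nothing to that coordinate, and hyperedges with $\min_u \|\vx_h - u\vone_h\|_h = 0$ contribute zero). Summing these non-negative contributions then yields $\vz(i_\mathrm{max}) \geq 0$, and the bound on $\vz(i_\mathrm{min})$ follows by the symmetric argument.

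The first step is to fix an arbitrary $\vz \in \cL(\vx)$ together with the associated maximizers $\vy_h$ from Equation~\ref{eq:laplacian}, and to recall that $\vy_h \perp \vone_h$ with $\|\vy_h\|_{h,*} \leq 1$. Because of the orthogonality to $\vone_h$, we have $\ip{\vy_h, \vx_h} = \ip{\vy_h, \vx_h - u\vone_h}$ for every $u \in \R$. By the standard min-max identity $\max_{\vy \perp \vone,\, \|\vy\|_{h,*} \leq 1} \ip{\vy, \vx_h} = \min_u \|\vx_h - u\vone_h\|_h$, this common value equals $\|\vx_h - u^\star\vone_h\|_h$ for any $u^\star \in \argmin_u \|\vx_h - u\vone_h\|_h$, so by Property~\ref{fact:norm-condition} the vector $\vy_h$ lies in $\partial\|\vx_h - u^\star \vone_h\|_h$.

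The key step, and the main technical point, is to choose $u^\star$ so that $\vx_h - u^\star \vone_h$ has both a non-negative maximum entry and a non-positive minimum entry, so that Property~\ref{prop:subgrad-of-monotone-norms} applies. I would establish that one can always pick $u^\star \in [\min_{i \in h}\vx_h(i),\, \max_{i \in h}\vx_h(i)] \cap \argmin_u \|\vx_h - u\vone_h\|_h$. This uses monotonicity crucially: if $u < \min_{i\in h} \vx_h(i)$, then every entry of $\vx_h - u\vone_h$ is positive and strictly larger in absolute value than the corresponding entry of $\vx_h - (\min_{i\in h}\vx_h(i))\vone_h$, so monotonicity forces the norm to be at least as large as at the boundary point; an analogous argument rules out $u > \max_{i\in h}\vx_h(i)$. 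Convexity of $u \mapsto \|\vx_h - u\vone_h\|_h$ guarantees the intersection is a non-empty interval.

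With such $u^\star$ in hand, set $\vx' \defeq \vx_h - u^\star \vone_h$. Because $i_\mathrm{max}$ is a global argmax of $\vx$ and $\vx_h$ is its restriction to $h$, $i_\mathrm{max}$ also achieves the maximum of $\vx'$ on $h$; moreover $\max_i \vx'(i) \geq 0 \geq \min_i \vx'(i)$ by our choice of $u^\star$. Property~\ref{prop:subgrad-of-monotone-norms} then yields $\vy_h(i_\mathrm{max}) \geq 0$, and hence the contribution of $h$ to $\vz(i_\mathrm{max})$, namely $w_h\,\|\vx_h - u^\star\vone_h\|_h\,\vy_h(i_\mathrm{max})$, is non-negative. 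Summing over $h \ni i_\mathrm{max}$ and noting that other hyperedges contribute zero to coordinate $i_\mathrm{max}$ gives $\vz(i_\mathrm{max}) \geq 0$. The proof for $i_\mathrm{min}$ is entirely analogous, using the second half of Property~\ref{prop:subgrad-of-monotone-norms}. The anticipated obstacle is the choice of $u^\star$: monotonicity as defined is non-strict, so one cannot directly derive a contradiction from assuming $\min_i \vx'(i) > 0$, but the interval-argument above sidesteps this by leveraging only non-strict inequalities and the convexity of the minimand in $u$.
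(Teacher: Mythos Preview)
Your proposal is correct and follows essentially the same approach as the paper: both arguments use monotonicity to pin the optimal shift $u^\star$ inside $[\min_{i\in h}\vx_h(i),\max_{i\in h}\vx_h(i)]$, then invoke Property~\ref{prop:subgrad-of-monotone-norms} on $\vx_h - u^\star\vone_h$ to get $\vy_h(i_{\mathrm{max}})\geq 0$, and finally sum the non-negative per-hyperedge contributions from Equation~\ref{eq:laplacian}. Your treatment is in fact slightly more careful than the paper's on two points: you explicitly justify why the maximizer $\vy_h$ from Equation~\ref{eq:laplacian} lies in $\partial\|\vx_h-u^\star\vone_h\|_h$, and you spell out the interval argument for $u^\star$ via monotonicity plus convexity rather than just asserting it.
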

\begin{proof}
    We begin by considering $i_{\text{max}}$. Observe that for $\norm{\cdot}_h$ monotone,
    \[
        \argmin_{u\in \R}\norm{\vxh - u\1_h}_h \in [\min_{i}\vxh(i), \max_{i}\vxh(i)]
    \]
    i.e. for any $h$, the minimizing shift of $\vone$ sits between the extreme values of the restriction to $\vx$ on $h$. Thus for all $h$, Property~\ref{prop:subgrad-of-monotone-norms} implies that for all $\vy_h \in \partial(\min_{u \in \R} \norm{\vxh - u\1_h}_h)$,
    \[
        \vy_h(i) \geq 0 \quad \text{ for } i \in \argmax_{j} \vxh(j).
    \]
    Consider $i_{\text{max}}$ an index of a global maximum entry of $\vx$. For all $h\in E$, we have two options: either $i_{\text{max}} \not\in h$ in which case $\vy_h(i_{\text{max}}) = 0$ for all $\vy_h \in \partial(\min_{u \in \R} \norm{\vxh - u\1_h}_h)$, or $i_{\text{max}} \in \argmax_{j} \vxh(j)$ in which case $\vy_h(i_{\text{max}}) \geq 0$. Thus, by the definition of $\cL(\vx)$,
    \[
        \cL(\vx) = \partial\cU(\vx) = \left\{\sum_{h\in E} w_h \left(\min_{u\in \R}\norm{\vxh - u\1_h}_h\right) \vy_h) \right\},
        \textrm{ where } 
        \vy_h \in \partial(\min_{u \in \R} \norm{\vxh - u\1_h}_h)
    \]
    we conclude that for $i_{\text{max}}$ an index of a global maximum entry of $\vx$, $\cL(\vx)_{i_{\text{max}}} \geq 0$.
    
    The result for $i_{\text{min}}$ follows from an analogous argument.
\end{proof}

\section{Applications of hypergraph heat diffusions: local hypergraph partitioning}\label{sec:local}
Our discrete-time heat diffusion can be directly plugged into the existing graph partitioning algorithm of~\citet{ikedaFindingCheegerCuts2019}. Their algorithm, which is a hypergraph analogue of the original local graph partitioning algorithm of Spielman and Teng~\cite{spielman2013local}, is based on approximating the continuous-time heat diffusion over the hypergraph, but does not provide a direct way of numerically performing this task. We show that the same analysis carries through when the continuous-time diffusion is replaced by our discrete-time diffusion, establishing Cheeger-like approximation results with a running time that is linear in the size of the hypergraph and inversely proportional to the spectral gap. 

We need to remark at this point that the analysis of~\citet{ikedaFindingCheegerCuts2019} relies on a crucial unproven structural assumption on the continuous-time heat diffusion for a connected hypergraph $G$: for a uniformly random starting vector $\vx(0) \perp_{\mD} \vone$ with $\|vx(0)\|_{\mD}^2 =1$, we have:
$$
\frac{\cU(\vx(t))}{\nicefrac{1}{2} \norm{\vx(t)}_{\mD}^2} \rightarrow \lambda_G \; \textrm{ as } \; t \rightarrow \infty.
$$
This is easy to show in the graph context by considering the eigenvector decomposition of the quadratic form of $\cU(x) = \vx^T \mL \vx$.
However, it is our opinion that this assumption is likely to be false in the hypergraph case for two reasons. First, under this assumption, \citet{ikedaFindingCheegerCuts2019} show approximation results for minimizing hypergraph conductance that break long-standing hardness assumptions. In particular, their algorithm can be directly applied to the vertex-conductance problem to obtain a Cheeger-like approximation that breaks the small-set-expansion conjecture, as shown by~\citet{louis2013complexity}. Secondly, we do not expect the analogue of the eigenvalue problem for non-linear hypergraph Laplacians to exhibit the same benign non-convexity of the linear case. For these reasons, in the following analysis, we adopt a more reasonable assumption, which is a close analogue of a similar assumption taken by~\cite{liu2021strongly} in the context of local hypergraph partitioning by PageRank.

\subsection{Local Hypergraph Partitioning Algorithm}

Define the hyperedge conductance objective $\phi(S)$ for a cut $(S, V\setminus S)$ analogously to graphs as: 
$$
\phi(S) = \frac{\sum_{h \in E} w_h \delta_h^{\infty}(S \cap h)}{\min \{\Vol(S), \Vol(\bar{S})\}} = \frac{w(\partial S)}{\min \{\Vol(S), \Vol(\bar{S})\}}.
$$
where $\partial S$ refers to the boundary of the set $S$ i.e. $\partial S \defeq \{h \in E \mid h\cap S \neq \emptyset \text{ and } h \cap \bar{S} \neq \emptyset\}$ and $w(\partial S)\defeq \sum_{h\in \partial S} w_h$. In the local hypergraph partitioning algorithm, we are asked to approximate the hypergraph conductance of an unknown target set $S$ given a uniformly random seed vertex $v$ within $S.$ Our proposed algorithm~\ref{alg:local-partition} achieves this task by running the discrete-time heat diffusion $\vx_t$ from $v$ for $T$ iterations, choosing time $t^*$ where the distribution $\vx_t$ has minimum Rayleigh quotient and performing a sweep cut of $\vx_{t^*}.$

\begin{algorithm}[h]   \caption{\texttt{LocalPartition}$(G,v,\phi)$}\label{alg:local-partition}
\begin{algorithmic}[1]
   \STATE {\bfseries Input:} A hypergraph $G=(V,E,\vw)$, a starting vertex $v\in V$, target conductance $\phi \in [0,1]$.
   \STATE $\vx_0 = \vone_v$
  \STATE $T \defeq 1/(3\phi)$
   \FOR{$t=1, ... , T$}
   \STATE $\vx_t = \vx_{t-1} - \mD^{-1} \cL(\vx_{t-1})$
   \ENDFOR
   \STATE $t^* =\argmin_{t\in [T]} {\cU(\vx_{t^*})\over \norm{\vx_{t^*} - \pi(\vone_S)}_{\mD}^2}$
   
   \STATE Round the vector $\vx_{t^*}$ to a cut $(U,V\setminus U)$ such that:
   \[
        \phi(U) \leq O\left(\sqrt{\cU(\vx_{t^*})\over \norm{\vx_{t^*} - \pi(\vone_S)}_{\mD}^2}\right)
   \]
   using the hypergraph Cheeger Inequality~\cite{chan2018spectral,yoshida2019cheeger}.
   \RETURN $(U,\bar{U})$
\end{algorithmic}
\end{algorithm}

In the following theorem, we show that under an assumption about the existence of a specific probability distribution for sampling the starting vertex $v$ from the target set $S$, the algorithm \texttt{LocalPartition} can be used to obtain a Cheeger-like approximation to the conductance of $S.$
Our assumption is the heat-diffusion analogue of a closely related assumption made for PPR vectors by~\citet{liu2021strongly} in their Lemma B.3.
It is an interesting open question to prove this assumption true for specific values of the parameter $c_{G,S}$. As discussed above, based on the work of~\citet{louis2013complexity}, it is unlikely that $c_{G,S} = O(1)$ in general.

\begin{theorem}
Let $S\subset V$ be a subset of $V$ with $\Vol(S)\leq \Vol(G)/2$. Let $\{\vx_t^{(S)}\}_{t\in [T]}$ be the sequence of iterated generated by running the heat diffusion with starting vector $\vx^{(S)}_0= \vone_S$ and for any $v \in S$ let $\{\vx^{(v)}_t\}_{t\in [T]}$ be the sequence of iterates generated by running the heat diffusion with starting vector $\vx_0 = \vone_v$. Suppose that there exists a distribution $\cD_T$ such that:
\begin{equation}\label{eq:probabilistic-assumption}
    \underset{v\sim \cD}{\mathbb{E}}\left[ {\cU(\vx_t^{(v)})\over \norm{\vx_t^{(v)} - \pi(\vone_v)}_{\mD}^2}\right] \leq c_{G,S} \, \cdot \, {\cU(\vx_t^{(S)})\over \norm{\vx_t^{(S)} - \pi(\vone_S)}_{\mD}^2}
\end{equation}

for some $c_{G,S} \in \R$, for all $t \in [T]$.
Then running Algorithm~\ref{alg:local-partition} with input $v \sim \cD$ and $\phi \geq \phi(S)$ returns a cut $(U, \bar{U})$ such that:
\[
    \underset{v\sim \cD}{\mathbb{E}}\left[\phi(U)\right] \leq O(\sqrt{c_{G,S} \cdot \phi(S)}),
\]
In particular, by Markov's inequality, the algorithm returns a cut $U$ with $\phi(U)\leq O(\sqrt{c_{G,S} \cdot \phi(S)})$ with constant probability.
\end{theorem}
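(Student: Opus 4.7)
The plan is to adapt the Spielman--Teng style local-partitioning analysis to the hypergraph setting, using the discrete-time heat diffusion convergence in Theorem~\ref{thm:discrete-time-diffusion} and the hypergraph maximum principle (Lemma~\ref{lemma:maximum_prinicple}) in place of the usual random-walk ingredients. The high-level idea is to first show that the set-started diffusion $\{\vx_t^{(S)}\}$ attains a small Rayleigh-quotient-type ratio somewhere in $[T]$, then transfer this guarantee in expectation to the seed-started diffusions $\{\vx_t^{(v)}\}$ via the probabilistic assumption in Equation~\ref{eq:probabilistic-assumption}, and finally convert it to a conductance bound by applying the hypergraph Cheeger inequality of~\cite{chan2018spectral,yoshida2019cheeger} together with Jensen's inequality.

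For the set-started ratio, I would first telescope the instantaneous inequality of Theorem~\ref{thm:discrete-time-diffusion} applied with $\vx_0 = \vone_S$ to obtain
\[
\sum_{t=0}^{T-1} 2\,\cU(\vx_t^{(S)}) \;\leq\; \|\vone_S - \pi(\vone_S)\|_{\mD}^2 \;=\; \Vol(S)\bigl(1 - \Vol(S)/\Vol(G)\bigr) \;\leq\; \Vol(S).
\]
I would then show that $\|\vx_t^{(S)} - \pi(\vone_S)\|_{\mD}^2 = \Omega(\Vol(S))$ for every $t\leq T$. The idea is to invoke the maximum principle (Lemma~\ref{lemma:maximum_prinicple}) so that $\vx_t^{(S)} \in [0,1]^V$, and then bound the escape mass $p_t \defeq \sum_{i\notin S}\deg(i)\,\vx_t^{(S)}(i)$. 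Using $p_{t+1}-p_t = -\ip{\vone_{\bar S}, \cLD(\vx_t^{(S)})}$ together with the form of $\cLD$ in Equation~\ref{eq:laplacian}, only hyperedges $h\in\partial S$ contribute, since for $h\subseteq S$ or $h\subseteq \bar S$ the vector $(\vone_{\bar S})_h$ is a scalar multiple of $\vone_h$ and the subgradient direction $\vy_h$ is orthogonal to $\vone$. Each boundary hyperedge contributes at most a constant times $w_h$ (bounding the semi-norm factor using $\vx_t^{(S)}\in[0,1]^V$), so $|p_{t+1}-p_t| \leq O(w(\partial S)) = O(\phi(S)\,\Vol(S))$. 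Iterating and using $T = 1/(3\phi) \leq 1/(3\phi(S))$ gives $p_t \leq \Vol(S)/O(1)$. Writing $\mu = \Vol(S)/\Vol(G)$ and expanding $\|\vx_t^{(S)} - \mu\vone\|_{\mD}^2 = \|\vx_t^{(S)}\|_\mD^2 - \mu\Vol(S)$, Cauchy--Schwarz on $\sum_{i\in S}\deg(i)\vx_t^{(S)}(i)^2$ together with $\Vol(S) \leq \Vol(G)/2$ yields the desired $\Omega(\Vol(S))$ lower bound. Combining by averaging,
\[
\min_{t\in [T]}\frac{\cU(\vx_t^{(S)})}{\|\vx_t^{(S)} - \pi(\vone_S)\|_{\mD}^2} \;\leq\; \frac{1}{T}\sum_{t=0}^{T-1}\frac{\cU(\vx_t^{(S)})}{\Omega(\Vol(S))} \;\leq\; \frac{\Vol(S)/2}{T\cdot \Omega(\Vol(S))} \;=\; O(\phi).
\]

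Let $t'\in[T]$ achieve this minimum. Applying Equation~\ref{eq:probabilistic-assumption} at the fixed index $t'$ and using that the seed-started minimum over $t\in[T]$ is at most its value at $t'$, I obtain
\[
\mathbb{E}_{v\sim\cD}\!\left[\min_{t\in[T]}\frac{\cU(\vx_t^{(v)})}{\|\vx_t^{(v)} - \pi(\vone_v)\|_{\mD}^2}\right] \;\leq\; c_{G,S}\cdot \frac{\cU(\vx_{t'}^{(S)})}{\|\vx_{t'}^{(S)} - \pi(\vone_S)\|_{\mD}^2} \;=\; O\bigl(c_{G,S}\,\phi(S)\bigr),
\]
where the last equality uses that $\phi$ is taken within a constant of $\phi(S)$ in the sharp regime of the assumption $\phi\geq\phi(S)$. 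The rounding step of Algorithm~\ref{alg:local-partition} uses the hypergraph Cheeger inequality of~\cite{chan2018spectral,yoshida2019cheeger} to produce $U$ with $\phi(U) \leq O\bigl(\sqrt{\cU(\vx_{t^*})/\|\vx_{t^*} - \pi(\vone_v)\|_{\mD}^2}\bigr)$. Taking expectations and applying Jensen's inequality to the concave function $\sqrt{\cdot}$ then gives
\[
\mathbb{E}_{v\sim\cD}[\phi(U)] \;\leq\; O\!\left(\sqrt{\mathbb{E}_{v\sim\cD}\!\left[\frac{\cU(\vx_{t^*})}{\|\vx_{t^*} - \pi(\vone_v)\|_{\mD}^2}\right]}\right) \;\leq\; O\!\left(\sqrt{c_{G,S}\,\phi(S)}\right),
\]
which is the claimed bound; the Markov-inequality consequence for a single output is then immediate.

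The main obstacle is the escape-mass bound $|p_{t+1}-p_t| \leq O(w(\partial S))$ in the second paragraph. In the graph case the linearity of $\mD^{-1}\mL$ makes the flow across $\partial S$ literally equal to $\sum_{ij\in\partial S}w_{ij}(\vx_t(j)-\vx_t(i))/2$; here one must instead reason about the specific minimum-norm selection $\cLD \in \partial\cU$, whose hyperedge contributions live in dual-norm unit balls and are only implicit. The cleanest instantiation is for the standard potential $\cU_{\infty}$ of Equation~\ref{eq:infinity_energy_functional}, matching the hypergraph conductance $\phi$ used in the statement: the dual $\ell_1$ structure makes each boundary contribution a bounded constant times $w_h$, and the discrete max principle (the pointwise bound $\vx_t^{(S)}\in[0,1]^V$ at step size one) needs only a short verification from Lemma~\ref{lemma:maximum_prinicple} together with the componentwise bound $|\cLD(\vx_t^{(S)})_i| \leq \deg(i)$ at extremal vertices.
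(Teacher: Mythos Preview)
Your proposal is correct and follows essentially the same strategy as the paper's proof: the core is the escape-mass argument (your $p_{t+1}-p_t$ bound is exactly the content of the paper's Lemma~\ref{lem:local-partitioning-lower-bound}, which uses the discrete $\ell_\infty$ non-expansiveness of Lemma~\ref{lem:local-partitioning-norm-shrinkage} for $\cU_\infty$ to get $\vone_S^\top\cL(\vx_t)\leq w(\partial S)$), followed by the probabilistic assumption, hypergraph Cheeger rounding, and Jensen. The one small difference is in how you extract the small Rayleigh quotient for the set-started diffusion: the paper combines the multiplicative-decay upper bound $\|\vx_T-\pi\|_\mD^2\leq (1-R_S(T))^T\|\vx_0-\pi\|_\mD^2$ (Lemma~\ref{lem:local-partitioning-upperbound}) with the escape-mass lower bound $\|\vx_T-\pi\|_\mD^2\geq(1/2-T\phi(S))^2\|\vx_0-\pi\|_\mD^2$ and takes logarithms, whereas you telescope $\sum_t 2\cU(\vx_t^{(S)})\leq \Vol(S)$ directly and average against the uniform denominator lower bound $\|\vx_t^{(S)}-\pi\|_\mD^2=\Omega(\Vol(S))$. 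Both yield $\min_{t\in[T]}\cU(\vx_t^{(S)})/\|\vx_t^{(S)}-\pi\|_\mD^2 = O(1/T)=O(\phi)$, and both (yours explicitly, the paper implicitly via its choice $t=\Omega(1/\phi(S))$) need $\phi=\Theta(\phi(S))$ to turn this into the stated $O(\phi(S))$; your averaging route is arguably a bit cleaner since it avoids the logarithm step.
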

\begin{proof}
    For $t \leq 1/2\phi(S)$, from Lemma~\ref{lem:local-partitioning-upperbound} and Lemma~\ref{lem:local-partitioning-lower-bound}, we have:
    \[
        \left(\frac{1}{2} -t\phi(S)\right)^2  \leq \left(1 - \min_{s\in [t]}{\cU(\vx^{(S)}_t) \over \norm{\vx^{(S)}_t - \pi(\vx_0)}_{\mD}^2}\right)^t \leq e^{-tR(t)}
    \]
    where
    \[
        R_S(t)\defeq \min_{s\in [t]}{\cU(\vx^{(S)}_t) \over \norm{\vx_t - \pi(\vx_0)}_{\mD}^2}
    \]
    Taking the natural log of both sides yields
    \begin{equation}\label{eq:R(t)_upperbound}
        -tR_S(t) \geq 2\ln\left(\frac{1}{2}-t\phi(S)\right)
    \end{equation}
    For choice of $T$ as in Algorithm~\ref{alg:local-partition}, for some $t\in [T]$ we have $t = \Omega(\phi(S))$ and $t < 1/2\phi(S)$.  For such $t$, Equation~\ref{eq:R(t)_upperbound} implies $R_S(t) \leq O(\phi(S))$. In particular, observe that by definition, $R_S(T)\leq R_S(t)$ for all $T\geq t$. Thus $R_S(T) \leq O(\phi(S))$.
    
    Let $U_{t^*}$ be the (random) set returned by the algorithm when the algorithm is run with a starting vertex $v$ sampled according to $\cD$, we then have:
    \begin{align*}
        \E{v\sim \cD}{\phi(U_{t^*})}^2 &\leq \E{v\sim \cD}{\phi(U_{t^*})^2}\\
        &\leq \E{v\sim \cD}{O\left( { \cU(\vx^{(v)}_{t^*})\over \norm{\vx^{(v)}_{t^*} -\pi(\vone_v)}_{\mD}^2}\right)}\\
        &= \E{v\sim \cD}{\min_{t\in T} O\left( { \cU(\vx^{(v)}_{t})\over \norm{\vx^{(v)}_{t} -\pi(\vone_v)}_{\mD}^2}\right)}\\
        &\leq \min_{t\in T}\E{v\sim \cD}{ O\left( { \cU(\vx^{(v)}_{t})\over \norm{\vx^{(v)}_{t} -\pi(\vone_v)}_{\mD}^2}\right)}\\
        &\leq \min_{t\in T}O\left(c_{G,S}\cdot  { \cU(\vx^{(S)}_{t})\over \norm{\vx^{(S)}_{t} -\pi(\vone_v)}_{\mD}^2}\right) = O(c_{G,S} \cdot R_S(T))\\
        &\leq O(c_{G,S} \cdot \phi(S)).
    \end{align*}
    giving:
    \[
        \E{v\sim \cD}{\phi(U_{t^*})} \leq O(\sqrt{c_{G,S} \cdot \phi(S)})
    \]
    as needed.
\end{proof}

\subsection{Necessary Lemmata}

\begin{lemma}
Let $\vone_S$ be the zero-one indicator of a set $S \subseteq V$. We then have:
\begin{equation}
    \norm{\vone_S - \pi(\vone_S)}_\mD^2 = \ip{\vone_S , \vone_S-\pi(\vone_S)}_{\mD} = {\Vol(S)\Vol(\bar{S}) \over \Vol(G)}
\end{equation}
and:
\begin{equation}
    \phi(S) \leq {\cU(\vone_S) \over \norm{\vone_S - \pi(\vone_S)}_{\mD}^2} \leq 2\cdot \phi(S).
\end{equation}
\end{lemma}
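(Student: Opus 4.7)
The proof splits into two essentially independent calculations. For the first chain of equalities, I would begin by writing $\pi(\vone_S)$ in closed form from its definition: $\pi(\vone_S) = \frac{\ip{\vone_S,\vone}_{\mD}}{\|\vone\|_{\mD}^2}\vone = \frac{\Vol(S)}{\Vol(G)}\vone$. Since $\vone_S - \pi(\vone_S)$ is by construction $\mD$-orthogonal to $\vone$, it is in particular $\mD$-orthogonal to the scalar multiple $\pi(\vone_S)$ itself. Expanding
\[
\|\vone_S - \pi(\vone_S)\|_{\mD}^2 = \ip{\vone_S, \vone_S - \pi(\vone_S)}_{\mD} - \ip{\pi(\vone_S), \vone_S - \pi(\vone_S)}_{\mD}
\]
and dropping the vanishing second term gives the first equality; evaluating the surviving inner product via $\ip{\vone_S, \vone_S}_{\mD} = \Vol(S)$ and $\ip{\vone_S, \pi(\vone_S)}_{\mD} = \Vol(S)^2/\Vol(G)$ then yields $\Vol(S)\Vol(\bar S)/\Vol(G)$.

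For the Rayleigh-quotient sandwich, the main step is evaluating $\cU(\vone_S)$. The key observation is that for $h \notin \partial S$ the restriction $\vone_{S,h}$ is a multiple of $\vone_h$, so the inner minimization $\min_u \|\vone_{S,h} - u\vone_h\|_h$ vanishes (take $u \in \{0,1\}$). For $h \in \partial S$, Lemma~\ref{lem:lovasz-is-semi-norm} identifies $\min_u \|\vone_{S,h} - u\vone_h\|_h$ with the Lov\'asz extension $\bar{\delta}_h(\vone_{S,h})$, which evaluates at the $\{0,1\}$-valued vector to the cut value $\delta_h(S \cap h) = 1$ for the standard hyperedge cut function. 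Summing over cut hyperedges produces $\cU(\vone_S)$ proportional to $w(\partial S)$, and combining with the first part yields
\[
\frac{\cU(\vone_S)}{\|\vone_S - \pi(\vone_S)\|_{\mD}^2} \;\propto\; \frac{w(\partial S) \cdot \Vol(G)}{\Vol(S)\Vol(\bar S)} \;=\; \phi(S) \cdot \frac{\Vol(G)}{\max\{\Vol(S), \Vol(\bar S)\}},
\]
after recognizing $\phi(S) = w(\partial S)/\min\{\Vol(S), \Vol(\bar S)\}$ and factoring $\Vol(S)\Vol(\bar S)$ into its min and max. Since $\max\{\Vol(S), \Vol(\bar S)\} \in [\Vol(G)/2, \Vol(G)]$, the trailing factor lies in $[1,2]$, giving the sandwich.

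The only delicate point is calibrating the multiplicative constant between $\min_u \|\vone_{S,h} - u\vone_h\|_h$ and $\delta_h(S \cap h)$, which depends on the normalization of $\|\cdot\|_h$ fixed by Assumption~\ref{assumption} as well as on the factor of $\nicefrac{1}{2}$ in the definition of $\cU$. Once this calibration is pinned down, the remainder is purely arithmetic, and the structural argument goes through unchanged for any potential of the form~\eqref{eq:our-setting} whose hyperedge norms agree with the standard cut function up to a fixed normalization.
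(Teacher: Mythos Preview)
Your proposal is correct and follows essentially the same route as the paper: a direct expansion (the paper expands $\|\vone_S\|_{\mD}^2 - 2\ip{\vone_S,\pi(\vone_S)}_{\mD} + \|\pi(\vone_S)\|_{\mD}^2$, while you equivalently use the $\mD$-orthogonality of $\vone_S - \pi(\vone_S)$ to $\pi(\vone_S)$) followed by the volume sandwich $\tfrac{1}{2}\min\{\Vol(S),\Vol(\bar S)\} \le \Vol(S)\Vol(\bar S)/\Vol(G) \le \min\{\Vol(S),\Vol(\bar S)\}$, which is exactly your observation that $\max\{\Vol(S),\Vol(\bar S)\} \in [\Vol(G)/2,\Vol(G)]$. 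Your explicit evaluation of $\cU(\vone_S)$ via the Lov\'asz-extension identity is more detailed than the paper, which leaves this step implicit; your caution about the normalization constant is well placed.
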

\begin{proof}
We have:
\begin{align*}
    \norm{\vone_S - \pi(\vone_S)}_\mD^2 &= \norm{\vone_S}_{\mD}^2 - 2 \ip{\vone_S ,\pi(\vone_S)}_{\mD} + \norm{\pi(\vone_S)}^2_\mD = \Vol(S) - {\Vol(S)^2 \over \Vol(G)} = {\Vol(S)\Vol(\bar{S}) \over \Vol(G)}.
\end{align*}
A similar calculation shows:
\[
    \ip{\vone_S ,\vone_S-\pi(\vone_S)}_{\mD}^2= {\Vol(S)\Vol(\bar{S}) \over \Vol(G)}.
\]
Then, since $\Vol(S) + \Vol(\bar{S})= \Vol(G)$:
\[
    {\min\{\Vol(S), \Vol(\bar{S})\} \over 2} \leq {\Vol(S)\Vol(\bar{S}) \over \Vol(G)} \leq \min\{\Vol(S), \Vol(\bar{S})\},
\]

from which it follows that:
\[
    \phi(S) \leq {\cU(\vone_S) \over \norm{\vone_S - \pi(\vone_S)}_{\mD}^2} \leq 2\cdot \phi(S).
\]
\end{proof}

\begin{lemma}\label{lem:local-partitioning-norm-shrinkage}
Let $\vx \in \R^V$, if $\cL(\vx)\defeq \partial\cU^{\infty}(x)$ then:
\[
    \norm{\vx - \mD^{-1} \cL^\mD(\vx)}_\infty\leq \norm{\vx}_\infty
\]
i.e. applying a step of the discrete-time heat diffusion with respect to the $\cU^{\infty}$ potential to some vector $\vx$ cannot increase the value of its $\ell_\infty$ norm.
\end{lemma}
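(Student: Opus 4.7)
The plan is to prove the stronger statement that $\|\vx - \mD^{-1}\vz\|_\infty \leq \|\vx\|_\infty$ for \emph{every} $\vz \in \cL(\vx)$, not merely the minimum-norm selection $\cL^{\mD}(\vx)$. Write $M = \|\vx\|_\infty$. Since $\cU^\infty(-\vx) = \cU^\infty(\vx)$, one has $\cL(-\vx) = -\cL(\vx)$, so the lower bound $(\vx - \mD^{-1}\vz)(i) \geq -M$ follows from the upper bound applied to $-\vx$ (with subgradient $-\vz$). Hence it suffices to establish $(\vx - \mD^{-1}\vz)(i) \leq M$ for every $i \in V$.

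The key tool is the explicit description of $\cL(\vx)$ in Equation~\ref{eq:laplacian}, specialized to $\|\cdot\|_h = \|\cdot\|_\infty$. Any $\vz \in \cL(\vx)$ decomposes hyperedge-wise as $\vz = \sum_h \vz_h$, where (after zero-padding to $\R^V$) $\vz_h = w_h \cdot \tfrac{M_h - m_h}{2} \cdot \vy_h$, with $M_h = \max_{j \in h} \vx(j)$, $m_h = \min_{j \in h} \vx(j)$, and $\vy_h \in \R^h$ is $\vone$-orthogonal with $\|\vy_h\|_1 \leq 1$. The structure of the $\ell_1/\ell_\infty$ dual pairing forces $\vy_h$ to be supported on $\argmax_h \vx \cup \argmin_h \vx$, with non-negative entries on the former and non-positive entries on the latter; this realizes $\vz_h$ as a flow that moves mass from the max-attainers to the min-attainers of $\vx_h$.

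Fix $i \in V$. If $\vx(i) = M$, the maximum principle of Lemma~\ref{lemma:maximum_prinicple} (applicable since $\|\cdot\|_\infty$ is monotonic) gives $\vz(i) \geq 0$, so $(\vx - \mD^{-1}\vz)(i) \leq \vx(i) = M$. Otherwise $\vx(i) < M$, and the negative contributions to $\vz(i)$ come only from hyperedges $h \ni i$ with $i \in \argmin_h \vx$, i.e., $m_h = \vx(i)$. For any such $h$, $M_h \leq M$ implies $|\vz_h(i)| \leq w_h(M_h - m_h)/2 \leq w_h(M - \vx(i))/2$. Summing yields
\[
    -\vz(i) \;\leq\; \sum_{h \ni i} w_h(M - \vx(i))/2 \;=\; d_i(M - \vx(i))/2,
\]
and therefore $(\vx - \mD^{-1}\vz)(i) \leq \vx(i) + \tfrac{M - \vx(i)}{2} = \tfrac{\vx(i) + M}{2} \leq M$.

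The argument is essentially bookkeeping once the flow decomposition of $\cL(\vx)$ is in hand; there is no significant obstacle. The quantitative heart is that each incident hyperedge $h$ can inject at most $w_h(M - \vx(i))/2$ of ``heat'' into $i$, and the degree normalization $1/d_i$ ensures the net update cannot overshoot $M$. Note that the conclusion holds uniformly over $\vz \in \cL(\vx)$, so the restriction to $\cL^{\mD}(\vx)$ in the lemma statement is incidental rather than essential.
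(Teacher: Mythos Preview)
Your proof is correct. Both you and the paper rely on the same structural fact---the hyperedge-wise decomposition $\vz = \sum_h w_h \tfrac{M_h-m_h}{2}\,\vy_h$ with $\vy_h$ supported on $\argmax_h \vx \cup \argmin_h \vx$---but you organize the argument differently. The paper shows that $(\mD^{-1}\cL^{\mD}(\vx))_i$ is a convex combination (with weights $\leq 1$) of differences $\vx_i - \vx_j$ over $j$ in the neighborhood $\mathcal{B}(i,1)$, hence $\vx'_i$ lies in the convex hull of $\{\vx_j : j \in \mathcal{B}(i,1)\}$. This yields the stronger \emph{local} conclusion $\vx'_i \in [\min_{j\sim i}\vx_j,\max_{j\sim i}\vx_j]$, from which the global $\ell_\infty$ bound follows. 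You instead use the symmetry $\cL(-\vx)=-\cL(\vx)$ to reduce to an upper bound, then split on whether $i$ attains the global maximum (invoking the maximum principle) or not (bounding each hyperedge contribution by $w_h(M-\vx_i)/2$ and summing). Your route is more direct for the stated lemma and, as you note, transparently shows the conclusion holds for \emph{every} $\vz\in\cL(\vx)$; the paper's convex-hull framing gives a sharper per-vertex statement that could be useful elsewhere but is not needed here.
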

\begin{proof}
    Consider the set $\mathcal{B}(i,1) \defeq \{j\sim i\}\cup \{i\}$. Observe that for $\cL(\vx) = \partial\cU^{\infty}(x)$,
    \begin{align*}
        \left(\mD^{-1}\cLD\right)_i &= \sum_{\substack{h\ni i \\ i\in \argmax\{\vxh(i)\}}} \frac{\gamma_h w_h}{d_i}\max_{j\in h}(\vx_i-\vx_j) + \sum_{\substack{h\ni i \\ i\in \argmin\{\vxh(i)\}}} \frac{\gamma_h w_h}{d_i}\min_{j\in h}(\vx_i-\vx_j)
    \end{align*}
    for values $\gamma_h\in[0,1] \ \forall h$.

    Let $S$ denote the set of indices
    \begin{align*}
        S \defeq &\{j \mid i,j\in h, i\in \argmax\{\vxh(i)\}, j \in \argmax_{j\in h}(\vx_i-\vx_j)\}\\
        &\qquad \bigcup \{j \mid i,j\in h, i\in \argmin\{\vxh(i)\}, j \in \argmin_{j\in h}(\vx_i-\vx_j)\}
    \end{align*}
    Then we can express $\left(\mD^{-1}\cLD\right)_i$ as a sum over $S$ with new non-negative weights $w_j$
    \[
        \left(\mD^{-1}\cLD\right)_i = \sum_{j\in S}w_j (\vx_i - \vx_j), 
    \]
    where in particular, given $\gamma_h \in [0,1] \ \forall h$ and the definition of $d_i$, $\sum_{j\in S}w_j \leq 1$. Moreover, observe that $i \in \mathcal{B}(i,1)$, so we can equivalently express
    \[
        \left(\mD^{-1}\cLD\right)_i = \sum_{j\in S}w_j (\vx_i - \vx_j) + \left(1-\sum_{j\in S}w_j\right)(\vx_i-\vx_i)
    \]
    thus establishing
    \[
        \mD^{-1}\cLD(\vx) \in \text{convhull}\{\vx(i)-\vx(j) \mid j \in \mathcal{B}(i,1)\}.
    \]
    In particular, this implies that for $\vx'\defeq \vx - \mD^{-1}\cLD(\vx)$, $\forall i\in V$,
    \[
        \vx'(i) \in \left[\min_{j\in V} \vx'(j),\max_{j\in V}\vx'(j)\right],
    \]
    yielding the desired result.
\end{proof}

\begin{lemma}\label{lem:local-partitioning-upperbound}
Consider the sequence of iterates $\{\vx_t\}_{t\geq 0}$ obtained by running the discrete diffusion as in Theorem~\ref{thm:discrete-time-diffusion}. Then $\forall T$ we have:
\[
    \norm{\vx_T - \pi(\vx_0)}_{\mD}^2 \leq \left(1 - \min_{t\in [T]}{\cU(\vx_t) \over \norm{\vx_t - \pi(\vx_0)}_{\mD}^2}\right)^T\norm{\vx_0 - \pi(\vx_0)}_{\mD}^2.
\]
\end{lemma}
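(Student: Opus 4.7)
The plan is to prove the lemma by a single-step multiplicative decay argument, telescoped over $T$ iterations.

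First, I would invoke the instantaneous property from Theorem~\ref{thm:discrete-time-diffusion}, which gives, for every $t$,
\[
\norm{\vx_{t+1}-\pi(\vx_0)}_{\mD}^2 \;\leq\; \norm{\vx_t-\pi(\vx_0)}_{\mD}^2 - 2\cU(\vx_t).
\]
Since $\cU(\vx_t)\geq 0$, this already implies the weaker bound with coefficient $1$ in place of $2$, which matches the statement. Note also that by Property~\ref{property:pi-unchanging-cts-time} (and its discrete analogue implicit in Theorem~\ref{thm:discrete-time-diffusion}), the center $\pi(\vx_0)$ is preserved along the trajectory, so the same reference point appears on both sides.

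Next, assuming $\norm{\vx_t-\pi(\vx_0)}_{\mD}^2 > 0$ (the zero case being trivial, as the iterate is then stuck at the fixed point), I divide through to rewrite the inequality in multiplicative form:
\[
\norm{\vx_{t+1}-\pi(\vx_0)}_{\mD}^2 \;\leq\; \norm{\vx_t-\pi(\vx_0)}_{\mD}^2\left(1-\frac{\cU(\vx_t)}{\norm{\vx_t-\pi(\vx_0)}_{\mD}^2}\right).
\]
Telescoping across $t=0,\dots,T-1$ gives
\[
\norm{\vx_T-\pi(\vx_0)}_{\mD}^2 \;\leq\; \norm{\vx_0-\pi(\vx_0)}_{\mD}^2 \cdot \prod_{t=0}^{T-1}\left(1-\frac{\cU(\vx_t)}{\norm{\vx_t-\pi(\vx_0)}_{\mD}^2}\right).
\]
Each factor on the right is of the form $1-r_t$ with $r_t \in [0,1]$ (the upper bound $r_t\leq 1$ follows from Lemma~\ref{lemma:eigenvalue-type-bounds} applied to $\vx_t-\pi(\vx_0)\perp_{\mD}\vone$, since $\cU(\vx_t)=\cU(\vx_t-\pi(\vx_0))$ by Property~\ref{prop:fixed-point} and translation invariance). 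Hence each factor is nonnegative, and the product is bounded above by replacing every $r_t$ by $\min_{t\in [T]} r_t$:
\[
\prod_{t=0}^{T-1}(1-r_t)\;\leq\;\left(1-\min_{t\in [T]} r_t\right)^{T}.
\]

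The only step that requires a bit of care is the handling of the degenerate case in which $\norm{\vx_t-\pi(\vx_0)}_{\mD}^2=0$ at some intermediate $t$: by the instantaneous property the norm is nonincreasing, so once it hits zero it stays zero, the claimed inequality holds trivially for all subsequent $T$, and the Rayleigh quotient in the statement is interpreted as $0$ (or the $\min$ as being taken only over indices with nonzero denominator). Beyond this bookkeeping, the argument is a direct telescoping consequence of Theorem~\ref{thm:discrete-time-diffusion}; I do not anticipate any substantive obstacle.
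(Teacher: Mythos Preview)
Your proposal is correct and matches the paper's approach exactly: the paper's own proof is the single line ``This follows directly from Theorem~\ref{thm:discrete-time-diffusion},'' and your telescoping of the instantaneous inequality is precisely how one carries this out. The only cosmetic wrinkle is that your product is naturally indexed over $t\in\{0,\ldots,T-1\}$ rather than $t\in[T]$ as written in the statement, but this is an off-by-one in the paper's notation rather than a gap in your argument.
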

\begin{proof}
    This follows directly from Theorem~\ref{thm:discrete-time-diffusion}.
\end{proof}

\begin{lemma}\label{lem:local-partitioning-lower-bound}
Consider the sequence of iterates obtained by running the heat diffusion with starting vector $\vx_0 = \vone_S$ for some $S \subseteq V$, with $\Vol(S) \leq \Vol(G)/2$. Then, for any $T$ such that $T\phi(S)\leq 1/2$ we:
 \[
    {\norm{\vx_T - \pi(\vone_S)}^2_{\mD}\over \norm{\vone_S - \pi(\vone_S)}^2_{\mD}} \geq \left({1\over 2} - T \phi(S)\right)^2.
\]
\end{lemma}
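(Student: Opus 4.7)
The plan is to lower-bound $\|\vx_T - \pi(\vone_S)\|_{\mD}^2$ via Cauchy-Schwarz against the reference vector $\vu \defeq \vone_S - \pi(\vone_S)$. Setting $\vy_T \defeq \vx_T - \pi(\vone_S)$, the invariance $\vy_t \perp_{\mD} \vone$ (the discrete analogue of Property~\ref{prop:mean-preservation}, which follows immediately from $\cL(\vx) \perp \vone$) yields $\ip{\vy_T, \vu}_{\mD} = \ip{\vy_T, \vone_S}_{\mD}$. Cauchy-Schwarz then gives
\[
\frac{\|\vy_T\|_{\mD}^2}{\|\vu\|_{\mD}^2} \ \geq\ \frac{\ip{\vy_T, \vone_S}_{\mD}^2}{\|\vu\|_{\mD}^4},
\]
so it suffices to show $\ip{\vy_T, \vone_S}_{\mD} \geq \Vol(S)(1/2 - T\phi(S))$; combined with $\|\vu\|_{\mD}^2 = \Vol(S)\Vol(\bar S)/\Vol(G) \leq \Vol(S)$ and $\Vol(G)/\Vol(\bar S) \geq 1$, the claim will then follow.

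The argument has two substeps. First, iterating Lemma~\ref{lem:local-partitioning-norm-shrinkage}---whose proof in fact shows that each discrete step preserves $\min_i \vx_t(i)$ from below and $\max_i \vx_t(i)$ from above---I conclude $\vx_t \in [0,1]^V$ for every $t$, since $\vx_0 = \vone_S \in \{0,1\}^V$. Second, I bound the per-step decrease of $\ip{\vx_t, \vone_S}_{\mD}$ using the subgradient decomposition $\cLD(\vx_t) = \sum_{h \in E} w_h\bar{\delta}_h(\vx_t)\vy_h$, where each $\vy_h$ is supported on $h$, satisfies $\vy_h \perp \vone_h$, and has $\|\vy_h\|_1 \leq 1$ (using $\|\cdot\|_{h,*} = \|\cdot\|_1$ for the standard $\|\cdot\|_h = \|\cdot\|_\infty$ norm). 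For $h \notin \partial S$ the restriction $\vone_S|_h$ is a scalar multiple of $\vone_h$, so $\ip{\vy_h,\vone_S} = 0$. For $h \in \partial S$ the $[0,1]$-bound on $\vx_t$ gives $\bar{\delta}_h(\vx_t) = \tfrac{1}{2}(\max_{i\in h}\vx_t(i)-\min_{i\in h}\vx_t(i)) \leq \tfrac{1}{2}$, while the sum-to-zero constraint together with $\|\vy_h\|_1 \leq 1$ yields $|\ip{\vy_h,\vone_S}| = |\sum_{i \in S \cap h}\vy_h(i)| \leq \tfrac{1}{2}\|\vy_h\|_1 \leq \tfrac{1}{2}$. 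Summing then gives $|\ip{\cLD(\vx_t), \vone_S}| \leq w(\partial S)/4 = \phi(S)\Vol(S)/4$ per step.

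Telescoping $\ip{\vx_{t+1}-\vx_t,\vone_S}_{\mD} = -\ip{\cLD(\vx_t),\vone_S}$ from $t=0$ to $T-1$ then yields $\ip{\vx_T,\vone_S}_{\mD} \geq \Vol(S)(1 - T\phi(S)/4)$. Subtracting $\ip{\pi(\vone_S),\vone_S}_{\mD} = \Vol(S)^2/\Vol(G)$ and using $\Vol(\bar S) \geq \Vol(G)/2$ gives
\[
\ip{\vy_T,\vone_S}_{\mD} \ \geq\ \Vol(S)\!\left(\tfrac{\Vol(\bar S)}{\Vol(G)} - \tfrac{T\phi(S)}{4}\right) \ \geq\ \Vol(S)\!\left(\tfrac{1}{2} - T\phi(S)\right),
\]
which is non-negative under the hypothesis $T\phi(S) \leq 1/2$. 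Plugging into the Cauchy-Schwarz bound above closes the argument.

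The main obstacle is the boundary-hyperedge estimate $|\ip{\vy_h, \vone_S}| \leq 1/2$, which relies crucially on combining the orthogonality $\vy_h \perp \vone_h$ with the $\ell_1$-ball constraint to extract the factor of $1/2$. This is where the specific structure $\|\cdot\|_h = \|\cdot\|_\infty$ of the standard hyperedge cut is essential; together with the $[0,1]$-confinement from Lemma~\ref{lem:local-partitioning-norm-shrinkage}, it is what makes the per-step loss scale linearly in $w(\partial S)$ without picking up extraneous hypergraph parameters.
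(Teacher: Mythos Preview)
Your proof is correct and follows essentially the same strategy as the paper's: reduce to $\ip{\vx_T-\pi(\vone_S),\vone_S}_{\mD}$ via Cauchy--Schwarz, telescope the per-step change $\ip{\cLD(\vx_t),\vone_S}$, and bound each step by a constant times $w(\partial S)$ using the $\ell_\infty$-nonexpansion of the diffusion. Two minor differences are worth noting: (i) you apply Cauchy--Schwarz against $\vu=\vone_S-\pi(\vone_S)$ rather than first restricting to $S$ and using $\vone_S$, which yields the slightly sharper denominator $\|\vu\|_{\mD}^2\le\Vol(S)$; and (ii) your boundary-edge estimate $|\ip{\cLD(\vx_t),\vone_S}|\le w(\partial S)/4$ is a factor of $4$ tighter than the paper's $w(\partial S)$, by exploiting both $\bar\delta_h\le 1/2$ from $\vx_t\in[0,1]$ and the extra $1/2$ from $\vy_h\perp\vone_h$ with $\|\vy_h\|_1\le 1$. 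You then discard this improvement to match the stated bound, so the final result is identical.
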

\begin{proof}
We have:
\begin{align*}
                \norm{\vx_T - \pi(\vone_S)}^2_{\mD} &\geq \norm{\vx_T - \pi(\vone_S)\big|_S}^2_{\mD}\geq \frac{1}{\Vol(S)}\left(\sum_{i\in S}d_i(\vx_T(i) - \pi(\vone_S)_i)\right)^2\\
                &= \frac{1}{\Vol(S)}\left(\vone_S^T \mD(\vx_T-\pi(\vone_S))\right)^2
\end{align*}
where the second inequality follows by the Cauchy-Schwarz inequality. We now seek to lowerbound $\vone_S^T \mD(\vx_T-\pi(\vone_S))$. To do so, we will upperbound the change in this quantity over one iteration using the definition of the update step:
\begin{align*}
    \vone_S^T \mD(\vx_{T-1}-\pi(\vone_S)) - \vone_S^T   \mD(\vx_T-\pi(\vone_S)) &= \vone_S^T \mD(\vx_{T-1}-\vx_{T}) =  \vone_S^T\cL(\vx_{T-1}),
\end{align*}
where, for any vector $\vx$,
\begin{align*}
    \vone_S^T \cL(\vx)
    &=\sum_{i\in S}\sum_{h\ni i} w_h \min_{u\in\R}\norm{\vxh-u\vone}_\infty\left(\partial \min_{u\in\R}\norm{\vxh-u\vone}_\infty\right)_i\\
    &=\sum_{h \subset S} w_h \min_{u\in\R}\norm{\vxh-u\vone}_\infty\langle\partial \min_{u\in\R}\norm{\vxh-u\vone}_\infty, \vone_h\rangle \\
    &\qquad + \sum_{h \in \partial S} w_h \min_{u\in\R}\norm{\vxh-u\vone}_\infty\sum_{i\in h\cap S}\left(\partial \min_{u\in\R}\norm{\vxh-u\vone}_\infty\right)_i
\end{align*}

Recall (as in Equation~\ref{eq:laplacian}) that for any norm $\norm{\cdot}_h$, $\langle\partial \min_{u\in\R}\norm{\vxh-u\vone}_h, \vone_h\rangle = 0$. Moreover, for $\norm{\cdot}_\infty$, 
\[
    \sum_{h \in \partial S}\left(\partial \min_{u\in\R}\norm{\vxh-u\vone}_\infty\right)_i \leq 1.
\]
Thus we have
\begin{align*}
    \vone_S^T \cL(\vx) &\leq \sum_{h \cap S \neq \emptyset} w_h \min_{u\in\R}\norm{\vxh-u\vone}_\infty \leq \norm{\vx}_\infty w(\partial S)
\end{align*}
In particular, for $\vx = \vx_{T-1}$ by \ref{lem:local-partitioning-norm-shrinkage}, 
\[
    \norm{\vx_T}_\infty \leq \norm{\vx_0}_\infty = \norm{\vone_S}_\infty= 1.
\]
Thus
\[
    \vone_S^T \mD(\vx_{T-1}-\pi(\vone_S)) - \vone_S^T   \mD(\vx_T-\pi(\vone_S)) \leq  w(\partial S),
\]
As this holds $\forall t\leq T$, we then have:
\begin{align*}
    \vone_S^T   \mD(\vx_T-\pi(\vone_S)) &\geq \vone_S^T   \mD(\vone_S-\pi(\vone_S)) - Tw(\partial S)\\
    &= \frac{\Vol(S)\Vol(\overline{S})}{\Vol(G)} - Tw(\partial S)\\
    &=\Vol(S)\left(\frac{\Vol(\overline{S})}{\Vol(G)} - T\frac{w(\partial S)}{\Vol(S)}\right)\\
    &=\Vol(S)\left(\frac{\Vol(\overline{S})}{\Vol(G)} - T\phi(S)\right)
\end{align*}
In particular,  the assumption $\Vol(S) \leq \Vol(G)/2$ implies $\frac{\Vol(\overline{S})}{\Vol(G)} \geq \frac{1}{2}$, so for $T\phi(S) \leq 1/2$ we have:
\[
    \left(\vone_S^T   \mD(\vx_T-\pi(\vone_S))\right)^2 \geq \Vol(S)^2\left(\frac{1}{2} - T\phi(S)\right)^2
\]  
Chaining the above bounds, we can thus lowerbound our target ratio by
\begin{align*}
    {\norm{\vx_T - \pi(\vx_0)}_{\mD}^2 \over \norm{\vx_0 - \pi(\vx_0)}_{\mD}^2 } &\geq \frac{\frac{1}{\Vol(S)}\left(\vone_S^T   \mD(\vx_0-\pi(\vx_0)) - Tw(\partial S)\right)^2}{\norm{\vx_0 - \pi(\vx_0) }^2_{\mD}}\\
    &\geq \frac{\Vol(G)}{\Vol(S)\Vol(\overline{S})} \Vol(S)\left(\frac{1}{2} -T\phi(S)\right)^2\\
    &\geq \left(\frac{1}{2} -T\phi(S)\right)^2
\end{align*}
where the last line follows from $2 \geq \frac{\Vol(G)}{\Vol(\overline{S})} \geq 1$.
\end{proof}

\newpage

\section{Deferred proofs}\label{appendix:missing-proofs}
In this section, we provide the proofs to all the technical results in the paper.

\subsection{Deferred proofs from Section~\ref{sec:preliminaries}}\label{sec:preliminaries-proof}

\begin{proof}[{\bf Proof of Lemma~\ref{lem:lovasz-is-semi-norm}}]
Let $\delta_h: 2^h \to\R_{\geq 0}$ be a symmetric submodular cut function, i.e., $\delta_h(S) = \delta_h(h\setminus S)$ for all $S \subseteq h$ and $\delta_h(h) = \delta_h(\emptyset) = 0$ and let $\overline{\delta}_h$ be its Lov\'asz extension.

 By standard properties of the Lov\'asz extension (see, e.g., Proposition 3.1 (d)  in  \cite{bachLearningSubmodularFunctions2013}), 
for any $\vx \in \R^h$ and for any $u\in \R$, we have:
\begin{equation}\label{eq:invariance-under-one-shift}
    \bar{\delta}_h(\vx - u \vone_h) = \bar{\delta}_h(\vx) + |u|\cdot \delta_h(h) = \bar{\delta}_h(\vx) .
\end{equation}

Consider now the function $g : \R^n \to \R$ given by:
\[
    g(\vx) \defeq \bar{\delta}_h(\vx) + |\vone_h^T \vx|.
\]
By applying \eqref{eq:invariance-under-one-shift}, we see that for any $\vx\in \R^n$:
\[
    \min_{u\in \R} g(\vx -u\vone_h) = \min_{u\in \R} \bar{\delta}_h(\vx - u \vone_h) + |\vone_h^T (\vx-u\vone_h)| = \min_{u\in \R} \bar{\delta}_h(\vx ) + |\vone_h^T (\vx-u\vone_h)|= \bar{\delta}_h(\vx).
\]
Hence, to prove the lemma it is sufficient to show that $g$ is a norm.
We begin by showing that $g$ is absolutely homogenous, i.e. for every $\vx \in \R^n$ and $\alpha \in \R, \: g(\alpha \cdot \vx) = \alpha \cdot g(\vx)$ . If $\alpha \geq 0$, we can use positive homogeneity of $\bar{\delta}_h$ to obtain:
\begin{align*}
    g(\alpha \cdot\vx) &= \bar{\delta}_h(\alpha\cdot\vx) + |\vone_h^T \alpha\cdot\vx| =  \alpha\cdot\bar{\delta}_h(\vx) + |\alpha|\cdot |\vone_h^T \vx| = \alpha \cdot g(\vx) = |\alpha|\cdot g(\vx).
\end{align*}

Since $\delta_h$ is symmetric, $\bar{\delta}_h$ is even (See e.g. Proposition 3.1 (g) in \cite{bachLearningSubmodularFunctions2013}.) giving for $\alpha <0$:
\begin{align*}
    g(\alpha \cdot\vx) &= \bar{\delta}_h(\alpha\cdot\vx) + |\vone_h^T \alpha\cdot\vx|= \bar{\delta}_h(-\alpha\cdot\vx) + |\vone_h^T \alpha\cdot\vx|\\
    &= -\alpha \cdot\bar{\delta}_h(\vx) +|\alpha| \cdot|\vone_h^T \vx| = |\alpha| \cdot g(\vx)
\end{align*}

giving that $g$ is absolutely homogenous. 

Next, we show that the triangle inequality holds for $g$. To this end we first note that that $g$ is a convex function, since it is the sum of $\bar{\delta}_h$, which is the Lov\'asz extension of a submodular function, and the function $\vx \mapsto |\vone_h^\top \vx|$ which is easily shown to be convex.

For any $\vx, \vy \in \R^h$, we then have:
$$
g(\vx + \vy) = 2 \cdot g\left(\frac{\vx + \vy}{2}\right) \leq g(\vx) + g(\vy),
$$
where the first equality follows from the positive homogeneity of $g$ and the second from its convexity.

Finally, we show that $g$ is positive definite. 
As done in~\cite{li2018submodular}, we assume, without loss of generality\footnote{If $\delta_h(\{v\}) = 0,$ by submodularity and symmetry, we have that $\delta_h(S \cup \{v\}) = \delta_h(S)$ for all $S \subseteq h$. In this case, $v$ can be safely excluded from $h$ without changing the value of the cut function.} that $\delta_h(\{v\}) > 0$ for every $v \in h.$ 
We consider two cases. If $\vx$ is a non-zero multiple of $\vone_h$, we have:
$$
g(\vx) = g(\alpha \cdot \vone_h) = \bar{\delta}_h(\alpha \cdot \vone_h) + |\alpha| \cdot |h| =  |\alpha| \cdot |h| > 0
$$
(Note that here $|\alpha|$ indicates the absolute value of a real number, while $|h|$ indicates the cardinality of a set). If $\vx$ is not a multiple of $\vone_h$, then there exists some $u^* \in \R$ such that $\vx + u^* \vone_h \neq 0$ and $\vone_h^T(\vx + u^* \vone_h)=0$.

We then have, by the definition of Lov\'asz extension and the non-negativity of $\delta_h$:
$$
g(\vx) =\bar{\delta}_h(\vx) + |\vone_h^T \vx| \geq  \bar{\delta}_h(\vx) = \bar{\delta}_h(\vx + u^* \vone_h) \geq \delta_h\left(\argmax_{i \in h}\{ \vx_i + u^*\}\right) > 0.
$$
The last inequality follows from our assumption.

\end{proof}

\begin{proof}[\textbf{Proof of Lemma~\ref{lemma:eigenvalue-type-bounds}}]
The lower bound in Equation~\eqref{eq:eigenvalue-type-bounds}follows by definition of $\lambda_G$. The upper bound relies on Assumption~\ref{assumption}:
    \begin{align*}
        2\cU(\vx) &= \sum_{h\in E} w_h \norm{\vx}_h^2 \leq \sum_{h\in E} w_h \norm{\vx}_2^2 \\
        &= \sum_{i \in V} \sum_{h:h\ni i}w_h \vx(i)^2  = \sum_{i \in V} \deg(i) \vx(i)^2 =\norm{\vx}^2_{\mD}.
    \end{align*}
For the second part of the lemma, assume $\lambda_G = 0.$ Then, there is a non-zero vector $\vx \in \R^V \perp_{\mD} \vone$ with $\cU(\vx) = 0.$ Consider the cut $S \subset V$ consisting of $i\in V$ having $\vx_i > 0.$ Because $\ip{\vx,\vone}_{\mD} = 0,$ the cut $S$ is non-empty. We claim that $S$ constitutes a connected component of $G$ and completes the proof of the lemma. By way of contradiction, suppose a hyperedge $h$ is cut by $S,$ i.e. $h \cap S , h \cap \bar{S} \neq \emptyset$. Then, $\vx$ is not constant over $h$, so that $\vx_h - u\vone_h \neq 0$ for all $u$. Because $\|\cdot \|_h$ is a norm, this implies that hyperedge $h$ makes a positive contribution to $\cU(\vx).$
\end{proof}

\subsection{Deferred proofs from Section~\ref{sec:heat_diffusion}}

To prove Theorem~\ref{thm:characterizing_subgradient_flow}, we leverage classic results from the study of gradient flows. Let $H$ be a Hilbert space with inner product $\langle\cdot,\cdot\rangle$ and norm $\norm{\cdot}$. For a map $A$ let $D(A) \subseteq H$ denote the domain of $A$, i.e. the set $\vx$ such that $A[\vx] \neq \emptyset$. For $\vx \in D(A)$, we do not assume that $A[\vx]$ is single-valued: in general, $A[\vx]\subseteq H$. 
\begin{theorem}[Existence and uniqueness of solutions to maximal-monotone evolutions. Theorems 3.1 and 3.2 in \citet{brezisOperateursMaximauxMonotones1973}. Translation from French by authors.]\label{thm:brezis-key-result}
    Let $H$ be a Hilbert space and $F$ a convex, proper, and lower semicontinuous function on $H$. Then for all $\vx_0 \in D(A)$, there exists a unique function $\vx(t) \in C([0,\infty), H)$ such that
    \begin{enumerate}
        \item $\frac{d}{dt}\vx(t) + \partial F[\vx(t)] \ni 0$ almost everywhere w.r.t. $t$ on $(0, \infty)$
        \item $\vx(0) = \vx_0$
        \item $\partial F[\vx] \neq \emptyset\  \forall t > 0$
        \item $\vx(t)$ is Lipschitz, i.e. $\dot{\vx}(t) \in L^{\infty}(0, \infty; H)$
    \end{enumerate}
    Moreover, such a solution $\vx(t)$ satisfies the following: 
    \[
        \frac{d^+}{dt} \vx(t) + \partial F^0[\vx(t)] = 0  \ \forall t\in [0,\infty)
    \]
    where $\frac{d^+}{dt}$ denotes the right-derivative of $\vx$ at $t$, and $\partial F^0$ denotes the so-called \textit{principal section} of $\partial F$:
    \[
        \partial F^0[\vx] \defeq \min_{\vy \in \partial F[\vx]}\norm{\vy}
    \]
\end{theorem}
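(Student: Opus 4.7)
The plan is to establish this classical maximal-monotone-evolution result via the Moreau--Yosida regularization technique, which is the standard path for convex-subdifferential gradient flows in a Hilbert space. First, I would introduce for each $\lambda > 0$ the Moreau envelope $F_\lambda(\vx) \defeq \min_{\vy \in H}\{F(\vy) + \frac{1}{2\lambda}\|\vx - \vy\|^2\}$ and its resolvent $J_\lambda \defeq (I + \lambda \partial F)^{-1}$. Standard convex analysis gives that $F_\lambda$ is convex, everywhere finite, and $C^{1,1}$ with $\nabla F_\lambda$ being $(1/\lambda)$-Lipschitz; moreover $\nabla F_\lambda(\vx) = (\vx - J_\lambda(\vx))/\lambda$ always lies in $\partial F(J_\lambda(\vx))$, and, for $\vx \in D(\partial F)$, one has $\nabla F_\lambda(\vx) \to \partial F^0[\vx]$ as $\lambda \downarrow 0$.

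Since $\nabla F_\lambda$ is globally Lipschitz, Picard--Lindel\"of produces a unique global $C^1$ solution $\vx_\lambda(t)$ to the regularized flow $\dot{\vx}_\lambda = -\nabla F_\lambda(\vx_\lambda)$, $\vx_\lambda(0) = \vx_0$. A chain-rule computation delivers the energy identity $\tfrac{d}{dt}F_\lambda(\vx_\lambda) = -\|\nabla F_\lambda(\vx_\lambda)\|^2$ and, using monotonicity of $\partial F$, the monotonicity of $t\mapsto \|\nabla F_\lambda(\vx_\lambda(t))\|$, uniformly bounded in $\lambda$ by $\|\partial F^0[\vx_0]\|$. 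The heart of the argument is then the cross-estimate
\[
\tfrac{d}{dt}\|\vx_\lambda(t) - \vx_\mu(t)\|^2 \leq 2(\lambda + \mu)\|\partial F^0[\vx_0]\|^2,
\]
derived by substituting $\nabla F_\lambda(\vu) \in \partial F(J_\lambda(\vu))$ into the defining monotonicity of $\partial F$ and absorbing the resolvent shift $\vu - J_\lambda(\vu) = \lambda\nabla F_\lambda(\vu)$. This shows $\{\vx_\lambda\}$ is Cauchy in $C([0,T];H)$ for every $T$, so $\vx_\lambda \to \vx$ uniformly on compact intervals. The limit $\vx$ is Lipschitz, $\dot{\vx}_\lambda \rightharpoonup \dot{\vx}$ weakly in $L^2$ on compact intervals, and demiclosedness of the graph of $\partial F$ (applied at $J_\lambda(\vx_\lambda(t)) \to \vx(t)$) yields $\dot{\vx}(t) \in -\partial F(\vx(t))$ almost everywhere on $[0,\infty)$. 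Uniqueness is immediate from monotonicity: for any two candidate solutions, $\tfrac{d}{dt}\|\vx_1 - \vx_2\|^2 = 2\langle \dot{\vx}_1 - \dot{\vx}_2, \vx_1 - \vx_2\rangle \leq 0$.

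The main obstacle is upgrading this almost-everywhere inclusion to the pointwise right-differentiability statement and identifying the right-derivative with $-\partial F^0[\vx(t)]$ at \emph{every} $t > 0$. For this I would first show that $\vx(t) \in D(\partial F)$ for all $t > 0$, a regularizing effect obtained by passing the uniform bound on $\|\nabla F_\lambda(\vx_\lambda(t))\|$ to the limit. Next, one argues that $t \mapsto \tfrac{1}{2}\|\partial F^0[\vx(t)]\|^2$ is right-continuous and non-increasing on $(0,\infty)$, using lower semicontinuity of $F$ and weak closedness of $\partial F$. Since $\partial F(\vx(t))$ is a closed convex subset of $H$, its minimum-norm element is unique; the a.e.\ selection $\dot{\vx}$ agrees with $-\partial F^0[\vx(\cdot)]$ at Lebesgue points, and right-continuity then propagates this equality to every $t > 0$, completing the characterization.
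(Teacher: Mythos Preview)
The paper does not prove this statement at all: it is quoted verbatim as a classical result from Br\'ezis's monograph (Theorems 3.1 and 3.2 there) and is invoked as a black box to derive Theorem~\ref{thm:characterizing_subgradient_flow}. So there is no ``paper's own proof'' to compare against.

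That said, your sketch is the standard Moreau--Yosida regularization argument, which is precisely the route Br\'ezis follows. The outline is correct in its broad strokes: regularize to a Lipschitz gradient, get a Cauchy family via the cross-estimate, pass to the limit using demiclosedness of $\partial F$, and then upgrade to the right-derivative characterization. One caveat: your uniform bound $\|\nabla F_\lambda(\vx_\lambda(t))\| \le \|\partial F^0[\vx_0]\|$ tacitly assumes $\vx_0 \in D(\partial F)$, which is what the theorem statement here requires (note $D(A)$ in the statement), so this is fine for the present application; the more general case $\vx_0 \in \overline{D(\partial F)}$ needs a slightly different bound involving $t^{-1}$ to get the regularizing effect. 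Also, the final step---propagating the a.e.\ identity $\dot{\vx} = -\partial F^0[\vx]$ to a pointwise right-derivative at every $t$---is the most delicate part and in Br\'ezis's treatment relies on showing that $t\mapsto F(\vx(t))$ is absolutely continuous and that $t\mapsto \|\partial F^0[\vx(t)]\|$ is non-increasing and right-continuous; your sketch gestures at this but would need more care to be a full proof.
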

We now use this results to prove our theorem:
\begin{proof}[{\bf Proof of Theorem~\ref{thm:characterizing_subgradient_flow}}]
    For $\mD$ positive-definite, $\R^n$ equipped with $\langle\cdot,\cdot\rangle_{\mD}$ and $\norm{\cdot}_{\mD}$ is a Hilbert space. Denote this space $H$. Consider $\cU(\vx)$ as in Equation~\eqref{eq:our-setting} and observe that $\cU(\vx)$ is convex and lower-semicontinuous on $H$. Let $\partial_{\mD} \cU(\vx)$ denote the subdifferential of $\cU$ with respect to the geometry of $H$.
    
    Then, by Theorem~\ref{thm:brezis-key-result}, for any $\vx_0\in \R^n$ there exists a unique solution $\vx(t) \in C([0,\infty);H)$ with $\dot{\vx}(t)\in L^{\infty}(0,\infty; H)$ such that
    \[
        \begin{cases}
            \vx(0)=  \vx_0\\
            \dot{\vx}(t) \in -\partial_{\mD}\cU(\vx(t)) \text{ almost everywhere for } t\geq 0
        \end{cases}
    \]
    and that for all $t\in [0,\infty)$,
    \[
        \lim_{h\rightarrow 0}\frac{\vx(t+h)-\vx(t)}{h}= -\argmin_{\vy\in \partial_{\mD}\cU(\vx(t))}\norm{\vy}_{\mD}.
    \]
    We now examine $\partial_{\mD} \cU(\vx)$. Observe that by defnition,
    \begin{align*}
        \partial_{\mD} \cU(\vx) &= \{\vy\in \R^n \mid \cU(\vz) \geq \cU(\vx) + \ip{\vz-\vx,\vy}_{\mD}, \forall \vz \in \R^n\}\\
        &= \{\mD^{-1}\vy\in \R^n \mid \cU(\vz) \geq \cU(\vx) + \ip{\vz-\vx,\vy}, \forall \vz \in \R^n\}\\
        &= \{\mD^{-1}\vy\in \R^n \mid \vy\in \cL(\vx)\} = \mD^{-1}\cL(\vx)
    \end{align*}
    as we define $\cL(\vx)$ to be the subgradient of $\cU$ with respect to the Euclidean inner product. Thus, the solution $\vx(t)$ satisfies
    \[
        \begin{cases}
            \vx(0)=  \vx_0\\
            \dot{\vx}(t) \in -\mD^{-1}\cL(\vx(t)) \text{ almost everywhere for } t\geq 0
        \end{cases}
    \]
    and that a.e. on $t\in [0,\infty)$,
    \[
        -\dot{\vx}(t) = \argmin_{\vy\in \mD^{-1}\cL(\vx)}\norm{\vy}_{\mD} = \argmin_{\vy \in \cL(\vx(t))} \norm{\vy}_{\mD^{-1}}
    \]
    Finally, we observe that for $\mD$ positive definite and $H$ as defined, the function classes $C([0,\infty);H)$ and $L^{\infty}(0,\infty; H)$ are equivalent to the classes $C([0,\infty);\R^n)$ and $L^{\infty}(0,\infty; \R^n)$ on Euclidean space with respect to the Euclidean norm and inner product. Thus the result holds.
\end{proof}

\begin{proof}[\textbf{Proof of Theorem~\ref{thm:cts-time-diffusion-convergence}}]
For $\vx(t)$ continuous on $[0,\infty)$, let $\dot{\vx}(t)\in \R^n$ denote the vector whose entries are the derivatives of $\vx$ with respect to time. For $\vx(t)$ satisfying $\dot{\vx}(t) \in -\mD^{-1}\cL(\vx(t))$  a.e. on $[0,T]$, by Property~\ref{property:pi-unchanging-cts-time} we have
\begin{align*}
    \frac{d}{dt}\left(\frac{1}{2}\norm{\vx(t) - \pi(\vx_0)}^2_{\mD}\right) &=
    \frac{d}{dt}\left(\frac{1}{2}\norm{\vx(t)}^2_{\mD}\right)\\
    &= -\langle \mD^{-1}\cL(\vx(t)), \vx(t)\rangle_{\mD}\\
    &=-2\cU(\vx(t))
\end{align*}
thus establishing the instantaneous result.

For the aggregate result, we observe that by Property~\ref{property:pi-unchanging-cts-time} and the definition of $\pi$,
\[
    \vx(t) - \pi(\vx(t)) = \vx(t) - \pi(\vx_0) \quad \text{ and }\quad \left(\vx(t) - \pi(\vx_0)\right)\perp_{\mD}\vone.
\]
Moreover, by definition $\pi(\vx_0)$ parallel to $\vone$. Thus, since $\cU$ is invariant under additive shifts of $\vone$,
\begin{align*}
    -2\cU(\vx(t)) &= -2\cU(\vx(t)-\pi(\vx_0))\\
    &\leq -\lambda_G\norm{\vx(t)-\pi(\vx_0)}^2_{\mD}\\
    &=-2\lambda_G \left(\frac{1}{2}\norm{\vx(t)-\pi(\vx_0)}^2_{\mD}\right),
\end{align*}
by Corollary~\ref{cor:bounded-subgradient-norm} and the definition of the isoperimetric constant. $\vx(t) \in C([0,\infty),\R^n)$ implies continuity of $\norm{\vx(t)-\pi(\vx_0)}_{\mD}$, so by Grönwall's inequality the above bound implies
\[
    \frac{1}{2}\norm{\vx(t)-\pi(\vx_0)}^2_{\mD} \leq \frac{1}{2}\norm{\vx_0-\pi(\vx_0)}^2_{\mD}\exp\left(\int_0^t -2\lambda_G ds\right) = \frac{1}{2}\norm{\vx_0-\pi(\vx_0)}^2 e^{-2\lambda_G t}.
\]
\end{proof}
\begin{proof}[\textbf{Proof of Theorem~\ref{thm:discrete-time-diffusion}}]
    By the definition of the iterates:
    \begin{align*}
        \norm{\vx_{t+1}-\pi(\vx_0)}_\mD^2 & = \norm{\vx_t -  \eta\mD^{-1} \cLD (\vx_t) - \pi(\vx_0)}_\mD^2\\
        &= \norm{\vx_t-\pi(\vx_0)}^2_\mD - 2\eta \ip{\mD^{-1}\cLD (\vx_t), \vx_t - \pi(\vx_0)}_{\mD} + \eta^2 \norm{\mD^{-1}\cLD(\vx_t)}_\mD^2 \\
        &= \norm{\vx_t-\pi(\vx_0)}^2_\mD - 4\eta \cU(\vx_t) + \eta^2 \norm{\cLD(\vx_t)}_{\mD^{-1}}^2
    \end{align*}
    where the last line follows from Property~\ref{prop:inner-prod-with-cL} and the fact that by Equation~\ref{eq:laplacian} $\cL(\vx) \perp\vone$ for all $\vx$. Applying Corollary~\ref{cor:bounded-subgradient-norm} to the last term, we obtain
    \[
        \norm{\vx_{t+1}-\pi(\vx_0)}_\mD^2 \leq \norm{\vx_t - \pi(\vx_0)}^2_{\mD} + \cU(\vx_t)\left(2\eta^2 - 4\eta\right),
    \]
    This bound is extremized with respect to $\eta$ when $\eta=1$, yielding:
    \[
        \norm{\vx_{t+1} - \pi(\vx_0)}_\mD^2 \leq \norm{\vx_t-\pi(\vx_0)}^2_{\mD} - 2\cU(\vx_t) 
    \]
    which thus establishes the ``instantaneous" property.
    
    For the aggregate property, observe that $\pi(\vx_0)$ is parallel to $\vone$ and that $(\vx_t - \pi(\vx_0)) \perp_{\mD} \vone$. Thus by Lemma~\ref{lemma:eigenvalue-type-bounds}
    \[
       -2\cU(\vx_t) = -2\cU(\vx_t-\pi(\vx_0)) \leq -\lambda_G \norm{\vx_t-\pi(\vx_0)}.
    \]
    We thus have
    \[
        \norm{\vx_{t+1}-\pi(\vx_0)}_\mD^2 \leq (1-\lambda_G) \norm{\vx_t-\pi(\vx_0)}^2,
    \]
    and since this holds for all $t\in\mathbb{N}_{\geq 0}$,
    \[
        \norm{\vx_{t+1}-\pi(\vx_0)}_\mD^2 \leq (1-\lambda_G)^t \norm{\vx_0 - \pi(\vx_0)}^2_{\mD}.
    \]
\end{proof}

\subsection{Deferred proofs from Section~\ref{sec:resolvent}}

\begin{proof}[{\bf Proof of Lemma~\ref{lem:hypergraph-pagerank-fixed-point}}]
Letting $\vx = \mD^{-1} \vp_{\alpha}(\vs)$ we see that:
\[
    \vs- \vp_{\alpha}(\vs) \in {1-\alpha \over 2\alpha} \cL (\mD^{-1} \vp_{\alpha}(\vs)) 
\]
\[
    \iff \vs - \mD \vx \in {1-\alpha \over 2\alpha}\cL (\vx) 
\]
\[
    \iff { 2\alpha \over 1-\alpha}\vs - { 2\alpha \over 1-\alpha}\mD \vx \in \cL (\vx) 
\]
giving $\vx \in \cR_{\nicefrac{2\alpha}{1-\alpha}}((\nicefrac{2\alpha}{1-\alpha})\vs)$ as needed.
\end{proof}

\begin{proof}[{\bf Proof of Theorem~\ref{thm:main1}}]
The only obstacle to applying Theorem~\ref{thm:optimization} directly is the fact that the optimization in Problem~\ref{eq:resolvent} is over $\R^V,$ while Lemma~\ref{lemma:energy-is-seminorm} only guarantees that the objective is a norm over $\R^V \perp_{\mD} \vone.$
To bypass this obstacle, we show how the solution to the resolvent optimization problem over $\R^V$ can be reduced to that over $\R^V \perp_{\mD} \vone.$ To this end, we decompose a solution $\vx$ into its component along $\vone$ and its component orthogonal to $\vone,$ with respect to the $\mD$-inner product.
$$
\vx = \vx_{\perp} + \pi(\vx) = \vx_{\perp} + \frac{\ip{\vx,\vone}_{\mD}}{\|\vone\|_{\mD}^2} \vone.
$$
By its definition, $\cU(\vx)$ is invariant under shifts by multiples of $\vone$, so that we can decompose the optimization problem~\ref{eq:resolvent} into two separate optimization problems:
\begin{align} 
\nonumber \textrm{OPT} \defeq  \min
_{\vx \in \R^V} 
\cU(\vx) + \frac{\lambda}{2} \|\vx\|^2_{\mD} - \ip{\vs,\vx} =\\
\nonumber \min_{\substack{\vx_\perp \in \R^V \perp_{\mD} \vone\\ \alpha \in \R}} \cU(\vx_\perp + \alpha \vone ) + \frac{\lambda}{2} \cdot \|\vx_{\perp} + \alpha \vone\|^2_{\mD} - \ip{\vs,\vx_\perp + \alpha \vone}=\\
\label{eq:opt-equiv}
\min_{\vx_\perp \in \R^V \perp_{\mD} \vone} \cU(\vx_\perp) + \frac{\lambda}{2} \cdot\|\vx_{\perp}\|^2_{\mD} - \ip{\vs,\vx_\perp}+ \min_{\alpha \in \R} \frac{\lambda}{2} \cdot \alpha^2 \cdot  \|\vone\|^2_{\mD}- \alpha \cdot \ip{\vs, \vone} 
\end{align}

If $\lambda = 0$ and $\ip{\vs, \vone} \neq 0,$ then the optimization problem is unbounded, which can be easily detected by our algorithm.
If $\lambda = 0$ and $\ip{\vs, \vone} = 0,$ we have successfully reduced the problem to the form of Theorem~\ref{thm:optimization}. Hence, we may assume that $\lambda > 0.$

The second minimization is optimized by choosing $\alpha = \nicefrac{ \ip{\vs, \vone}}{\lambda \cdot \|\vone\|^2_{\mD}},$ which yields:
\begin{align*}
\textrm{OPT} =
\min_{\vx_\perp \in \R^V \perp_{\mD} \vone} \cU(\vx_\perp) + \frac{\lambda}{2} \cdot\|\vx_{\perp}\|^2_{\mD} - \ip{\vs,\vx_\perp} - \frac{\ip{\vs, \vone}^2}{\lambda\cdot \|\vone\|^2_{\mD}} = \textrm{OPT}_\perp -  \frac{\ip{\vs, \vone}^2}{\lambda\cdot \|\vone\|^2_{\mD}},
\end{align*}
where $\textrm{OPT}_\perp$ denotes the optimum of the minimization problem restricted to $\R^V \perp_{\mD} \vone.$
By our assumptions on $\cU$, we have the following norm bounds:
$$
\forall\, \R^V \perp_{\mD} \vone, 
\; \frac{(\lambda_G + \lambda_2)}{2} \cdot \|\vx\|^2_{\mD}   \leq \cU(\vx) + \frac{\lambda}{2} \|\vx\|^2_{\mD} \leq \frac{(1 + \lambda)}{2} \cdot  \|\vx\|^2_{\mD}.
$$
Hence, by Theorem~\ref{thm:optimization}, after $O\left(\nicefrac{1}{(\lambda_G + \lambda) \epsilon^2}\right)$ iterations, Algorithm~\ref{alg:opt} outputs $\vx_{\perp}$ such that
$$
\cU(\vx_\perp) + \frac{\lambda}{2} \cdot\|\vx_{\perp}\|^2_{\mD} - \ip{\vs,\vx_\perp} \leq \textrm{OPT}_{\perp} + \epsilon \cdot |\textrm{OPT}_{\perp}| 
$$
Finally, the algorithm for Problem~\ref{eq:resolvent} will return:
$$
\vx_{\textrm{out}} = \vx_\perp + \frac{ \ip{\vs, \vone}}{\lambda \cdot \|\vone\|^2_{\mD}} \vone.
$$
By Equation~\ref{eq:opt-equiv}, we have:
\begin{align*}
\cU(\vx_{\textrm{out}}) + \frac{\lambda}{2} \cdot \|\vx_{\textrm{out}}\|^2_{\mD} - \ip{\vs,\vx_{\textrm{out}}} = \cU(\vx_\perp) + \frac{\lambda}{2} \cdot\|\vx_{\perp}\|^2_{\mD} - \ip{\vs,\vx_\perp}  -  \frac{\ip{\vs, \vone}^2}{\lambda\cdot \|\vone\|^2_{\mD}}\leq \\
\textrm{OPT}_{\perp} + \epsilon \cdot |\textrm{OPT}_{\perp}|  -  \frac{\ip{\vs, \vone}^2}{\lambda\cdot \|\vone\|^2_{\mD}} = \textrm{OPT} + \epsilon \cdot |\textrm{OPT}_{\perp}| \leq
\textrm{OPT} + \epsilon \cdot |\textrm{OPT}| 
\end{align*}
This proves the approximation result and the bound on the number of iterations. Each iteration requires updating a vector by $\eta \mD^{-1} \vs$, which requires linear time in the size of the hypergraph, and updating in the heat diffusion direction $\eta \mD^{-1} \cL(\vx_t)^\mD.$
\end{proof}

\subsection{Deferred proofs from Section~\ref{sec:optimization}}\label{appendix:proof-of-optimization-convergence}

\begin{proof}[\textbf{Proof of Lemma~\ref{lemma:energy-is-seminorm}}]
By Lemma~\ref{lem:lovasz-is-semi-norm}, each $\|\cdot\|_h$ is a norm over $\R^h$ and hence a semi-norm over $\R^V.$ 
We proceed to check the absolute homogeneity of $\cU(\cdot) + \nicefrac{\lambda}{2} \|\cdot \|^2_{\mD}$. For all $\vx \in \R^V$ and $a \in \R$:
\begin{align*}
\sqrt{2 \cdot\cU(a\vx) + \lambda \norm{a\vx}^2_{\mD}} &= \sqrt{\sum_{h\in E}w_h \min_{u\in \R}{\norm{a\vxh - u\1}_h^2}+ \lambda a^2\norm{\vx}^2_{\mD}}\\
&= \sqrt{a^2\sum_{h\in E}w_h \min_{u\in \R}{\norm{\vxh - u\1}_h^2} + \lambda a^2\norm{\vx}^2_{\mD}}\\
&= |a| \sqrt{\sum_{h\in E}w_h \min_{u\in \R}{\norm{\vxh - u\1}_h^2}+ \lambda \norm{\vx}^2_{\mD}}\\
&= |a| (\sqrt{2 \cdot\cU(\vx) + \lambda \norm{\vx}_{\mD}^2}),
\end{align*}
Next, we establish subadditivity. For all $\vx, \vy \in \R^V$:
\begin{align*}
\sqrt{2 \cdot\cU(\vx + \vy) + \lambda \norm{\vx+\vy}_{\mD}^2} &= \sqrt{\sum_{h\in E}w_h \min_{u\in \R}{\norm{\vxh + \vyh - u\1}_h^2} + \lambda\norm{\vx+\vy}_{\mD}^2}\\
&\leq \sqrt{\sum_{h\in E}w_h \min_{u\in \R}{\norm{\vxh - u\1}_h^2} + \lambda\norm{\vx}_{\mD}^2} + \sqrt{\sum_{h\in E}w_h \min_{u\in \R}{\norm{\vyh - u\1}_h^2}+\lambda\norm{\vy}_{\mD}^2}\\
&= \sqrt{2 \cdot\cU(\vx) +\lambda\norm{\vx}_{\mD}^2} + \sqrt{2\cdot\cU(\vy)+\lambda\norm{\vy}_{\mD}^2}
\end{align*}
In the above, the last part follows from subadditivity of $\min_{u\in\R} \norm{\cdot - u\1}_h$ function, together with the observation that $\sqrt{U(\cdot)}$ is the $\norm{\cdot}_{\diag(\vw)}$ norm of the vector of $h$-seminorms, and:
\[
    \norm{\vv}_{\diag(\vw)} \leq \norm{\vw}_{\diag(\vw)},
\]
whenever $\vv \leq \vw$ entry-wise. 

Finally, the proposed function is positive definite, because the connectedness of $\lambda_G$ implies, by Lemma~\ref{lemma:eigenvalue-type-bounds}, that
$$
\forall \vx \perp_{\mD} \vone\: , \cU(\vx) = 0 \iff \|\vx||^2_{\mD} = 0 \iff \vx = 0.
$$
\end{proof}

\subsubsection{Proof of Theorem~\ref{thm:optimization}}

We will need the following lemma, which captures an important consequence of any upper bound of the form of Equation~\ref{eq:eigenvalue-type-bounds}:
\begin{corollary}\label{cor:bounded-subgradient-norm}
Let $F(\cdot) = \nicefrac{1}{2} \|\cdot\|^2$ for a semi-norm $\|\cdot\|$ and $\mR$ be a positive semi-definite operator. Suppose there exists $u \in \R_{>0}$ such that, for all $\vx$, $F(\vx) \leq \nicefrac{u}{2} \|\vx\|^2_{\mR}.$ Then:
$$
\vz \in \partial F(\vx) \; \mathrm{ implies } \; \frac{1}{2}\|\vz\|^2_{\mR^{-1}} \leq u \cdot F(\vx)
$$
In particular, Equation~\ref{eq:eigenvalue-type-bounds} implies that for all $\vz\in \cL(\vx)$, $\nicefrac{1}{2} \cdot \norm{\vz}^2_{\mD^{-1}} \leq \cU(\vx)$.
\end{corollary}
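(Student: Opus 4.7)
The plan is to combine two ingredients: an explicit representation of subgradients of $F = \tfrac{1}{2}\|\cdot\|^2$ in terms of the dual norm $\|\cdot\|_{*}$ (via Property~\ref{fact:norm-condition}), and the dual of the norm comparison that the hypothesis implicitly provides. The main obstacle, such as it is, lies in dualizing correctly: one might be tempted to apply Cauchy--Schwarz directly to $\ip{\vz,\vx}=\|\vx\|^2$, but this produces a \emph{lower} bound on $\|\vz\|_{\mR^{-1}}$, the wrong direction. The right move is to invert the comparison before plugging in $\vz$.

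First I would rewrite the hypothesis $F(\vx) \leq \tfrac{u}{2}\|\vx\|_{\mR}^{2}$, valid for every $\vx$, in the equivalent form $\|\vx\| \leq \sqrt{u}\,\|\vx\|_{\mR}$. Since $\mR$ is positive definite, the dual of $\|\cdot\|_{\mR}$ with respect to the standard inner product is $\|\cdot\|_{\mR^{-1}}$. Dualizing the comparison above reverses the inequality between dual norms: for every $\vy$,
\[
\|\vy\|_{\mR^{-1}} \;=\; \sup_{\|\vx\|_{\mR}\leq 1}\ip{\vy,\vx} \;\leq\; \sup_{\|\vx\|\leq \sqrt{u}}\ip{\vy,\vx} \;=\; \sqrt{u}\,\|\vy\|_{*}.
\]

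Next I would use Property~\ref{fact:norm-condition} to write any $\vz\in\partial F(\vx)$ as $\vz=\|\vx\|\,\vy$, where $\vy\in\argmax_{\|\vw\|_{*}\leq 1}\ip{\vw,\vx}$ and in particular $\|\vy\|_{*}\leq 1$. Combining with the dual comparison just established gives
\[
\|\vz\|_{\mR^{-1}} \;=\; \|\vx\|\cdot\|\vy\|_{\mR^{-1}} \;\leq\; \sqrt{u}\,\|\vx\|\cdot\|\vy\|_{*} \;\leq\; \sqrt{u}\,\|\vx\|,
\]
and squaring yields the main bound $\tfrac{1}{2}\|\vz\|_{\mR^{-1}}^{2} \leq \tfrac{u}{2}\|\vx\|^{2} = u\cdot F(\vx)$. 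The degenerate case $\|\vx\|=0$ is handled automatically, since then the factor $\|\vx\|$ in Property~\ref{fact:norm-condition} forces $\vz=\mathbf{0}$ and both sides vanish.

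For the hypergraph statement, I would specialize with $F=\cU$, $\mR=\mD$, and $u=1$. The required bound $\cU(\vx)\leq\tfrac{1}{2}\|\vx\|_{\mD}^{2}$ is the upper inequality of Lemma~\ref{lemma:eigenvalue-type-bounds}; while that lemma is stated for $\vx\perp_{\mD}\vone$, its proof of the upper bound only uses Assumption~\ref{assumption} and the trivial estimate $\min_{u}\|\vx_{h}-u\vone_{h}\|_{h}\leq\|\vx_{h}\|_{h}$, and therefore holds on all of $\R^{V}$. This immediately produces $\tfrac{1}{2}\|\vz\|_{\mD^{-1}}^{2}\leq\cU(\vx)$ for every $\vz\in\cL(\vx)$, as claimed.
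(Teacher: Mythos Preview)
Your proof is correct and takes a somewhat different route from the paper's. The paper argues via the Fenchel conjugate: it writes $\tfrac{1}{2}\|\vz\|_{\mR^{-1}}^{2} = \max_{\vy}\bigl(\ip{\vz,\vy} - \tfrac{1}{2}\|\vy\|_{\mR}^{2}\bigr)$, bounds the regularizer using the hypothesis $F(\vy)\leq\tfrac{u}{2}\|\vy\|_{\mR}^{2}$, and then identifies the maximizer as $\vy = u\vx$ by invoking the linear homogeneity of $\partial F$; the conclusion follows from Property~\ref{fct:doublenorm}. You instead dualize the norm comparison $\|\cdot\|\leq\sqrt{u}\,\|\cdot\|_{\mR}$ directly to obtain $\|\cdot\|_{\mR^{-1}}\leq\sqrt{u}\,\|\cdot\|_{*}$, and then appeal to Property~\ref{fact:norm-condition} to write any subgradient as $\|\vx\|$ times a vector of dual norm at most one. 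Your argument is slightly more elementary: it avoids the Fenchel formalism and the identification-of-maximizer step, and Property~\ref{fct:doublenorm} is not needed. The paper's approach, on the other hand, extends more transparently to objectives that are not exactly squared semi-norms but still satisfy a homogeneity condition on the subdifferential. One small remark: you write ``since $\mR$ is positive definite'' while the statement says positive \emph{semi}-definite; this is harmless since the conclusion involves $\mR^{-1}$ and the paper's own proof (and all applications) tacitly assume invertibility.
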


\begin{proof}[\textbf{Proof of Corollary~\ref{cor:bounded-subgradient-norm}}]
Taking the Fenchel dual of $\norm{\cdot}_{\mR^{-1}}$,
\begin{align*}
    {1\over 2}\norm{\vz}_{\mR^{-1}}^2&=\max_{\vy} \ip{\vz, \vy} -{\norm{\vy}_{\mD}^2 \over 2}\\
    &\leq\max_{\vy} \ip{\vz, \vy} - \frac{F(\vy)}{u}
\end{align*}
This  expression is maximized with respect to $\vy$ when $u \cdot \vz \in \partial F(\vy)$. Because $\partial F$ is linear homogeneous, this is satisfied for $\vy = u \cdot \vx$, so
\begin{eqnarray*}
    {1\over 2}\norm{\vz}_{\mR^{-1}}^2  &\leq& u \cdot \ip{\vz, \vx} - \frac{F(u \cdot \vx)}{u}\\ &=& u \cdot (\ip{\vz, \vx} - F(\vx)) = u \cdot F(\vx)
\end{eqnarray*}
The last step follows by Property~\ref{fct:doublenorm}.
\end{proof}

The first step of the proof of the Theorem~\ref{thm:optimization} closely follows the standard analysis of optimistic mirror descent, where the optimistic part of the step is performed with respect to the fixed part of the subgradient: $\vs$. We summarize it in the following lemma:
\begin{lemma}\label{lem:optimistic-md} For all $t \in \N,$
\begin{align*}
(& F(\hat{\vx}_t) -  \ip{s,\hat{\vx}_t}) - \mathrm{OPT}_{F,\vs} \\ &\leq  \frac{1}{2\eta}\norm{\vx_t - \vx^*}^2_{\mR} - \frac{1}{2\eta}\norm{\vx_{t+1} - \vx^*}^2_{\mR}  + \frac{\eta}{2} \norm{\vz_t}^2_{\mR^{-1}}.\\
\end{align*}
\end{lemma}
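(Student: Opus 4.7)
My plan is to view the lemma as a single-step regret bound for the optimistic mirror descent iteration driving Algorithm~\ref{alg:opt}, derived by combining a convexity inequality with a three-point identity for the Bregman divergence induced by $\tfrac{1}{2}\|\cdot\|_{\mR}^2$. Let $\vx^{\ast} \in \argmin_{\vx} F(\vx) - \ip{\vs,\vx}$, which exists since $F$ is a squared norm. The first step is to apply the subgradient inequality for $F$ at the auxiliary iterate $\hat{\vx}_t$: because $\vz_t \in \partial F(\hat{\vx}_t)$, convexity gives $F(\hat{\vx}_t) - F(\vx^{\ast}) \leq \ip{\vz_t, \hat{\vx}_t - \vx^{\ast}}$. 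Subtracting the linear term $\ip{\vs, \hat{\vx}_t - \vx^{\ast}}$ from both sides converts this into
\[
\bigl(F(\hat{\vx}_t) - \ip{\vs, \hat{\vx}_t}\bigr) - \mathrm{OPT}_{F,\vs} \;\leq\; \ip{\vz_t - \vs,\; \hat{\vx}_t - \vx^{\ast}}.
\]

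The second step is to rewrite the combined ``gradient'' $\vz_t - \vs$ purely in terms of the two iterates $\vx_t, \vx_{t+1}$, exploiting the two-stage update in Lines 5--6. From $\hat{\vx}_t = \vx_t + \eta\mR^{-1}\vs$ and $\vx_{t+1} = \hat{\vx}_t - \eta\mR^{-1}\vz_t$, I subtract to obtain $\vx_{t+1} - \vx_t = -\eta\mR^{-1}(\vz_t - \vs)$, i.e.\ $\vz_t - \vs = \tfrac{1}{\eta}\mR(\vx_t - \vx_{t+1})$. Substituting into the previous display and converting the standard Euclidean inner product to the $\mR$-inner product gives
\[
\bigl(F(\hat{\vx}_t) - \ip{\vs, \hat{\vx}_t}\bigr) - \mathrm{OPT}_{F,\vs} \;\leq\; \tfrac{1}{\eta}\ip{\vx_t - \vx_{t+1},\; \hat{\vx}_t - \vx^{\ast}}_{\mR}.
\]

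The third step is to unfold this inner product via the four-point identity
\[
2\ip{\va - \vb,\; \vc - \vd}_{\mR} \;=\; \|\va - \vd\|_{\mR}^2 + \|\vb - \vc\|_{\mR}^2 - \|\va - \vc\|_{\mR}^2 - \|\vb - \vd\|_{\mR}^2,
\]
applied with $\va = \vx_t$, $\vb = \vx_{t+1}$, $\vc = \hat{\vx}_t$, $\vd = \vx^{\ast}$. Here I substitute the explicit differences $\vx_t - \hat{\vx}_t = -\eta\mR^{-1}\vs$ and $\vx_{t+1} - \hat{\vx}_t = -\eta\mR^{-1}\vz_t$, whose $\mR$-norms collapse to $\eta^2\|\vs\|_{\mR^{-1}}^2$ and $\eta^2\|\vz_t\|_{\mR^{-1}}^2$ respectively. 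After dividing by $2\eta$, this yields
\[
\tfrac{1}{\eta}\ip{\vx_t - \vx_{t+1},\hat{\vx}_t - \vx^{\ast}}_{\mR} = \tfrac{1}{2\eta}\|\vx_t - \vx^{\ast}\|_{\mR}^2 - \tfrac{1}{2\eta}\|\vx_{t+1} - \vx^{\ast}\|_{\mR}^2 + \tfrac{\eta}{2}\|\vz_t\|_{\mR^{-1}}^2 - \tfrac{\eta}{2}\|\vs\|_{\mR^{-1}}^2.
\]
Dropping the final non-positive term $-\tfrac{\eta}{2}\|\vs\|_{\mR^{-1}}^2$ yields the inequality claimed in the lemma. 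I expect no real obstacle beyond bookkeeping; the only subtle point is recognizing that the optimistic prediction step in Line 5 is exactly what makes the ``gradient-error'' quantity $\vz_t - \vs$ align with the telescoping direction $\vx_t - \vx_{t+1}$, so that the three-point identity cleans out all cross terms and even contributes a favorable $-\tfrac{\eta}{2}\|\vs\|_{\mR^{-1}}^2$ savings over the usual mirror descent bound.
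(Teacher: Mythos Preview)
Your proof is correct and follows essentially the same approach as the paper: both combine the convexity inequality $F(\hat{\vx}_t) - F(\vx^{\ast}) \leq \ip{\vz_t, \hat{\vx}_t - \vx^{\ast}}$ with a quadratic expansion linking $\|\vx_t - \vx^{\ast}\|_{\mR}^2$, $\|\vx_{t+1} - \vx^{\ast}\|_{\mR}^2$, $\|\vz_t\|_{\mR^{-1}}^2$, and $\|\vs\|_{\mR^{-1}}^2$, then drop the favorable $-\tfrac{\eta}{2}\|\vs\|_{\mR^{-1}}^2$ term. The only cosmetic difference is that the paper expands $\tfrac{1}{2}\|\vx_{t+1} - \vx^{\ast}\|_{\mR}^2$ directly by successive substitution of $\vx_{t+1} = \hat{\vx}_t - \eta\mR^{-1}\vz_t$ and $\hat{\vx}_t = \vx_t + \eta\mR^{-1}\vs$, whereas you package the same algebra via the four-point polarization identity; the resulting intermediate equality is identical.
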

\begin{proof}[\textbf{Proof of Lemma~\ref{lem:optimistic-md}}]
For any $t\in \N$, we have:
\begin{align*}
    \frac{1}{2}\norm{\vx_{t+1}-\vx^*}^2_{\mR} &=\frac{1}{2}\norm{\hat{\vx}_t - \eta \mR^{-1}\vz_t- \vx^*}^2_{\mR}\\
    &= \frac{1}{2}\norm{\hat{\vx}_t - \vx^*}_{\mR}^2 -\eta\ip{\hat{\vx}_t - \vx^*, \vz_t} +{\eta^2 \over 2} \norm{  \mR^{-1}\vz_t }^2_{\mR}\\
    &= \frac{1}{2}\norm{\hat{\vx}_t - \vx^*}_{\mR}^2 -\eta\ip{\hat{\vx}_t - \vx^*, \vz_t} +{\eta^2 \over 2} \norm{\vz_t }^2_{\mR^{-1}}\\
    &= \frac{1}{2}\norm{\vx_t + \eta\mR^{-1} \vs -  \vx^*}_{\mR}^2 -\eta\ip{\hat{\vx}_t - \vx^*, \vz_t} +{\eta^2 \over 2} \norm{ \vz_t }^2_{\mR^{-1}}\\
    &=\frac{1}{2}\norm{\vx_t - \vx^*}^2_{\mR} - \eta\ip{\hat{\vx}_t - \vx^*, \vz_t - \vs } + \frac{\eta^2}{2}\left( \norm{\vz_t}^2_{\mR^{-1}}-\norm{\vs}^2_{\mR^{-1}} \right)\\
    &\leq\frac{1}{2}\norm{\vx_t - \vx^*}^2_{\mR} + \eta\left[\textrm{OPT}_{F,\vs} - (F(\hat{\vx}_t) - \ip{s,\hat{\vx}_t})\right] + \frac{\eta^2}{2} \norm{\vz_t}^2_{\mR^{-1}}\\
\end{align*}    
where the last inequality follows from the convexity of $F$ and the positivity of $\|s\|^2_{\mR^{-1}}.$ The lemma follows from re-arranging terms.
\end{proof}

Next, as we do not have an uniform bound on the norm $\norm{\vz_t}_{\mR^{-1}}$ of the subgradient $\vz_t$, we rely on Equation~\ref{eq:poincare}, Corollary~\ref{cor:bounded-subgradient-norm}, and the setting of $\eta$ in Algorithm~\ref{alg:opt} to obtain:
$$
\frac{1}{2} \norm{\vz_t}^2_{\mR^{-1}} \leq u_{\mR} \cdot F(\hat{\vx}_t)\leq \frac{\epsilon}{2\eta} \cdot F(\hat{\vx}_t)
$$
Substituting this equation in the result of Lemma~\ref{lem:optimistic-md} and summing over all $t \in \{0,1,\ldots, T-1\}$ while telescoping terms yields:
\begin{align*}
\sum_{t=0}^{T-1} (1 - \nicefrac{\epsilon}{2}) & \cdot F(\hat{\vx}_t) - \ip{s,\hat{\vx}_t} \\ & \leq T \cdot \mathrm{OPT}_{F,\vs} +  \frac{1}{2\eta}\norm{\vx_{0} - \vx^*}^2_{\mR}
\end{align*}
We now consider the left-hand side and right-hand side of this equation separately. We relate the left-hand side to the objective value at the output point $\vx^{\textrm{out}}_T$. By Jensen's inequality and  the quadratic homogeneity of $F$, we have:
\begin{align*}
\sum_{t=0}^{T-1} (1-\nicefrac{\epsilon}{2}) F(\hat{\vx}_t) - \ip{\vs, \hat{\vx_t}} \geq\\ T \cdot \left((1-\nicefrac{\epsilon}{2}) F(\bar{\vx}_T) - \ip{\vs, \bar{\vx}_t}\right)=\\
\frac{T}{1-\nicefrac{\epsilon}{2}} \cdot \left(F(\vx^{\textrm{out}}_T) - \ip{\vs, \vx^{\textrm{out}}_T}\right) 
\end{align*}

The last term on the righthand side can be easily bounded by the lower inequality in Equation~\ref{eq:poincare} and choice of $\vx_0$ in Algorithm~\ref{alg:opt}:
$$
\frac{1}{2}\norm{\vx_{0} - \vx^*}^2_{\mR} = \frac{1}{2}\norm{ \vx^*}^2_{\mR} \leq \frac{1}{\ell_{\mR} } F(\vx^*) = \frac{|\mathrm{OPT}_{F,\vs}|}{\ell_{\mR}}
$$
Combining our bounds for the left- and right-hand sides, multiplying by $(1-\nicefrac{\epsilon}{2})T^{-1}$, and recalling that $\mathrm{OPT}_{F,\vs} \leq 0$, we derive the  bound:
\begin{align*}
\left(F(\vx^{\textrm{out}}_T) - \ip{\vs, \vx^{\textrm{out}}_T}\right)  \leq (1-\nicefrac{\epsilon}{2})\left( \mathrm{OPT}_{F,\vs} + \frac{|\mathrm{OPT}_{F,\vs}|}{\eta \ell_{\mR} T}\right) \\
\leq \mathrm{OPT}_{F,\vs} + \left(\frac{\epsilon}{2} +\frac{u_{\mR}}{\ell_{\mR}} \cdot \frac{2}{\epsilon T} \right) \cdot |\mathrm{OPT}_{F,\vs}|
\end{align*}
Finally, setting $T \geq \frac{u_{\mR}}{\ell_{\mR}} \cdot \frac{4}{\epsilon^2}$ yields the required multiplicative bound completing the proof of Theorem~\ref{thm:optimization}.

\newpage
\section{Empirical evaluation of hypergraph resolvent algorithm}\label{sec:empirical-comparison}
In this section, we provide a preliminary empirical evaluation of the performance of our algorithm for hypergraph resolvent computation, as stated in Theorem~\ref{thm:main1}. We focus in particular on the task of computing hypergraph PageRank vectors, which has the most practical relevance at this time and has already been studied by~\citet{takai2020hypergraph} and~\citet{liu2021strongly}.    

\subsection{Qualitative Comparison of Graph and Hypergraph PageRank}

We begin by showing a visualization of graph and hypergraph PageRank (with the standard hypergraph potential $\cU_\infty$) to compare their behavior on an easy-to-visualize random geometric graph in Figure~\ref{fig:geometric_KNN}. This is a good setting for comparison as there is a natural way to define both a graph and a hypergraph from the point cloud, by either connecting each vertex $v$ to the $k$-nearest neighbors or including those $k$-neighbors in a hyperedge together with $v.$  

The most salient feature on display here is the fact that the hypergraph PageRank tends to mix faster within clusters than the graph PageRank. This is expected in general as it is possible to prove that, in this case, hypergraph spectral gap is at least as large as the graph spectral gap~\cite{olesker-taylorGeometricBoundsFastest2021}. We believe that this property may be useful in practical clustering tasks, as already evidenced in the manifold learning experiments in Section~\ref{sec:experiments}.

\begin{figure}[ht]
    \begin{subfigure}[b]{0.45\textwidth}
        \centering
        \includegraphics[width=\textwidth]{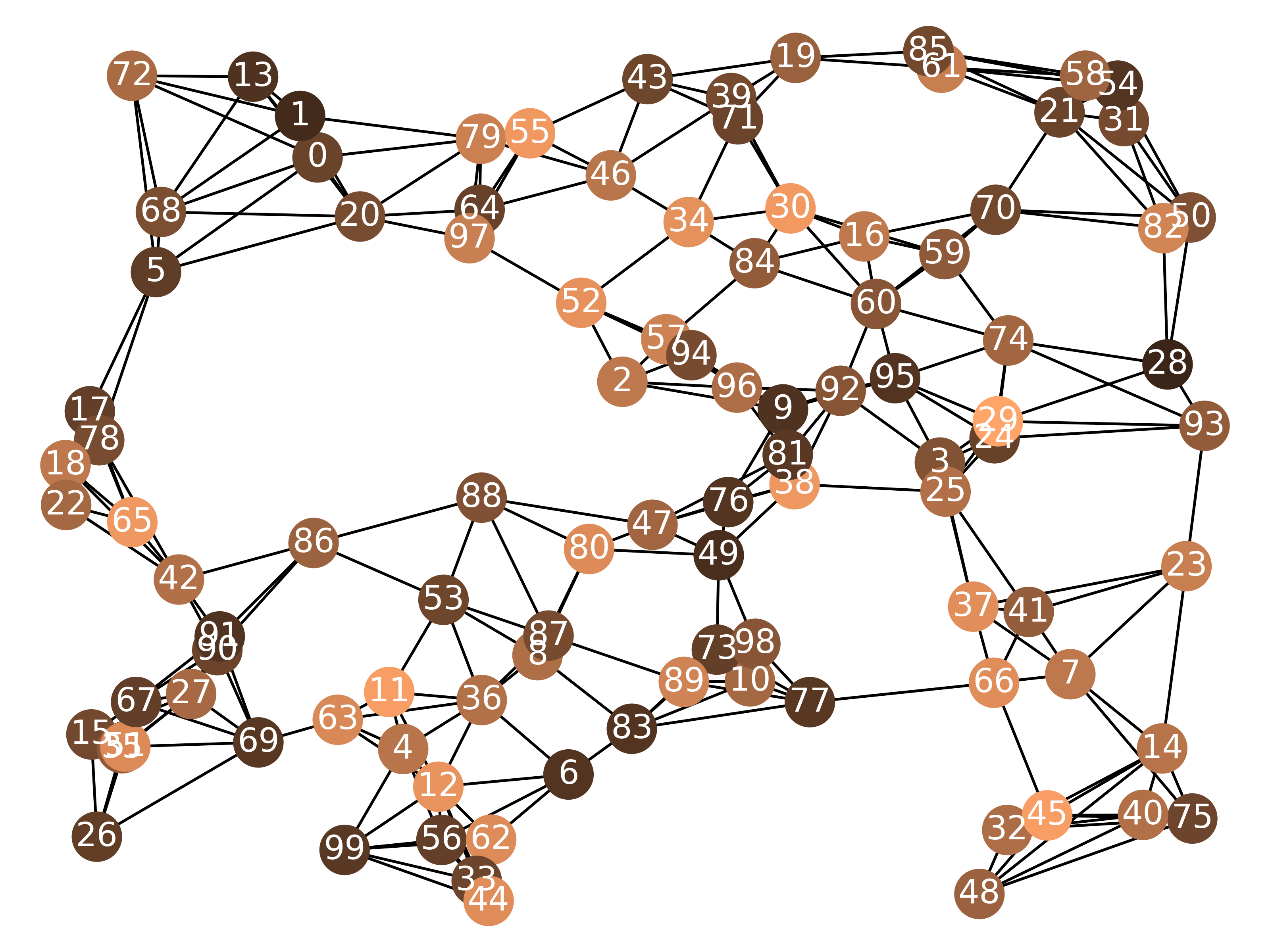}
        \caption{Graph PageRank}
        \label{fig:geometric_graph_PR}
    \end{subfigure}
    \hfill
    \begin{subfigure}[b]{0.45\textwidth}
        \centering
        \includegraphics[width=\textwidth]{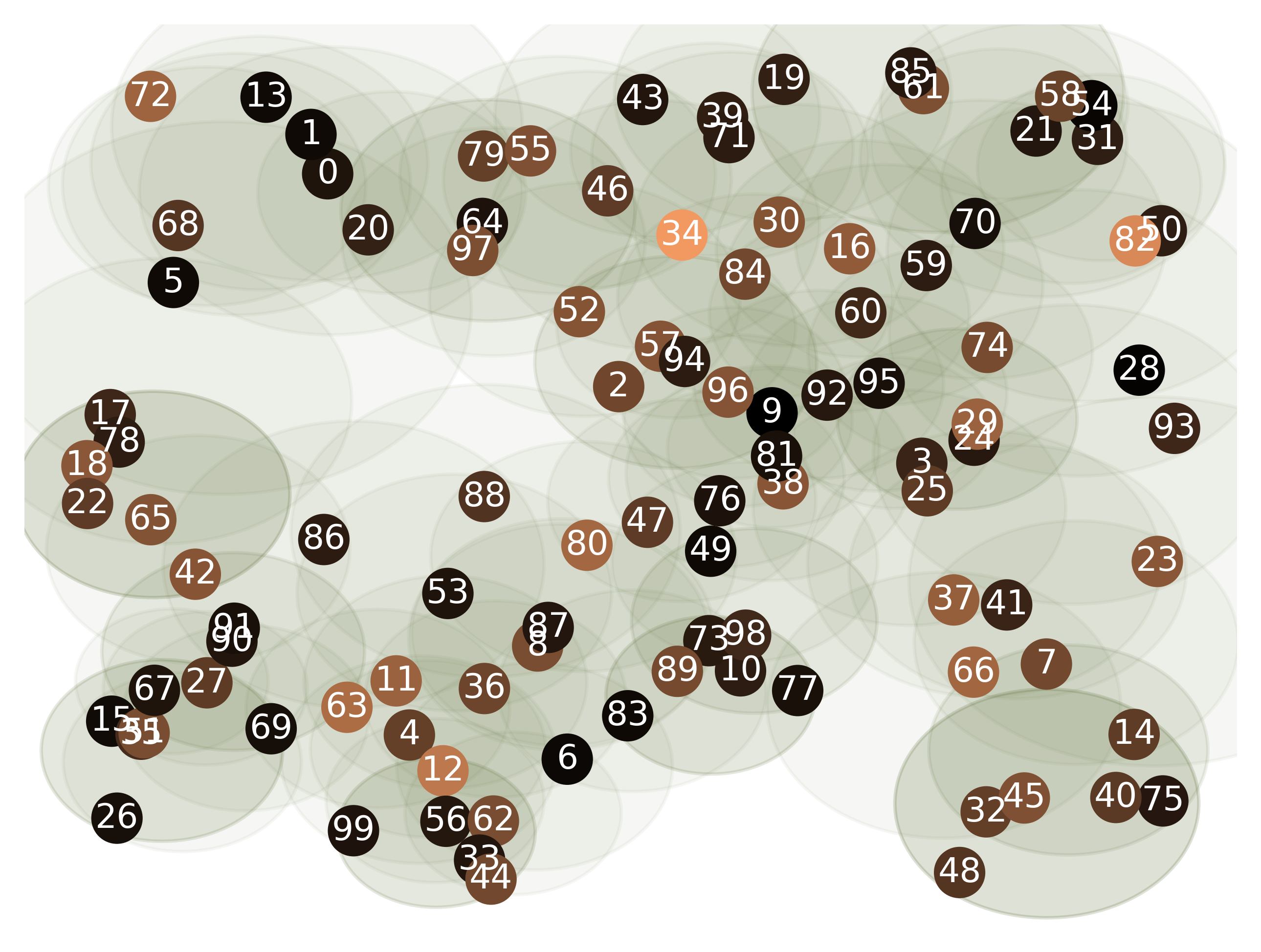}
        \caption{Hypergraph PageRank}
        \label{fig:geometric_hypergraph_PR}
    \end{subfigure}\\
    \begin{subfigure}[b]{0.45\textwidth}
        \centering
        \includegraphics[width=\textwidth]{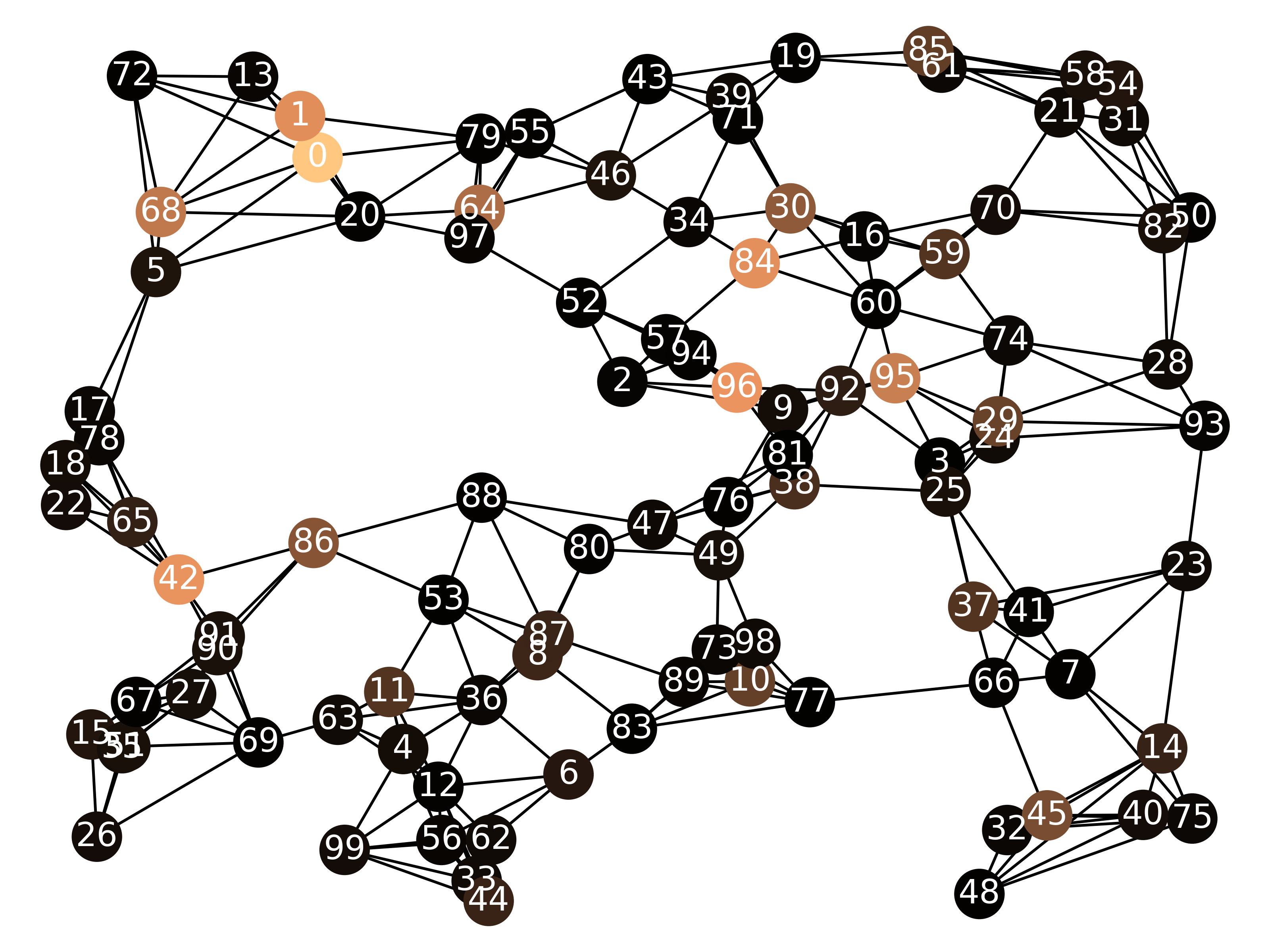}
        \caption{Graph Pesonalized Pagerank}
        \label{fig:geometric_graph_PPR}
    \end{subfigure}
    \hfill
    \begin{subfigure}[b]{0.45\textwidth}
        \centering
        \includegraphics[width=\textwidth]{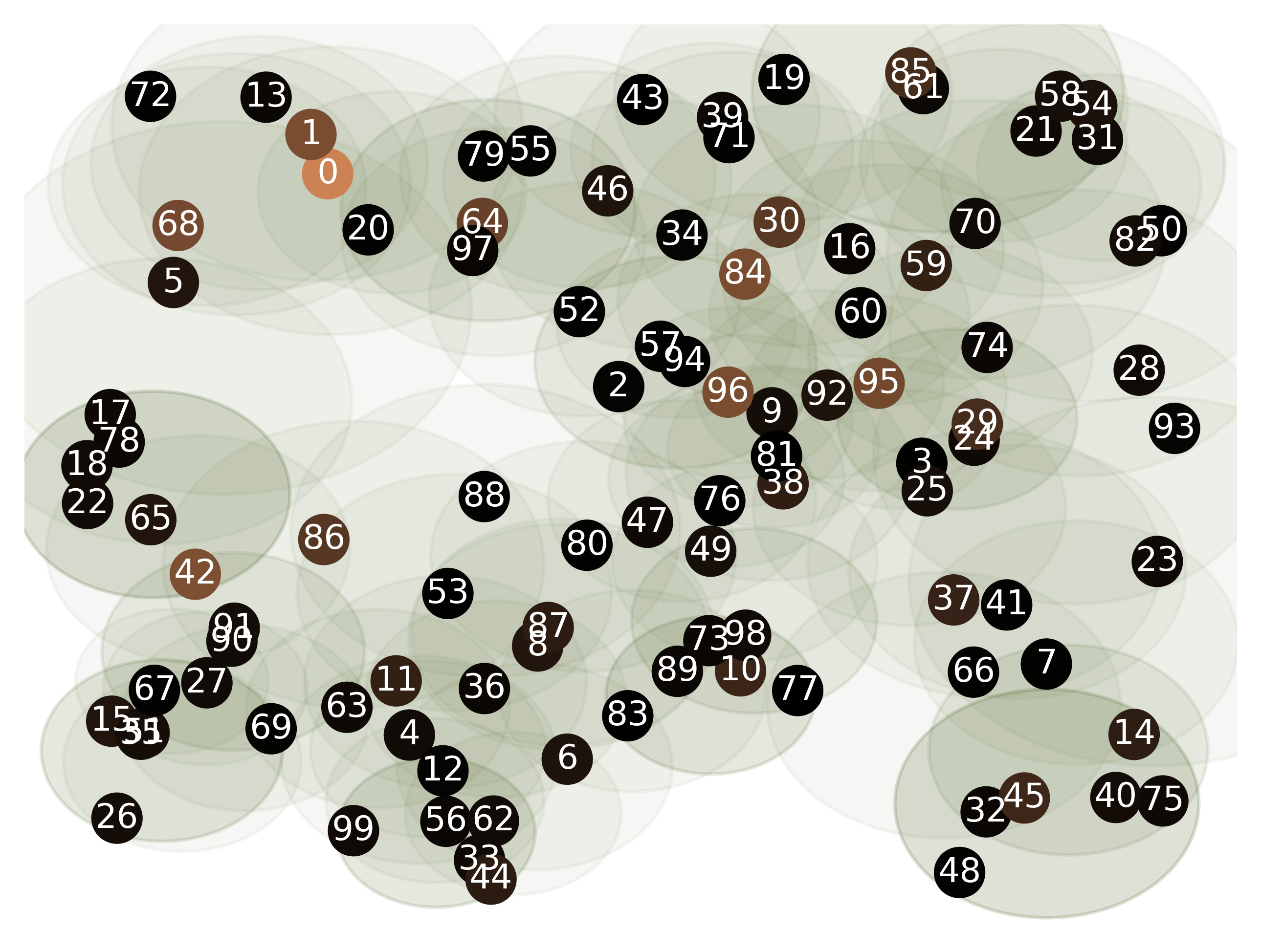}
        \caption{Hypergraph Personalized PageRank}
        \label{fig:geometric_hypergraph_PPR}
    \end{subfigure}
    \caption{ A sample of $n=100$ points uniformly distributed over the unit square. In (\ref{fig:geometric_graph_PR}) and (\ref{fig:geometric_graph_PPR}), we consider the $5$-nearest-neighbor graph on this point cloud, while in (\ref{fig:geometric_hypergraph_PR}) and (\ref{fig:geometric_hypergraph_PPR}) we build a hypergraph by adding a hyperedge for each vertex $v$ containing $v$ and the $v$'s $5$-nearest neighbors.
    Subfigures ((\ref{fig:geometric_graph_PR}) and (\ref{fig:geometric_hypergraph_PR}) show a heat map of the result of computing PageRank with $\lambda = 0.5$ on a fixed random seed.
    Similarly, subfigures (\ref{fig:geometric_graph_PPR}) and (\ref{fig:geometric_hypergraph_PPR}) show the heat map of PageRank $\lambda=0.01$ seeded at vertex $0.$}
    \label{fig:geometric_KNN}
\end{figure}

\subsection{Empirical Evaluation}

We used a variety of real datasets to practically evaluate our methods against existing work. They are presented in 
Table~\ref{tab:read_datasets}.\href{http://konect.cc/networks/}{KONECT} were originally bipartite, unweighted graphs. In order to turn them into hypergraphs, we kept the left-hand-side nodes and replaced each of the right-hand-side nodes with a hyperedge over all its neighbors. If multiple hyperedges are over the same nodes, the corresponding weight is increased. In the Fauci Email dataset, nodes are people and hyperedges are email chains connecting those people. Emails with more than 25 participants have been excluded as those tend to be announcements that do not show an actual connection between people and distort the local structure that we are trying to recover. Only senders and recipients are included and not CCs. DBLP is a co-authorship network in Computer Science  from the largest conferences in the areas of Theory, Data Management - Databases, Data Mining , Vision, Machine Learning and Networks.

\begin{table}[ht]
    \centering
    \begin{tabular}{lcrrrr}
    \hline
        {\bf Name} & {\bf Type} & {\bf |V|} & {\bf |E|} & $\mathbf{\sum\limits_{v \in V} \frac{\deg_v}{|V|}}$ & $\mathbf{\sum\limits_{e \in E} \frac{|e|}{|E|}}$ \\
        \hline
        \href{http://konect.cc/networks/dimacs10-netscience/}{Network Theory}~\cite{kunegis2013konect,newman2006finding} & Co-authorship & 213 & 282 & 3.99 & 2.81 \\
        Fauci Email~\cite{Benson-2021-fauci-emails,Leopold-2021-fauci-emails} & Email Network & 932 & 1283 & 4.00 & 2.91 \\
        \href{http://konect.cc/networks/opsahl-collaboration/}{Arxiv Cond Mat}~\cite{kunegis2013konect,newman2001structure} & Co-authorship & 13861 & 13571 & 3.87 & 3.05 \\
        \href{http://konect.cc/networks/dbpedia-writer/}{DBPedia Writers}~\cite{kunegis2013konect,auer2007dbpedia} & Co-writing credits & 54909 & 18232 & 1.80 & 5.29 \\
        \href{https://projects.csail.mit.edu/dnd/DBLP/}{DBLP} & Co-authorship & 83114 & 74989 & 3.20 & 3.55 \\
        \href{http://konect.cc/networks/youtube-groupmemberships/}{Youtube}~\cite{kunegis2013konect,mislove2009online} & Group Memberships & 88490 & 21974 & 3.24 & 12.91 \\
        \href{http://konect.cc/networks/komarix-citeseer/}{Citeseer}~\cite{kunegis2013konect} & Co-authorship & 93957 & 110371 & 5.29 & 2.99 \\
        \href{http://konect.cc/networks/dbpedia-recordlabel/}{DBPedia Labels}~\cite{kunegis2013konect,auer2007dbpedia} & Artist/Labels & 158385 & 9827 & 1.40 & 22.51 \\
        \href{http://konect.cc/networks/dbpedia-genre/}{DBPedia Genre}~\cite{kunegis2013konect,auer2007dbpedia} & Artists/Genres & 253968 & 4934 & 1.80 & 92.84 \\
        \hline
    \end{tabular}
    \caption{Hypergraph datasets: for each hypergraph, we give name, type, number of vertices, number of hyperedges, average degree, average size of hyperedge.}
    \label{tab:read_datasets}
      \vspace{-.8cm}
\end{table}

\subsection{Results}

We compare the performance of our algorithm with that of the heuristics proposed by~\citet{takaiHypergraphClusteringBased2020} for the task of approximately optimizing Problem~\ref{eq:resolvent} in the local graph partitioning setting, i.e., when the PageRank seed $s$ consists of a single vertex. We do not compare with the work of~\citet{liu2021strongly} as they focus on the slightly different strongly local graph partitioning setting, where an addition $\ell_1$-norm regularizer is added to the optimization in Problem~\ref{eq:resolvent} to enforce sparsity.

The heuristics proposed by~\citet{takai2020hypergraph} is based on discretizing the continuous-time gradient flow for Problem~\ref{eq:resolvent} by a simple forward Euler integrator. A theoretically sound implementation of this idea will in general lead to extremely small step sizes, because of the non-smooth nature of hypergraph potentials, such as $\cU_{\infty}$. \citet{takai2020hypergraph} largely ignore this issue and take uniform steps, regardless of the non-smoothness of the objective. 
The resulting heuristics is remarkably similar to the instantiation of Algorithm~\ref{alg:opt} in Theorem~\ref{thm:main1}, except that their heuristics does not specify a theoretically justified choice of step size $\eta$ and always outputs the last iterate $\hat{\vx}_T$, rather than the average iterate $\bar{\vx}_T$. Given the unspecified step size in~\cite{takai2020hypergraph}, we perform a fair comparison by evaluating the two algorithms for the same choice of step size $\eta$ as in Algorithm~\ref{alg:opt}, so that the only difference is in whether the output solution is the last iterate or the average iterate. Even with this advantageous choice for our competitor, our results show that our algorithm matches or outperforms that of~\citet{takai2020hypergraph} on all but one datasets in terms of the minimization of $\cU_\infty$ for a fixed number of iterations. On a small dataset, Network Theory, the value achieved by the heuristics of~\citet{takai2020hypergraph} tends to be near zero, leading to a large multiplicative improvement for our algorithm.
The results and setup for this comparison are described in Table~\ref{tab:takai_ratio}.

\begin{table}[ht] 
    \centering
    \begin{tabular}{c|r}
        {\bf Name}            & {\bf Median Improvement in objective value (\%)} \\
        \hline
        Network Theory        & +162.99\% \\
        Fauci Email           &   -1.75\% \\
        Arxiv Cond Mat        &   -1.60\% \\
        DBPedia Writers       &  +13.18\% \\
        DBLP                  &  +26.37\% \\
        Youtube               &  +20.98\% \\
        Citeseer              &  +25.62\% \\
        DBPedia Labels        &  +15.05\% \\
        DBPedia Genre         &  -27.52\%
    \end{tabular}
    \caption{ For each hypergraph, we consider Problem~\ref{eq:resolvent} with the $\cU_\infty$ potential for $\lambda = 0.12$ and $20$ different seed vectors $\vs$, of the form $\vs = \ve_v - \pi(\ve_v)$, where $v$ is a randomly chosen vertex and $\ve_v$ is the standard unit vector associated with $v$. We report the median multiplicative improvement in objective value of our algorithm over ~\citet{takai2020hypergraph} across the $20$ seeds.}
    \label{tab:takai_ratio}
      \vspace{-.8cm}
\end{table}

As the algorithm of~\cite{takai2020hypergraph} is very similar to Algorithm~\ref{alg:opt} in terms of computational cost, we develop a different strategy to assess the running time of our algorithm in the hypergraph setting and compare it to the running time of its closest graph analogue. On the hypergraph side, for each dataset, we consider the running time of $T=100$ iterations of our algorithm on Problem~\ref{eq:resolvent} with the standard hypergraph potential $\cU_\infty$. On the graph side, we consider the running time of computing the truncation to the first $100$ terms of the geometric series of Equation~\ref{eq:expansion} on a graph obtained by replacing each hyperedge $h$ with a complete graph over $h.$ This latter task is equivalent to Problem~\ref{eq:resolvent} where the hypergraph potential is given by the Laplacian potential of the clique expansion~\cite{takai2020hypergraph} of the hypergraph, i.e., the sum of complete graphs over the hyperedges. The results of our comparison are presented in Table~\ref{tab:timings}.

We expect the graph algorithm to run much faster than the hypergraph algorithm for two reasons. Firstly, because of the smoothness and strong convexity of the graph potential, the graph algorithm will take a much smaller number of iterations to achieve the same approximation guarantees. To highlight this property, our iterations included a stopping condition, terminating the algorithm early when a certain error threshold was met. None of the hypergraph algorithm executions triggered the condition before $T=100$ iterations, while all the graph executions did between iteration $18$ and iteration $40$, as shown in Table~\ref{tab:timings}. 

Secondly, in contrast with the hypergraph case, the computation of the action of $\mL$ in the graph case can be vectorized, leading to much faster iterations. This effect can be visualized in the last column of Table~\ref{tab:timings}, which shows the time per iteration of both algorithms for each dataset. An interesting finding here is that the vectorized iterations on the clique-expansion are approximately $10$x faster, regardless of the properties of the hypergraph.

\begin{table}[ht]
    \centering
    \begin{tabular}{c|rrr|rrr}
        {\bf Name} & \multicolumn{3}{c}{{\bf Standard Hypergraph Potential}} & \multicolumn{3}{c}{{\bf Clique Expansion}} \\
        & T & Sec & Sec/iter & T & Sec & Sec/iter \\
        \hline
        Network Theory        & 100 &     3.45 &   0.034 &  38 &     0.19 &   0.005 \\
        Fauci Email           & 100 &    17.24 &   0.171 &  35 &     0.49 &   0.014 \\
        Arxiv Cond Mat        & 100 &   177.84 &   1.761 &  39 &     6.46 &   0.161 \\
        DBPedia Writers       & 100 &   303.90 &   3.009 &  36 &     9.48 &   0.256 \\
        DBLP                  & 100 &  1021.68 &  10.116 &  32 &    21.23 &   0.643 \\
        Youtube               & 100 &   395.33 &   3.914 &  40 &    14.16 &   0.345 \\
        Citeseer              & 100 &  1303.39 &  12.905 &  32 &    43.05 &   1.305 \\
        DBPedia Labels        & 100 &   216.67 &   2.145 &  22 &     6.86 &   0.298 \\
        DBPedia Genre         & 100 &   232.77 &   2.305 &  18 &    13.36 &   0.703
    \end{tabular}
    \caption{For each hypergraph and its clique-expansion, we run 20 instances of our Algorithm~\ref{alg:opt} and the series expansion of Equation~\ref{eq:expansion} for 100 iterations. Each instance has a different vector $s$ of the form $\vs = \ve_v - \pi(\ve_v)$, where $v$ is a randomly chosen vertex and $\ve_v$ is the standard unit vector associated with $v$. 
    The value of $\lambda$ was set to 0.12. We employed the alternative stopping criterion $\|x_t - x_{t-1}\|_D^2 < 10^{-6}$ to cause early termination of our algorithm. Termination was triggered only when all 20 instances met this condition. We reported the {\it total time} and {\it total time averaged over iteration} for running all the separate 20 instances. }
    \label{tab:timings}
      \vspace{-.4cm}
\end{table}

All experiments were run on a server with a 24core Intel Xeon Silver 4116 CPU @ 2.10GHz processor and 128gb RAM. All code (except for certain NumPy operations) is single threaded and can be found in our supplementary material together with the datasets and insteuctions of how to repeat all experiments and produce all tables.

\paragraph{A Conjectured Heuristics} We consider a variant of Algorithm~\ref{alg:opt} where  $\mR$ is the graph Laplacian of the clique expansion of the instance hypergraph. We conjecture that this variation will achieve a theoretical worst-case running time that is independent of $\lambda_G$, at the cost of a potential dependence of the maximum hyperedge rank. We ran a small number of experiments on this variant, showing improvements in both accuracy and running time. We defer a full discussion of this heuristics and its experimental performace to a full version of this paper.

\newpage
\section{Details of manifold learning experiments}\label{ssec:manifold-learning-details}

In this section we give precise descriptions of each of the experiments discussed in Section~\ref{sec:experiments}. Code for performing hypergraph diffusions and implementing each of these experiments will be made publicly available upon publication. 

\subsection{Manifold learning}\label{ssec:manifold-learning-details}

In this experiment, we compared the performance of graph- versus hypergraph-based diffusions at recovering community structure in a semi-supervised setting. We consider semi-supervised problems of the following form: consider a problem where two ``communities,'' represented as compact manifolds in $\R^n$, are present and datapoints are sampled noisily from both manifolds. Given access to a small number of known community labels, can we estimate the community identity of all datapoints and distinguish the two underlying communities? We let $N$ denote the total number of sampled datapoints in these problems, and without loss of generality assume labels have the form $\{\pm 1\}^N$.

We considered three different specific instances of this problem, in which the underlying communities were (a) interlocking spirals in $\R^2$, (b) overlapping rings in $\R^2$, and (c) concentric hypersphers in $\R^5$. For each experiment, 300 datapoints were sampled uniformly at random from each community and then subjected to additive noise. Precise parametrizations of these problems and descriptions of the additive noise are given in Table~\ref{tab:manifold-learning-communities}. 

\begin{table}[ht]
\caption{Community structure and additive noise used to create problem instances for experiments in Section~\ref{sec:experiments}. For each experiment, 300 points were sampled uniformly at random from each community and then subjected to additive noise of the described magnitude.}\label{tab:manifold-learning-communities}
\begin{center}
\begin{small}
\begin{sc}
\begin{tabular}{lcccr}
\toprule
Problem instance & Community 1 & Community 2 & Additive noise \\
\midrule
Two-spirals    & 
    $\left\{\begin{bmatrix}3\theta\cos(\theta) \\
    3\theta\sin(\theta)\end{bmatrix} \mid \theta \in [\pi/2,3\pi] \right\}$
    &
    $\left\{\begin{bmatrix}-3\theta\cos(\theta) \\
    -3\theta\sin(\theta)\end{bmatrix} \mid \theta \in [\pi/2,3\pi] \right\}$
    & $\cN(0,1) $ \\[0.5cm]
\begin{tabular}[x]{@{}c@{}}Overlapping\\rings\end{tabular}   & 
    $\left\{\begin{bmatrix}2\cos(\theta) \\
    2\sin(\theta)\end{bmatrix} \mid \theta \in [0,2\pi]\right\}$
    &
    $\left\{\begin{bmatrix}3\cos(\theta) + 3\\
    3\sin(\theta)\end{bmatrix}\mid \theta \in [0,2\pi] \right\}$
    & $\cN(0,0.2) $ \\[0.5cm]
\begin{tabular}[x]{@{}c@{}}Concentric\\hyperspheres\end{tabular}   & 
    $\left\{ \vx \in \R^5 \mid \norm{\vx}_2 = 2\right\}$
    &
    $\left\{ \vx \in \R^5 \mid \norm{\vx}_2 \leq 1.3 \right\}$
    & $\cN(0,0.1) $ \\
\bottomrule
\end{tabular}
\end{sc}
\end{small}
\end{center}
\vskip -0.1in
\end{table}

\paragraph{Defining the diffusion.} We use diffusions on graphs and hypergraphs respectively to propagate the information from the known labels and estimate community identities for each node. The graphs and hypergraphs were constructed in the following manner: the 5 nearest-neighbors of each datapoint with respect to Euclidean distance were identified. This data was used to construct the 5-nearest-neighbor (5-NN) graph in which nodes correspond to datapoints and an edge is added between a datapoint and each of its 5 nearest-neighbors, and the 5-NN hypergraph in which nodes correspond to datapoints and hyperedges containing a node and all of its 5 nearest-neighbors are added.

Without loss of generality, we assign label $+1$ to datapoints sampled from community 1, and $-1$ to datapoints sampled from community 2, and denote the vector of true labels by $\vy$. As we consider the semi-supervised setting, each trial receives a small number ($5\%$ of total datapoints) of true community labels. For these experiments, 30 true labels are randomly sampled from the 600 total datapoints; note that, as a result, true labels are not necessarily distributed equally among both communities, but only uniformly sampled over the whole population. Given this set of known labels, $S$, a vector $\vx_0$ is constructed in the following manner:
\[
    \vx_0(i) = \begin{cases}
        1 \text{ if } i\in S \text{ and } \vy(i) = +1\\
        -1 \text{ if } i\in S \text{ and } \vy(i) = -1\\
        0 \text{ if } i\not\in S
\end{cases}
\]

We then use diffusions on the 5-NN graph and hypergraph to propagate information from the small random set of known labels: we initialize a discrete-time diffusion at the $\vx_0$ constructed as above and consider the sequence generated by the update rule in Equation~\ref{eq:0-s-discrete-update-rule}:
\[
    \{\vx_t | \vx_t = \vx_{t-1} -\eta \mD^{-1}\cL^\mD(\vx_{t-1})\}
\]
and the corresponding sequence generated by the graph Laplacian:
\[
    \{\vx_t | \vx_t = \vx_{t-1} -\eta\mD^{-1}\mL\vx_{t-1})\}
\]
where $\mL$ is the combinatorial graph Laplacian. In both diffusions on graphs and hypergraphs, step-size $\eta$ was chosen to be 1 in all experiments. 

\paragraph{AUC values.} For a time step $t\in \mathbb{N}$, we generate area-under-the-curve values by using $\vx_t$ to generate scores for estimated community labels in the following manner: we linearly shift and rescale entries of $\vx_t$ so that the minimum entry is 0.0 and the maximum entry is 1.0. We then use the resulting vector as the label scores, where are larger score respresents a higher prediction of belonging to community 2. We then compute the AUC value of these scores with respect to the true labels $\vy$ described above.

\paragraph{Sweep cut generation.} For a time step $t\in \mathbb{N}$, we generate estimated labels by taking a sweep-cut of $\vx_t$ with respect to a threshold $\tau$:
\[
    \vell^{\tau}_t \defeq (\vx_t > \tau)
\]
The intuition for selecting $\tau$ is the following: we want to produce a vector with low potential ($\cU(\vell)$ in the hypergraph, $\left(\vell^{\tau}_t\right)^T\mL \vell^{\tau}_t$ in the graph) that is also nearly-orthogonal to $\vone$ with respect to $\langle\cdot,\cdot\rangle_{\mD}$. In other words, we wish to find  $\vell^{\tau}_t$ with low potential that also incurs small (normalized) orthogonality penalty $\rho^\tau_t$, defined
\[
    \rho^\tau_t \defeq \langle \vell^\tau_t, \vone\rangle_{\mD}/n
\]
To do this, we perform the following procedure: first, we consider $\rho^0_t$, the penalty incurred by the sweep cut of $\vx_t$ with respect to threshold $\tau = 0$. We then consider the set of thresholds $\{\tau_i\}$ that incur orthogonality penalty similar to that of $\tau = 0$, specifically such that $\rho^{\tau_i} \leq 1.1\rho^0$. We then select
\[
    \tau \defeq \argmin_{\{\tau_i\}} \cU(\vell^{\tau_i}_t)
\]
for the hypergraph, and
\[
    \tau \defeq \argmin_{\{\tau_i\}} \left(\vell^{\tau_i}_t\right)^T\mL \vell^{\tau_i}_t
\]
for the graph.

The classification error on iteration $t$ is computed as
\[
    \textrm{error}_t \defeq \sum_{i=1}^N \mathbf{1}_{\vell_t(i) = \vy(i)}
\]
where $\mathbf{1}_{\vell_t = \vy(i)}$ is the indicator function.

\section{Code Repository}

All code, datasets and relevant instructions can be found at the following URL: \url{https://uchicago.box.com/s/qc1vxcsn52ali76kqfzn1wrybbgwiqxd}.

\end{document}